\newcommand{\doublearrow}{\mathrel{\substack{\longrightarrow\\[-0.6ex]
                      \longrightarrow}}}
\numberwithin{equation}{section}
\newtheoremstyle{styl1}
  {13pt}
  {13pt}
  {}
  {}
  {\itshape}
  {.}
  {.5em}
  {}
\theoremstyle{styl1}
\newtheorem{definition}[equation]{Definition}
\newtheorem{example}[equation]{Example}
\newtheorem{remark}[equation]{Remark}
\newtheorem{lemma}[equation]{Lemma}
\newtheorem{theorem}[equation]{Theorem}
\newtheorem{proposition}[equation]{Proposition}
\newcommand*{\tarrow}[2][]{\arrow[Rrightarrow, #1]{#2}\arrow[dash, shorten >= 0.5pt, #1]{#2}}
\newcommand{\R}{\mathbb{R}}
\newcommand{\C}{\mathbb{C}}
\newcommand{\Z}{\mathbb{Z}}
\newcommand{\N}{\mathbb{N}}
\newcommand{\Ca}{\mathcal{C}}
\newcommand{\Fa}{\mathcal{F}}
\newcommand{\Ga}{\mathcal{G}}
\newcommand{\Ha}{\mathcal{H}}
\newcommand{\Ka}{\mathcal{K}}
\newcommand{\Aa}{\mathcal{A}}
\newcommand{\Ja}{\mathcal{J}}
\newcommand{\Fscr}{\mathscr{F}}
\newcommand{\Bscr}{\mathscr{B}}
\newcommand{\Vscr}{\mathscr{V}}
\newcommand{\Gscr}{\mathscr{G}}
\newcommand{\sft}{{\mathsf{t}}}
\newcommand{\sfs}{{\mathsf{s}}}
\newcommand{\sfB}{{\mathsf{B}}}
\newcommand{\sfT}{{\mathsf{T}}}
\newcommand{\End}{\mathsf{End}}
\newcommand{\Hom}{\mathsf{Hom}}
\newcommand{\id}{\text{id}}
\newcommand{\Sym}{{\mathsf{Sym}}}
\newcommand{\fvs}{{\mathsf{Vect}_\C}}
\newcommand{\D}{\slashed{D}}
\newcommand{\Tvs}{{\mathsf{2Vect}_\C}}
\newcommand{\Cob}{{\mathsf{Cob}}}
\newcommand{\CobF}{{{\mathsf{Cob}}^{\mathscr{F}}_{d,d-1,d-2}}}
\newcommand{\Hilb}{{\mathsf{Hilb}_\C}}
\newcommand{\e}{\,\mathrm{e}\,}
\newcommand{\ii}{\,\mathrm{i}\,}
\newcommand{\diff}{\mathrm{d}}
\newcommand{\mbf}{\boldsymbol}
\begin{document}

\begin{flushright}
\small
EMPG--17--09
\end{flushright}

\vspace{10mm}

\begin{center}
	\textbf{\LARGE{Extended quantum field theory, index theory\\[4pt] and the parity anomaly}}\\
	\vspace{1cm}
	{\large Lukas Müller} \ \ and \ \ {\large Richard J. Szabo}

\vspace{5mm}

{\em Department of Mathematics\\
Heriot-Watt University\\
Colin Maclaurin Building, Riccarton, Edinburgh EH14 4AS, U.K.}\\
and {\em Maxwell Institute for Mathematical Sciences, Edinburgh, U.K.}\\
and {\em The Higgs Centre for Theoretical Physics, Edinburgh, U.K.}\\
Email: \ {\tt lm78@hw.ac.uk \ , \ R.J.Szabo@hw.ac.uk}

\vspace{2cm}

\end{center}

\begin{abstract}
\noindent
We use techniques from functorial quantum field theory to provide a
geometric description of the parity anomaly in fermionic systems
coupled to background gauge and gravitational fields on
odd-dimensional spacetimes. We give an explicit construction of a
geometric cobordism bicategory which incorporates general background
fields in a stack, and together with the theory of symmetric monoidal
bicategories we use it to provide the concrete forms of invertible extended quantum
field theories which capture anomalies in both the path integral and
Hamiltonian frameworks. Specialising this situation by using the
extension of the Atiyah-Patodi-Singer index theorem to manifolds
with corners due to Loya and Melrose, we obtain a new Hamiltonian
perspective on the parity anomaly. We compute explicitly the 2-cocycle
of the projective representation of the gauge symmetry on the quantum
state space, which is defined in a parity-symmetric way by suitably
augmenting the standard chiral fermionic Fock spaces with Lagrangian
subspaces of zero modes of the Dirac Hamiltonian that naturally appear
in the index theorem. We describe the significance of our
constructions for the bulk-boundary correspondence in a large class of
time-reversal invariant gauge-gravity symmetry-protected topological
phases of quantum matter with gapless charged boundary fermions,
including the standard topological insulator in $3+1$ dimensions.
\end{abstract}

\newpage

\tableofcontents

\newpage

\section{Introduction and summary}

The parity anomaly in field theories of fermions coupled to 
gauge and gravitational backgrounds in odd dimensions was discovered over
30 years ago~\cite{Redlich,Niemi,AlvarezGaume}, and has found renewed
interest recently because of its relevance to certain topological
states of quantum matter~\cite{WittenFermionicPathInt,SeibergWitten}. The
purpose of this paper is to re-examine the parity anomaly from the
perspective of functorial quantum field theory, and in particular to
elucidate its appearence in the Hamiltonian framework (see e.g.~\cite{Chang:1986ri}) which has been
largely unexplored. We begin with some preliminary
physics background to help motivate the problem we study. 

\subsection{Anomalies and symmetry-protected topological phases\label{sec:introSPT}}

In recent years considerable progress has been made in condensed
matter physics towards understanding the distinct possible quantum
phases of matter with an energy gap through their universal
long-wavelength properties, and the ensuing interplay between global
symmetries and topological degrees of freedom. In many
instances the effective low-energy (long-range) continuum theory of a
lattice Hamiltonian model can be formulated as a relativistic field theory, which for a gapped phase can be usually reduced to a topological quantum field theory that describes the ground states and their response to external sources; such a gapped phase is known as a `topological phase' of matter. The classic example of this is the integer quantum Hall state and its effective description as a three-dimensional Chern-Simons gauge theory.

A gapped phase $\Psi$ is `short-range entangled'~\cite{Chen:2010gda},
or `invertible'~\cite{FreedAnomalies,FreedHopkins}, if there exists a
gapped phase $\Psi^{-1}$ such that\footnote{Here `$\otimes$' denotes a
`stacking' operation combining gapped phases together which turns them into a
commutative monoid with identity element the trivial phase. The
short-range entangled phases form the abelian group of units in this
monoid. In the corresponding topological quantum field theory, it is
the local tensor product induced by that on the state space.} $\Psi\otimes\Psi^{-1}$ can be deformed to a trivial product
state by an adiabatic transformation of the Hamiltonian without
closing the bulk energy gap. The macroscopic properties of such gapped
states are described by particular kinds of topological quantum field
theories which are also called `invertible' with respect to the tensor
product of vector spaces~\cite{FreedAnomalies}; they have the property that their Hilbert space of quantum states is one-dimensional and all propagators are invertible, in contrast to most quantum field theories. The correspondence between invertible topological field theories and short-range entangled phases of matter is discussed in~\cite{FreedHopkins}.

Some gapped systems are non-trivial as a consequence of intrinsic
topological order or of protection by
a global symmetry group $G$. A short-range entangled state is
`$G$-symmetry-protected' if it can be deformed to a trivial product
state by a $G$-noninvariant adiabatic
transformation~\cite{Chen:2010gda,Chen:2011pg}. The gapped bulk system
is then characterised by gapless boundary states, such as the chiral
quantum Hall edge states, which exhibit gauge or gravitational
anomalies; conversely, a $d-1$-dimensional system whose ground state topological
order is anomalous can only exist as the boundary of a $d$-dimensional
topological phase. While the boundary quantum field theory on its own
suffers from anomalies, the symmetry-protected boundary states are described by considering the anomalous theory `relative' to the higher-dimensional bulk theory, where it becomes a non-anomalous quantum field theory under the `bulk-boundary correspondence'~\cite{Ryu:2010ah,Ryu:2012he} in which the boundary states undertake anomaly inflow from the bulk field theory~\cite{Callan:1984sa,Faddeev:1986pc}. The standard examples are provided by topological insulators which are protected by fermion number conservation and time-reversal symmetry ($G=U(1)\times\mathbb{Z}_2$)~\cite{Hasan:2010xy,Qi:2011zya}. The correspondence between topological field theories and symmetry-protected topological phases of matter is discussed in e.g.~\cite{Wen:2013oza,Kapustin:2015uma,Gaiotto:2015zta,WittenFermionicPathInt}.

In the situations just described, a field theory with anomaly is 
well-defined as a theory living on the boundary of a quantum field
theory in one higher dimension which is invertible. The modern
perspective on quantum field theories is that they should not be
simply considered on a fixed spacetime manifold, but rather on a
collection of manifolds which gives powerful constraints on their
physical quantities; in
applications to condensed matter systems involving fermions alone the
manifolds in question should be spin$^c$
manifolds~\cite{SeibergWitten}. For example, an anomaly which arises due to the introduction of a gauge-noninvariant regularization may be resolved on any given spacetime manifold, but it may not be possible to do this in a way consistent with the natural cutting and gluing constructions of manifolds. This is, for example, the case for the global anomaly at the boundary of a $3+1$-dimensional topological superconductor~\cite{WittenFermionicPathInt}.

The purpose of the present paper is to cast the topological field
theory formulation of the parity anomaly and the $3+1$-dimensional
topological insulator, which was outlined
in~\cite{WittenFermionicPathInt}, into a more general and rigorous
mathematical framework using the language of functorial quantum field
theory, and to interpret the path integral description of the parity anomaly, together
with its cancellation via the bulk-boundary correspondence, in a
Hamiltonian framework more natural for condensed matter physics
applications. Before outlining precisely what we do, let us first
informally review some of the main mathematical background.

\subsection{Anomalies in functorial quantum field theory\label{sec:AFQFT}}

A natural framework for making the constructions discussed in
Section~\ref{sec:introSPT} mathematically precise is through
functorial field theories. The idea is that a $d$-dimensional quantum
field theory should assign to a $d$-dimensional manifold $M$ a complex
number $Z(M)$, its partition function. Heuristically, this number is
given by a path integral of an action functional over the space of
dynamical field configurations on $M$; thus far there is no mathematically
well-defined theory of such path integration in general. Functorial
quantum field theory is an axiomatic approach to quantum field theory
which formalises the properties expected from path integrals. A
quantum field theory should not only assign complex numbers to
$d$-dimensional manifolds, but also a Hilbert space of quantum states
$Z(\Sigma)$ to every $d-1$-dimensional manifold $\Sigma$. Moreover,
the theory should assign a time-evolution operator (propagator)
$Z(\Sigma\times[t_0,t_1])$ to a cylinder over $\Sigma$, satisfying
$Z(\Sigma\times[t_1,t_2])\circ Z(\Sigma\times[t_0,t_1])=
Z(\Sigma\times[t_0,t_2])$. More generally, we assign an operator
$Z(M):Z(\Sigma_-)\to Z(\Sigma_+)$ to every manifold $M$ with a
decomposition of its boundary $\partial M=\Sigma_-\sqcup\Sigma_+$
satisfying an analogous composition law under gluing.

To make this precise, one generalises Atiyah's definition of topological quantum field theories~\cite{Atiyah1988} and Segal's definition of two-dimensional conformal field theories~\cite{Segal1988} to define a functorial quantum field theory as a symmetric monoidal functor\footnote{This definition also allows for non-unitary or reflection positive, its Euclidean analogue, theories. 
See~\cite{FreedHopkins} for an implementation of reflection positivity, which would be the relevant concept for the theories considered in this paper. However, for simplicity we will not consider reflection positivity in this paper.}
\begin{align}
\label{EQ: FunktorQFT}
\mathcal{Z} \colon {\mathsf{Cob}}^\mathscr{F}_{ d, d-1} \longrightarrow \Hilb \ ,
\end{align}
where ${\mathsf{Cob}}^\mathscr{F}_{ d, d-1}$ is a category modelling
physical spacetimes, and $\Hilb$ is the category of complex Hilbert
spaces and linear maps. Loosely speaking, the category
${\mathsf{Cob}}^\mathscr{F}_{ d, d-1}$ contains compact
$d-1$-dimensional manifolds as objects, $d$-dimensional cobordisms as
morphisms, and a further class of morphisms corresponding to
diffeomorphisms compatible with the background fields $\mathscr{F}$ which represent the physical symmetries of the theory.

If $\mathcal{Z}$ is an invertible field theory, described as a functor
$\mathcal{Z} \colon {\mathsf{Cob}}^\mathscr{F}_{ d, d-1}\rightarrow
\Hilb$, then the partition function $Z$ of a $d-1$-dimensional field
theory with anomaly $\mathcal{Z}$ evaluated on a $d-1$-dimensional
manifold $M$ takes values in the one-dimensional vector space
$\mathcal{Z}(M)$, instead of $\C$. We can pick a non-canonical
isomorphism $\mathcal{Z}(M)\cong \C$ to identify the partition
function with a complex number. Furthermore, the group of symmetries
acts (non-trivially) on $\mathcal{Z}(M)$ describing the breaking of
symmetries in the quantum field theory; see Section~\ref{sec:catQFT} for details.
 
To extend this description to the Hamiltonian formalism incorporating the quantum state spaces of a field theory $A$, we instead seek a functor $\mathcal{A}$ that assigns linear categories to $d-2$-dimensional manifolds $\Sigma$ such that the state space $A(\Sigma)$ is an object of $\mathcal{A}(\Sigma)$. In other words, $\mathcal{A}$ should be an extended quantum field theory, i.e. a symmetric monoidal 2-functor 
\[
\mathcal{A} \colon \mathsf{Cob}^{\mathscr{F}}_{d,d-1,d-2}\longrightarrow \Tvs
\]    
appropriately categorifying \eqref{EQ: FunktorQFT}. There are different possible higher replacements of the category of Hilbert spaces, but for simplicity we restrict ourselves to Kapranov-Voevodsky 2-vector spaces~\cite{KV94} in this paper, ignoring the Hilbert space structure altogether.

In an analogous way to the partition function, we want to be able to
identify the state space of a quantum field theory with anomaly in a
non-canonical way with a vector space, i.e. there should be an equivalence of categories $\mathcal{A}(\Sigma)\cong \mathsf{Vect}_{\mathbb{C}}$. 
We enforce the existence of such an equivalence by requiring that $\mathcal{A}$ is an invertible extended field theory, i.e. it is invertible with respect to the Deligne tensor product. We can then define a quantum field theory with anomaly in a precise manner as a natural symmetric monoidal 2-transformation 
\[
A\colon \mbf1\Longrightarrow \mathsf{tr}\mathcal{A}
\]   
between a trivial field theory $\mbf 1$ and a certain truncation of
$\mathcal{A}$; this definition is detailed in
Section~\ref{sec:anomaliesextended}.  In this formalism one can in
principle compute the 2-cocycles of the projective representation of
the gauge group characterising the anomalous action on the quantum
states~\cite{MonnierHamiltionianAnomalies}. This description of anomalies in
terms of relative field theories~\cite{RelativeQFT} is closely related
to the twisted quantum field theories
of~\cite{Stolz:2011zj,Johnson-Freyd:2017ykw}, the difference being
that the twist they use does not have to come from a full field
theory.

The extended quantum field theories for some physically relevant anomalies are more or less explicitly known. Some noteworthy examples are:
\begin{itemize}
\item
Dai-Freed theories describing chiral anomalies in odd dimensions $d$ have been sketched in~\cite{MonnierHamiltionianAnomalies}. They are an extended version of the field theories constructed in \cite{DaiFreed}.
\item
Wess-Zumino theories describing the anomaly in self-dual field theories have been constructed in~\cite{MonnierHamiltionianAnomalies}.
\item
The theory describing the anomaly of the worldvolume theory of
M5-branes has been constructed as an unextended quantum field theory in~\cite{MonnierMAnomalies}.
\item
The anomaly field theory corresponding to supersymmetric quantum
mechanics is described in~\cite{FreedAnomalies}.  
\end{itemize}
This paper adds a further example to this list by giving a precise
construction of an extended quantum field theory in any even dimension
$d$ which encodes the parity anomaly in odd spacetime dimension. We
shall now give an overview of our constructions and findings.

\subsection{Summary of results}
One of the technical difficulties related to extended functorial field theories
is the construction of the higher cobordism category equipped with additional structure.
For cobordisms with tangential structure there exist an $(\infty,n)$-categorical definition~\cite{Lurie2009a,CalaqueScheimbauer}. A categorical version with arbitrary background fields
taking into account families of manifolds has been defined by Stolz and Teichner~\cite{Stolz:2011zj}.
A bicategory of cobordisms equipped with elements of topological stacks is constructed in \cite{schommer2011classification}.

One of the main technical accomplishments of this paper is the explicit
construction of a geometric cobordism bicategory
$\mathsf{Cob}^{\mathscr{F}}_{d,d-1,d-2}$ which includes arbitrary
background fields in the form of a general stack\footnote{In~\cite{RelativeQFT} the more general situation of simplicial sheaves is considered.} $\mathscr{F}$
(Section~\ref{Appendix Geometric bicategories}); although this is only
a slight generalisation of previous constructions, it is still
technically quite complicated, and its explicit form makes all of our
statements precise. Building on this bicategory, we then use
the theory of symmetric monoidal bicategories
following~\cite{schommer2011classification} and the ideas
of~\cite{MonnierHamiltionianAnomalies} to work out the concrete
form of the anomalous quantum field theories sketched in
Section~\ref{sec:AFQFT}; this is described in
Section~\ref{sec:anomaliesextended} and is the first detailed
description of anomalies in extended quantum field theory using the
framework of symmetric monoidal bicategories, as far as we are
aware. The relation to projective representations of the gauge group
in~\cite{MonnierHamiltionianAnomalies}, and its extension to projective groupoid
representations following~\cite{BoundaryConditionsTQFT}, is explained in
Section~\ref{sec:projreps}.

The central part of this paper is concerned with the construction of a
concrete example of this general formalism describing the parity
anomaly in odd spacetime dimensions. As the parity anomaly is related
to an index in one higher
dimension~\cite{Niemi,AlvarezGaume,WittenFermionicPathInt}, this
suggests that quantum field theories with parity anomaly should take
values in an extended field theory constructed from index theory; this
naturally fits in with the classification of topological insulators
and superconductors using index theory and K-theory,
see~\cite{Ertem:2017lni} for a recent exposition of this. We
build such a theory using the index theory for manifolds with corners
developed in~\cite{LoyaMelrose,Loyaindex}, which extends the well-known
Atiyah-Patodi-Singer index theorem~\cite{APS} to manifolds with
corners of codimension~$2$. Our construction produces an extended
quantum field theory $\mathcal{A}_{\rm parity}^\zeta$ depending on a
complex parameter $\zeta\in\C^\times$ in any even spacetime dimension
$d$; for $\zeta=-1$ this theory describes the parity anomaly in odd
spacetime dimension. The details are
contained in Section~\ref{sec:EQFTparity}. 

To exemplify how our constructions fit into the
usual treatments of the parity anomaly from the path integral
perspective, we first consider in
Section~\ref{Categorical case} the simpler construction of an ordinary
(unextended) quantum field theory $\mathcal{Z}_{\rm parity}^\zeta$ based on a geometric cobordism
category $\mathsf{Cob}^{\mathscr{F}}_{d,d-1}$ and the usual
Atiyah-Patodi-Singer index theorem for even-dimensional manifolds with boundary. We show that the definition of the
partition function $Z_{\rm parity}^\zeta$ as a natural symmetric
monoidal transformation implies that the complex number $Z_{\rm
  parity}^\zeta(M)$ transforms under a gauge transformation $\phi:M\to
M$ by multiplication with a 1-cocycle of the gauge group given by $\zeta$ to a power determined by the index
of the Dirac operator on the corresponding mapping cylinder
$\mathfrak{M}(\phi)$. This is precisely the same gauge anomaly at
$\zeta=-1$ that arises from the spectral flow of edge states under
adiabatic evolution signalling the presence
of the global parity
anomaly~\cite{Redlich,AlvarezGaume,WittenFermionicPathInt}, which is
a result of the sign ambiguity in the definition of the fermion
path integral in odd spacetime dimension. We further illustrate how
the bulk-boundary correspondence in this
case~\cite{WittenFermionicPathInt} is captured by the full quantum
field theory~$\mathcal{Z}^\zeta_{\rm parity}$.

A key feature of the Hamiltonian formalism defined by our construction
of the extended quantum field theory $\mathcal{A}_{\rm parity}^\zeta$
is that the index of a Dirac operator on a manifold with corners
depends on the choice of a unitary self-adjoint chirality-odd
endomorphism of the kernel of the induced Dirac operator on all
corners, whose positive eigenspace defines a Lagrangian subspace of
the kernel with respect to its natural symplectic structure. We assemble
all possible choices into a linear category $\Aa_{\rm
  parity}^\zeta(\Sigma)$ assigned to
$d-2$-dimensional manifolds $\Sigma$ by $\Aa^\zeta_{\rm parity}$. The
index theorem for manifolds with corners splits into a sum of a bulk integral over
the Atiyah-Singer curvature form and boundary contributions depending on
the endomorphisms. We use these boundary
terms to define the theory $\Aa_{\rm parity}^\zeta$ on 1-morphisms,
i.e. on $d-1$-dimensional manifolds $M$; the general idea is to use
categorical limits to treat all possible boundary conditions at the same
time. The index theorem then induces a natural transformation between
linear functors, defining the theory $\Aa_{\rm parity}^\zeta$ on
2-morphisms, i.e. on $d$-dimensional manifolds.

A crucial ingredient
in the construction of the invertible extended field theory $\Aa_{\rm
  parity}^\zeta$ in Section~\ref{Section: Extending A} is a natural linear map
$$
\Phi_{T_0,T_1}(M_0,M_1):\Aa_{\rm parity}^\zeta(M_1)\circ \Aa_{\rm
  parity}^\zeta(M_0)(T_0) \longrightarrow \Aa_{\rm
  parity}^\zeta(M_1\circ M_0)(T_0)
$$
for every pair of 1-morphisms $M_0:\Sigma_0\to\Sigma_1$ and
$M_1:\Sigma_1\to\Sigma_2$ with corresponding endomorphisms $T_i$ on
the corner manifolds $\Sigma_i$; it forms the components of a natural
linear isomorphism $\Phi$ which is associative. A lot of information
about the parity anomaly is contained in this map: The
construction of $\Aa_{\rm parity}^\zeta$ allows us to fix endomorphisms
for concrete calculations and still have a theory which is independent of these
choices. Viewing a field
theory with parity anomaly as a theory $A_{\rm parity}^\zeta$ relative
to $\Aa_{\rm parity}^\zeta$ in the sense explained before, we then get a
vector space of quantum states $A_{\rm parity}^\zeta(\Sigma)$ for
every $d-2$-dimensional manifold $\Sigma$; the group of gauge
transformations $\mathsf{Sym}(\Sigma)$ only acts projectively on this
space. Denoting this projective representation by $\rho$, for any pair
of gauge transformations $\phi_1,\phi_2:\Sigma\to\Sigma$ one finds
$$
\rho(\phi_2)\circ\rho(\phi_1)=\Phi_{T_1,T_2}\big(\mathfrak{M}(\phi_1),
\mathfrak{M}(\phi_2)\big) \ \rho(\phi_2\circ\phi_1) \ ,
$$
where $\mathfrak{M}(\phi_i)$ is the mapping cylinder of $\phi_i$. Using
results of~\cite{RelationEtaInvariants,LeschWojciechowski}, we can
calculate the corresponding 2-cocycle $\alpha_{\phi_1,\phi_2}$
appearing in the conventional Hamiltonian description of anomalies~\cite{Faddeev:1984jp,Faddeev:1985iz,Mickelsson:1983xi} in
terms of the action of gauge transformations on Lagrangian subspaces
of the kernel of the Dirac Hamiltonian on $\Sigma$; the explicit
expression can be found in~\eqref{eq:2cocycleexplicit}. 

This explicit description of the projective representation of the
gauge group due to the parity anomaly is new. The only earlier Hamiltonian description of
the global parity anomaly that we are aware of
is the argument of~\cite{Chang:1986ri} for the case of fermions
coupled to a particular 
background $SU(2)$ gauge field in $2+1$-dimensions. There
the second quantized Fock space is constructed in the usual way from
the polarisation of the first quantized Hilbert space into spaces
spanned by the eigenspinors of the one-particle Dirac Hamiltonian on
$\Sigma$ with positive and negative energies. When the Dirac
Hamiltonian has zero modes, a sign ambiguity arises in the
identification of Fock spaces with a constant space, which is a
result of a spectral flow; whence the fermion Fock space only carries
a representation of a $\mathbb{Z}_2$-central extension of the gauge group. In
our general approach, we are able to give a more in-depth description
which also lends a physical interpretation to the Lagrangian
subspaces occuring in the index theorem: While the standard Atiyah-Patodi-Singer index theorem is
crucial for computing the parity anomaly in the
path integral and its cancellation via the bulk-boundary
correspondence, the extra boundary terms that appear in the index
theorem on manifolds with corners enable the
definition of a second quantized Fock space
of the quantum field theory at $\zeta=-1$ which is compatible with
parity symmetry by suitably extending the
standard polarisation by Lagrangian subspaces of the kernel of the
Dirac Hamiltonian on $\Sigma$. As before, the sign
ambiguities arise from the definition of $A_{\rm parity}^{(-1)}$
as a natural symmetric monoidal 2-transformation, and the gauge anomaly computed
by the spectral flow is now completely encoded in the 2-cocycles
$\alpha_{\phi_1,\phi_2}$, which cancel between the bulk and boundary
theories by a mechanism similar to that of the partition function; for
details see Section~\ref{sec:projparity}. 

Finally, although the present paper deals exclusively with systems of
Dirac fermions and the parity anomaly in odd-dimensional spacetimes,
the method we develop for the concrete construction of our extended
field theory can be used in other contexts to build invertible
extended field theories from invariants of manifolds with corners. For
example, our techniques could be applied to primitive homotopy quantum
field theories, and to
Dai-Freed theories which are related to $\eta$-invariants; such a formalism would be based on
the Dai-Freed theorem~\cite{DaiFreed} rather than the Atiyah-Patodi-Singer index
theorem and would enable constructions with chiral or Majorana fermions and
unoriented manifolds, which are applicable to other symmetry-protected
states of quantum matter such as topological
superconductors~\cite{WittenFermionicPathInt} as well as to anomalies in M-theory~\cite{Witten:2016cio}. Another application involves
repeating our constructions with Dirac operators replaced by signature
operators, which would lead to an extended quantum field theory
describing anomalies in Reshetikhin-Turaev theories based on modular
tensor categories~\cite{turaev2010quantum}; these theories should also have
applications to anomalies in M-theory along the lines considered
in~\cite{Sati}.

\subsection{Outline}

The outline of this paper is as follows.

In Section~\ref{Categorical case}, as a warm-up we construct the theory
$\mathcal{A}_{\rm parity}^\zeta$ as a quantum field theory
$\mathcal{Z}_{\rm parity}^\zeta$, i.e. as a symmetric monoidal
functor, and describe how it captures the parity anomaly at the level
of path integrals. Following~\cite{WittenFermionicPathInt}, we provide
some explicit examples related to the standard topological insulator
in $3+1$-dimensions and other topological phases of matter.

In Section~\ref{sec:generalanomalies} we present the general description of anomalies in the
framework of functorial quantum field theory using symmetric monoidal
bicategories. In particular, in Section~\ref{Appendix Geometric
  bicategories} we introduce the geometric cobordism bicategory
${\mathsf{Cob}}^\mathscr{F}_{ d, d-1,d-2}$ with arbitrary background
fields $\mathscr{F}$ in quite some detail. 

The heart of this paper is Section~\ref{sec:EQFTparity} where we
explicitly build an extended quantum field theory describing the
parity anomaly using index theory on manifolds with corners. In
particular, in Section~\ref{sec:projparity} we
give the first detailed account of the parity anomaly in the
Hamiltonian framework, which further elucidates the physical
meaning of some technical ingredients that go into the index
theorem.

Two appendices at the end of the paper include some additional
technical background. In Appendix~\ref{Appendix corners} we collect
some facts about manifolds with corners and b-geometry which are
essential for this paper. In Appendix~\ref{Appendix bicategories} we
review definitions connected to symmetric monoidal bicategories,
mostly in order to fix notation and conventions. 

\subsection{Notation}

Here we summarise our notation and conventions for the convenience of the reader.

Throughout this paper we use the notation $M^{d,i}$ for manifolds with
corners, where $d\in \N$ is the dimension of $M^{d,i}$ and $i$ is the
codimension of its corners; for closed manifolds we abbreviate
$M^{d,0}$ by $M^d$. 

The smooth sections of a vector bundle $E$ over a manifold $M$ are
denoted by $\Gamma (E)$. The (twisted) Dirac operator on a manifold
$M$ equipped with a spin structure and a principal bundle with connection is denoted by $\D_M$. We denote the corresponding twisted spinor bundle by $S_M$. 

By $\mathscr{F}$ we always denote a stack on manifolds of a fixed dimension.
 
We use calligraphic letters
$\mathscr{C},\mathscr{G},\mathscr{B},\dots$ to denote generic
categories, groupoids and bicategories. Strict bicategories are called 2-categories.  We denote by
$\mathrm{Obj}(\mathscr{C})$ the class of objects of
$\mathscr{C}$. (1-)morphisms and 2-morphisms are denoted by
$\rightarrow$ and $\Rightarrow$, respectively; natural transformations
are 2-morphisms in the 2-category of small categories. Modifications, which are 3-morphisms in the 3-category of bicategories, are represented by $\Rrightarrow $. We denote by
$\mathrm{Hom}_{\mathscr{C}}$ the set of morphisms in a small category
$\mathscr{C}$, and by $\mathsf{Hom}_{\mathscr{B}}$ the category of 1-morphisms in a bicategory
$\mathscr{B}$. We write
$\mathsf{s},\mathsf{t}$ for the maps from (1-)morphisms to their source
and target objects, respectively. The symbol $\circ$ denotes
composition of (1-)morphisms and vertical composition of 2-morphisms,
while $\bullet$ denotes horizontal composition of 2-morphisms.

Functors and 2-functors are denoted by calligraphic letters $\mathcal{F},\mathcal{G},\dots$.

For concrete categories and bicategories we use sans serif letters.
We will frequently encounter the following categories and bicategories:
\begin{itemize}
\item $\fvs$: \ The symmetric monoidal category of finite-dimensional
  complex vector spaces.
\item $\Tvs$: \ The symmetric monoidal bicategory of 2-vector spaces
  (see Example~\ref{Def:2Vect}).
\item $\Hilb$: \ The symmetric monoidal category of complex Hilbert spaces.
\item $\mathsf{Grpd}$: \ The 2-category of small groupoids.
\item $\Cob^{\mathscr{F}}_{d,d-1}$: \ The symmetric monoidal category of
  $d$-dimensional geometric cobordisms with background fields
  $\mathscr{F}$ (see Section~\ref{sec:catQFT}).
\item $\CobF$: \ The symmetric monoidal bicategory of $d$-dimensional
  geometric cobordisms with background fields $\mathscr{F}$ (see
  Section~\ref{Appendix Geometric bicategories}).
\end{itemize}

\section{Quantum field theory and the parity anomaly}\label{Categorical case}

The parity anomaly appears in certain field theories with time-reversal (or space-reflection) symmetry involving fermions coupled to gauge fields and gravity in spacetimes of odd dimension $2n-1$ if, after quantisation, there is no consistent way to make the path integral real. The phase ambiguity appears in a controlled manner and can be understood by regarding the original quantum field theory as living on the boundary of another quantum field theory defined in $d=2n $ dimensions with the same global symmetry in the bulk: We say that the anomalous field theory in $d-1$ dimensions takes values in a non-anomolous quantum field theory in $d$ dimensions, since the phase
ambiguity of the boundary theory is cancelled by the phase of the bulk system. 
As a warmup, in this section we explain a simple functorial perspective that captures the relation between the parity anomaly in $2n-1$ dimensions and (unextended) quantum field theories in $d=2n$ dimensions, based on the index theory of the Dirac operators which feature in field theories with Dirac fermions. This categorical formalism captures the anomaly only at the level of partition functions and ignores the action on the Hilbert space of quantum states; the latter will be incorporated in subsequent sections by extending the underlying source and target categories to bicategories.

\subsection{Geometric cobordisms and quantum field theories}\label{sec:catQFT}

We begin by explaining the formalism of functorial quantum field theories that we shall use in this paper, and the relation between invertible field theories and anomalies. Let us fix a gauge group $G$ with Lie algebra $\mathfrak{g}$, and a finite-dimensional unitary representation $\rho_G \colon G \rightarrow \text{Aut}(V)$ of $G$. 
We define a geometric source category ${\mathsf{Cob}}^\mathscr{F}_{ d, d-1}$ whose objects are
closed smooth $d-1$-dimensional manifolds $M^{d-1}$ equipped with a Riemannian metric $g_{M^{d-1}}$, an orientation, a spin structure and a principal $G$-bundle $\pi_{M^{d-1}} \colon P_{M^{d-1}} \rightarrow M^{d-1}$ with connection $A_{M^{d-1}} \in \Gamma ({T}^\ast P_{M^{d-1}} \otimes \mathfrak{g})$; this  specifies the background field content $\mathscr{F}$ and we call the resulting objects `$\mathscr{F}$-manifolds'. 
We think of an object $M^{d-1}$ as sitting in the germ of $d$-dimensional manifolds of
the form $M^{d-1}\times (-\epsilon,\epsilon)$ with all structures extended as products.

A diffeomorphism of $\mathscr{F}$-manifolds $M^{d-1}_1$ and $M^{d-1}_2$, or an `$\Fscr$-diffeomorphism', consists of an orientation-preserving smooth isometry $\phi \colon M^{d-1}_1 \rightarrow M^{d-1}_2 $ of the underlying manifolds, together with bundle isomorphisms from the spinor bundle and principal $G$-bundle on $M^{d-1}_1$ to the pullbacks along $\phi$ of the corresponding bundles over $M^{d-1}_2$ that preserves the Levi-Civita connection on the spinor bundles, and the connections $A_{M^{d-1}_1}$ and $A_{M^{d-1}_2}$. 

There are then two types of morphisms in ${\mathsf{Cob}}^\mathscr{F}_{ d, d-1}$. The first type of morphisms are given by equivalence classes of compact $d$-dimensional manifolds $M^{d,1}$ endowed with $\mathscr{F}$-fields up to $\mathscr{F}$-diffeomorphisms preserving collars, a decomposition of their boundary $ \partial M^{d,1} = \partial_- M^{d,1} \sqcup \partial_+ M^{d,1}$, and collars $M_\pm^{d,1}$ near the boundary components $\partial_\pm M^{d,1}$ such that the $\mathscr{F}$-fields are of product structure on $M_\pm^{d,1}$. Such a manifold $M^{d,1}$ describes a morphism from $M^{d-1}_1 $ to $M^{d-1}_2 $ if it comes with diffeomorphisms of $\mathscr{F}$-manifolds $\varphi_- \colon M^{d-1}_1 \times [0,\epsilon_1) \rightarrow M_-^{d,1} $ and $\varphi_+ \colon M^{d-1}_2 \times (-\epsilon_2,0] \rightarrow M_+^{d,1} $ for fixed real numbers $\epsilon_i>0$. We refer to these geometric cobordisms as `regular morphisms'.
Composition is defined by gluing along boundaries using the collars
and their trivialisations $\varphi_\pm$ as described in
Appendix~\ref{Section: Gluing} (see Figure~\ref{Fig:Sketch
  Composition}); note that the smooth structure on glued manifolds
depends on the choice of collars~\cite[Section~1.3]{kock}. This
composition is strictly associative. 

\begin{figure}[hbt]
\begin{overpic}[width=15.5cm,
scale=0.3]{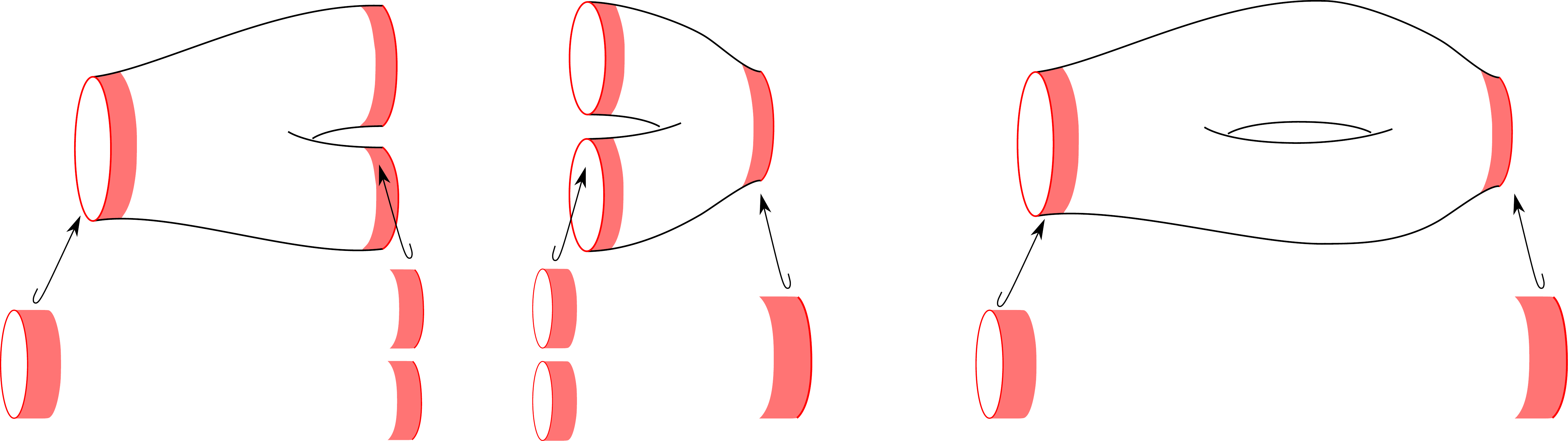}
\put(13,22){\scriptsize{$M^{d,1}_1$}}
\put(42,22){\scriptsize$M^{d,1}_2$}
\put(5,11){\scriptsize$\varphi_{1-}$}
\put(26,15){\scriptsize$\varphi_{1+}$}
\put(33,15){\scriptsize$\varphi_{2-}$}
\put(51,12){\scriptsize$\varphi_{2+}$}
\put(67,11){\scriptsize$\varphi_{1-}$}
\put(99,12){\scriptsize$\varphi_{2+}$}
\put(78,22){\scriptsize$M^{d,1}_2  \circ M^{d,1}_1$}
\put(0,-2){\scriptsize$M^{d-1}_1 \hspace{-0.14cm} \times\hspace{-0.08cm} [0,\epsilon_1)$}
\put(17,-2){\scriptsize$M^{d-1}_2 \hspace{-0.14cm} \times\hspace{-0.08cm} (-\epsilon_2,0]$}
\put(31,-2){\scriptsize$M^{d-1}_2 \hspace{-0.14cm} \times\hspace{-0.08cm} [0,\epsilon_3)$}
\put(45,-2){\scriptsize$M^{d-1}_3 \hspace{-0.14cm} \times\hspace{-0.08cm} (-\epsilon_4,0]$}
\put(60,-2){\scriptsize$M^{d-1}_1 \hspace{-0.14cm} \times\hspace{-0.08cm} [0,\epsilon_1)$}
\put(90,-2){\scriptsize$M^{d-1}_3 \hspace{-0.14cm} \times\hspace{-0.08cm} (-\epsilon_4,0]$}
\end{overpic}
\bigskip
\caption{\small Illustration of two composable regular morphisms in ${\mathsf{Cob}}^\mathscr{F}_{ d, d-1}$ (on the left) and their composition (on the right).}
\label{Fig:Sketch Composition}
\end{figure}

The second type of morphisms are diffeomorphisms of
$\mathscr{F}$-manifolds $\phi \colon M^{d-1}_1 \rightarrow M^{d-1}_2$;
they incorporate symmetries of the background fields including
internal symmetries such as gauge
symmetries. We may regard these morphisms as zero length limits of mapping cylinders, and hence we refer to them as `limit morphisms'.
Composition of limit morphisms is given by concatenation of $\mathscr{F}$-diffeomorphisms. The composition of a limit morphism $\phi \colon  M^{d-1}_1 \rightarrow M^{d-1}_2  $ with a regular morphism $M^{d,1}$ is given by precomposition with the extension of $\phi $ to $\phi' \colon M^{d-1}_1 \times [0,\epsilon_1) \rightarrow M^{d-1}_2 \times [0,\epsilon_1) $; this composition affects only the map $\varphi_-$. The composition of a regular morphism with a limit morphism is defined in a similar way, affecting only $\varphi_+$. 

The disjoint union of manifolds makes ${\mathsf{Cob}}^\mathscr{F}_{ d, d-1}$ into a symmetric monoidal category with monoidal unit $1$ given by the empty manifold $\emptyset$.

\begin{definition}
\label{Definition: Field theory}
A \underline{$d$-dimensional functorial quantum field theory} (or \underline{quantum field theory} for short) with background fields $\mathscr{F}$ is a symmetric monoidal functor
$$
\mathcal{Z} \colon {\mathsf{Cob}}^\mathscr{F}_{ d, d-1} \longrightarrow \Hilb \ ,
$$ 
where $\Hilb$ is the symmetric monoidal category of complex Hilbert spaces and linear maps under tensor product.
\end{definition}
\begin{remark}
It is unclear to us to which extent physical examples of quantum field theories
fit into this framework. See~\cite{Segal} for a discussion of this definition. 
However, it is enough to capture the field theories 
related to anomalies, which suffice for this paper.   
\end{remark}
The simplest example of a quantum field theory is the trivial theory
$\mbf1 \colon {\mathsf{Cob}}^\mathscr{F}_{ d, d-1} \rightarrow \Hilb$
sending every object to the monoidal unit $\C$ in $\Hilb$ and every
morphism to the identity map on $\C$. Given two quantum field theories
$\mathcal{Z}_1 \colon {\mathsf{Cob}}^\mathscr{F}_{ d, d-1} \rightarrow
\Hilb$ and $\mathcal{Z}_2 \colon {\mathsf{Cob}}^\mathscr{F}_{ d, d-1}
\rightarrow \Hilb$, their tensor product $\mathcal{Z}_1 \otimes
\mathcal{Z}_2$ can be defined locally. We ignore technical subtleties
related to the tensor product of infinite-dimensional Hilbert spaces
by tacitly assuming that all vector spaces are finite-dimensional;
see~\cite{MonnierHamiltionianAnomalies} for a discussion of how this
assumption fits in with the infinite-dimensional state spaces
that typically appear in quantum field theory. Using the tensor product one can define the class of quantum field theories relevant for the description of anomalies.
\begin{definition}
A quantum field theory $\mathcal{Z}$ is \underline{invertible} if there exists a quantum field theory $\mathcal{Z}^{-1}$ and a natural symmetric monoidal isomorphism from $\mathcal{Z}\otimes \mathcal{Z}^{-1}$ to $\mbf1$.
\end{definition} 

The modern perspective on anomalous field theories in $d-1$ dimensions
is that they are ``valued'' in invertible quantum field theories in
$d$ dimensions \cite{FreedAnomalies}. We can understand this point of view at the level of the partition function. This requires restriction to the subcategory ${\mathsf{tr}} {\mathsf{Cob}}^\mathscr{F}_{ d, d-1}$ of ${\mathsf{Cob}}^\mathscr{F}_{ d, d-1}$ containing only invertible morphisms, which are precisely the limit morphisms; we call ${\mathsf{tr}} {\mathsf{Cob}}^\mathscr{F}_{ d, d-1}$ the `truncation' of the category ${\mathsf{Cob}}^\mathscr{F}_{ d, d-1}$. We denote by $\mathsf{tr}\mathcal{Z}$ the restriction of the functor $\mathcal{Z}$ to ${\mathsf{tr}} {\mathsf{Cob}}^\mathscr{F}_{ d, d-1}$.
\begin{definition}
\label{Definition path Integral}
A \underline{partition function} of an invertible quantum field theory $\mathcal{Z}$ is a natural symmetric monoidal transformation 
\begin{align*}
Z \colon \mbf1 \Longrightarrow \mathsf{tr}\mathcal{Z} \ .
\end{align*}
\end{definition} 
Unpacking this definition, we get for every closed $d-1$-dimensional
manifold $M^{d-1}$ a linear map $Z(M^{d-1})\colon \C
=\mbf1(M^{d-1}) \rightarrow \mathcal{Z}(M^{d-1}) $, such that the diagram
\[
\begin{tikzcd}
\C \ar[d, swap, "{\text{id}}"] \ar{rrr}{Z(M^{d-1})} & & & \mathcal{Z}(M^{d-1}) \ar{d}{\mathcal{Z}(\phi)} \\
 \C \ar[rrr, swap, "{Z(\phi( M^{d-1}))}"] & & & \mathcal{Z}\big(\phi( M^{d-1})\big)
\end{tikzcd}
\] 
commutes for all limit morphisms $\phi$. Since
$\mathcal{Z}$ is an invertible field theory, $\mathcal{Z}(M^{d-1})$ is
a one-dimensional vector space and so isomorphic to $\C$, though not
necessarily in a canonical way. This translates into an ambiguity in
the definition of the partition function as a complex number, which is
the simplest manifestation of an anomaly. 

\begin{remark}
Following \cite{FreedAnomalies,MonnierHamiltionianAnomalies}, consider
a general stack $\mathscr{F}$ and an arbitrary invertible quantum
field theory $\mathcal{L} \colon
{\mathsf{Cob}}_{d,d-1}^{\mathscr{F}}\rightarrow \mathsf{Hilb}_\C$ (see
Section~\ref{sec:generalanomalies} below for precise definitions).  
Fixing a $d-1$-dimensional closed manifold $M^{d-1}$, for every choice
$\mathsf{f}\in \mathscr{F}(M^{d-1})$ we get a one-dimensional vector
space $\mathcal{L}(M^{d-1},\mathsf{f})$. If we assume that
$\mathscr{F}(M^{d-1})$ is a manifold, then it makes sense to require
that $\mathcal{L}(M^{d-1}, \, \cdot \, )\rightarrow
\mathscr{F}(M^{d-1})$ is a line bundle. In this case the partition
function gives rise to a section of this line bundle. This heuristic
reasoning reproduces the more common geometric picture of anomalies in
terms of the absence of canonical trivialisations of line bundles over the parameter space of the field theory (see e.g. \cite{NashBook}).
\end{remark}

Even when it is
well-defined, the partition function may fail to be invariant under a
limit automorphism $\phi$ by a linear isomorphism $\mathcal{Z}(\phi)$
of $\mathcal{Z}(M^{d-1})$ that can define a non-trivial $\C$-valued
1-cocycle of the group of limit automorphisms of $M^{d-1}$, which is
precisely the case of an anomalous symmetry. In this paper we are
interested in the more general case of anomalous symmetries encoded by
non-trivial $\C$-valued 1-cocycles of the groupoid
$\mathsf{trCob}_{d,d-1}^{\mathscr{F}}$: A partition function $Z:\mbf1 \Rightarrow \mathsf{tr}\mathcal{Z}$ induces,
after picking non-canonical isomorphisms, a groupoid homomorphism
$\chi^Z: \mathsf{trCob}_{d,d-1}^{\mathscr{F}}\to \C \,/\!\! /\, \C^\times$.

From a physical viewpoint it is
natural to require that these quantum field theories are local, which
leads to fully extended field theories. These can be classified in the
case of topological quantum field theories by fully dualizable objects
in the symmetric monoidal target
$(\infty,d)$-category~\cite{Lurie2009a}. If the theory is moreover
invertible then a classification using cobordism spectra and stable homotopy theory is possible~\cite{FreedHopkins}. However, in subsequent
sections we will extend the quantum field theory describing anomalies only
up to codimension $2$; this has the advantage that all concepts are
well-defined and all technical difficulties can be handled rather explicitly. 

\subsection{The quantum field theory $\mathcal{Z}_{\rm parity}^\zeta$\label{sec:parityQGT}}

Let us now construct a $d$-dimensional quantum field theory $\mathcal{Z}_{\rm parity}^\zeta$ encoding the parity anomaly in $d-1$ dimensions. 
The theory $\mathcal{Z}_{\rm parity}^\zeta \colon {\mathsf{Cob}}^\mathscr{F}_{ d, d-1} \rightarrow \Hilb$ assigns the one-dimensional vector space $\C$ to every object $M^{d-1}$:
$$
\mathcal{Z}_{\rm parity}^\zeta(M^{d-1})=\C \ .
$$

The background field content $\mathscr{F}$ together with the representation $\rho_G$ defines a Dirac operator $\slashed{D }_{M^{d,1}}$ on every manifold $M^{d,1}$ corresponding to a regular morphism in ${\mathsf{Cob}}^\mathscr{F}_{ d, d-1}$. Since $d$ is even, the twisted spinor bundle $S_{M^{d,1}}= S_{M^{d,1}}^+\oplus S^-_{M^{d,1}}$ splits into bundles $S_{M^{d,1}}^\pm$ of positive and negative chirality spinors; the Dirac operator is odd with respect to this $\Z_2$-grading. 
On a closed manifold $M^d$, the chiral Dirac operator $\slashed{D
}^+_{M^{d}} \colon H^1(S _{M^{d}}^+)\rightarrow L^2(S _{M^{d}}^-)$ is
a first order elliptic differential operator, where $H^1(S
_{M^{d}}^+)$ is the first Sobolev space of sections of $S _{M^{d}}^+$,
i.e. spinors $\Psi$ whose image $\slashed{D }^+_{M^{d}}\Psi$ is square-integrable, and $L^2(S _{M^{d}}^-)$ is the Hilbert space of square-integrable sections of $S _{M^{d}}^-$; the integration is with respect to the Hermitian structure on $S _{M^{d}}$ induced by the metric and the unitary representation $\rho_G$. Every elliptic operator acting on sections of a vector bundle of finite rank over a closed manifold $M^{d}$ is Fredholm, so that we can define a map $
\mathcal{Z}_{\rm parity}^\zeta(M^{d})\colon \C \rightarrow \C$ by $z\mapsto \zeta^{\text{ind}(\slashed{D }^+_{M^{d}})} \cdot z$,
where the index of a Fredholm operator $D$ is defined by 
$$
\text{ind}(D)= \text{dim}\,\text{ker}(D)-\text{dim}\,\text{coker}(D)
$$ 
and $\zeta \in \C^\times$ is a non-zero complex parameter. We would like to extend this map to manifolds with boundary, but the Dirac operator on a manifold with boundary is never Fredholm. A standard solution to this problem is to attach infinite cylindrical ends to the boundary of $M^{d,1}$.\footnote{This is equivalent to the introduction of
  Atiyah-Patodi-Singer spectral boundary conditions on the spinors~\cite{APS}. We use the method of cylindrical ends here, since it can be generalised to manifolds with corners and gives a natural cancellation of certain terms later on.} We define
\begin{align*}
\hat{M}^{d,1}= M^{d,1} \sqcup_{\partial M^{d,1}} \big(\partial_- M^{d,1} \times (-\infty,0] \sqcup \partial_+ M^{d,1} \times [0,\infty ) \big)  
\ ,
\end{align*} 
where we use the identification of the collars $M_\pm^{d,1}$, which are part of the data of a regular morphism, with open cylinders to glue as discussed in Appendix \ref{Section: Gluing}. We extend all of the background field content $\Fscr$ as products to $\hat{M}^{d,1}$. The structure of the regular morphisms in ${\mathsf{Cob}}^\mathscr{F}_{ d, d-1}$ makes it natural to attach inward and outward pointing cylinders to the incoming and outgoing boundary, respectively, contrary to what is normally done in the index theory literature; this will be crucial for compatibility with composition later on. It is further natural, again in contrast to what is normally done in index theory, to glue in the cylinders along the identification of the collars with cylinders; this means that the gluing could ``twist'' bundles. Alternatively, we could first attach a mapping cylinder for the identification and then an infinite cylinder.

The Dirac operator $\slashed{D }^+_{\hat{M}^{d,1}} \colon H^1(\hat{S}_{\hat M^{d,1}}^+)\rightarrow L^2(\hat{S}_{\hat M^{d,1}}^-)$ is Fredholm if and only if the kernel of the induced Dirac operator on the boundary of $M^{d,1}$ is trivial. If the kernel is non-trivial, then we have to regularize the index in an appropriate way, which corresponds physically to introducing small masses for the massless fermions on $M^{d,1}$. This is done precisely by picking, for every connected component $\partial M^{d,1}_i$ of the boundary, a small number $\alpha_i$ with $0<\alpha_i  < \delta_i$, where $\delta_i$ is the smallest magnitude $|\lambda_i|$ of the non-zero eigenvalues $\lambda_i$ of the induced Dirac operator on $\partial M^{d,1}_i$. Now we can attach weights $\e^{\alpha_i \, s_i}$ to the integration measure on the cylindrical ends, where $s_i$ is the coordinate on the cylinder over $\partial M^{d,1}_i $. Denoting the corresponding weighted Sobolev spaces by $\e^{\alpha\cdot s}H^1(\hat{S}_{\hat M^{d,1}}^+)$ and $\e^{\alpha \cdot s}L^2(\hat{S}_{\hat M^{d,1}}^-)$, we then have\footnote{To be more precise, we have to first attach a mapping cylinder before we can apply~\cite[Theorem 5.60]{MelrosebGeo}.}
\begin{theorem}(\cite[Theorem 5.60]{MelrosebGeo})
 \ The Dirac operator 
$$ 
\slashed{D }^+_{\hat{M}^{d,1}} \colon \e^{\alpha\cdot s}H^1\big(\hat S_{\hat M^{d,1}}^+\big)\longrightarrow \e^{\alpha\cdot s}L^2\big(\hat S_{\hat M^{d,1}}^-\big)
$$ 
is Fredholm and its index is independent of the masses $\alpha_i$.
\end{theorem}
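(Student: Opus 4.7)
The plan is to invoke Melrose's b-calculus \cite{MelrosebGeo}, in which manifolds with infinite cylindrical ends are realised as compact manifolds with boundary carrying a b-metric, and weighted Sobolev spaces on the ends become weighted b-Sobolev spaces at the boundary. First I would compactify each cylindrical end of $\hat{M}^{d,1}$ by the change of variable $x_i = \e^{\mp s_i}$ (with signs arranged so that $x_i\to 0$ at infinity on both incoming and outgoing ends), mapping $\partial M^{d,1}_i \times \{|s_i|\to \infty\}$ diffeomorphically onto $\partial M^{d,1}_i \times (0,1]$ with boundary defining function $x_i$. The product metric $\diff s_i^2 + g_{\partial M^{d,1}_i}$ pulls back to the b-metric $\diff x_i^2 / x_i^2 + g_{\partial M^{d,1}_i}$, and the weight $\e^{\alpha_i s_i}$ corresponds to $x_i^{\mp\alpha_i}$. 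Using the product structure of the $\Fscr$-fields on the ends forced by the definition of a regular morphism, the Dirac operator near each boundary component takes the product form $\slashed{D}^+_{\hat M^{d,1}} = \Gamma_i\,\big(x_i \partial_{x_i} + \slashed{D}_{\partial M^{d,1}_i}\big)$, with $\Gamma_i$ Clifford multiplication by the inward unit normal; this is a first-order b-differential operator which is b-elliptic because the ordinary Dirac symbol is invertible off the zero section.

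The key analytic step is Melrose's Fredholm criterion: a b-elliptic operator of order one is Fredholm between weighted b-Sobolev spaces $x^\alpha H^1_b \to x^\alpha L^2_b$ if and only if its indicial family is invertible on the critical line $\{\text{Im}\,\lambda=-\alpha\}$ at every boundary component. A direct calculation from the product expression above identifies the indicial family as $I\big(\slashed{D}^+_{\hat M^{d,1}},\lambda\big) = \Gamma_i\,\big(\slashed{D}_{\partial M^{d,1}_i} + \ii\,\lambda\big)$, acting on sections of the spinor bundle over $\partial M^{d,1}_i$. Since $\Gamma_i$ is invertible and $\slashed{D}_{\partial M^{d,1}_i}$ is an elliptic self-adjoint operator on the closed manifold $\partial M^{d,1}_i$, its spectrum is a discrete subset of $\R$, and invertibility on the critical line reduces to the condition that $\pm\alpha_i$ not be an eigenvalue of $\slashed{D}_{\partial M^{d,1}_i}$. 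The hypothesis $0<\alpha_i<\delta_i$ was chosen precisely to rule this out, yielding the Fredholm property.

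For independence of $\text{ind}\big(\slashed{D}^+_{\hat M^{d,1}}\big)$ on the choice of $\alpha_i$, I would invoke the relative index theorem of the b-calculus: as $\alpha_i$ varies continuously without crossing a value for which the indicial family becomes non-invertible, the weighted operator remains Fredholm and its index is locally constant; the index can only jump when $\alpha_i$ crosses an eigenvalue of $\slashed{D}_{\partial M^{d,1}_i}$, the jump being given by the corresponding spectral multiplicity. By definition $\delta_i$ is the smallest non-zero eigenvalue magnitude, so no crossing occurs on the open interval $(0,\delta_i)$, and the index is constant there.

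The main obstacle is not the analytic machinery, which is standard once one is in the b-picture, but the preliminary verification that the gluing and trivialisation data of a regular morphism genuinely produce a product structure of all $\Fscr$-fields (in particular of the spin structure, principal $G$-bundle, and connection) on the cylindrical ends. The footnote in the statement addresses this by first attaching mapping cylinders that absorb any twist in the bundles introduced by the collar identifications $\varphi_\pm$; after this correction the identification of $\slashed{D}^+_{\hat M^{d,1}}$ with a b-differential operator of true product form near the boundary is unambiguous, and Melrose's theorem applies directly.
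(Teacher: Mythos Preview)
Your proposal is a correct outline of the b-calculus proof, and it is essentially the approach of Melrose~\cite{MelrosebGeo} that the paper is invoking. Note, however, that the paper does not supply its own proof of this statement at all: it is quoted verbatim as \cite[Theorem~5.60]{MelrosebGeo}, with only the footnote remark that one must first attach a mapping cylinder to absorb the collar identifications before the cited theorem applies in its stated form. So there is nothing to compare your argument against in the paper itself; what you have written is a faithful sketch of the analytic input the paper takes as a black box, including the compactification to an exact b-metric, the product form of the Dirac operator near the boundary, the indicial-family Fredholm criterion, and the constancy of the index on the interval $(0,\delta_i)$ by absence of indicial roots. Your final paragraph also correctly identifies the one point the paper does flag explicitly, namely that the twist in the bundle data coming from $\varphi_\pm$ must be straightened by a mapping cylinder before the product-form computation is literally valid.
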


Having at hand a well-defined notion of an index for manifolds with boundaries, we can now define
\begin{align*}
\mathcal{Z}_{\rm parity}^\zeta(M^{d,1})\colon \C  \longrightarrow \C \ , \qquad z \longmapsto \zeta^{\text{ind}(\slashed{D }^+_{\hat{M}^{d,1}})}\cdot z \ . 
\end{align*}
The index can be computed by means of the Atiyah-Patodi-Singer index theorem \cite{APS}, which gives a concrete formula for the index if the attaching of cylindrical ends is taken along identity maps: 
\begin{align}
\text{ind} \big(\slashed{D }^+_{\hat{M}^{d,1}} \big) =
  \int_{M^{d,1}}\,  K_{\rm AS}-\frac{1}{2} \, \Big( \eta \big(
  \slashed{D }_{\partial M^{d,1}}\big) +\dim \ker \big(\slashed{D
  }_{\partial_- M^{d,1}} \big)- \dim \ker \big(\slashed{D }_{\partial_+ M^{d,1}}\big) \Big) \ ,
\label{APS Theorem}
\end{align} 
where the Atiyah-Singer density 
$$
K_{\rm AS}= \text{ch}\big(P_{M^{d,1}} \big)\wedge \widehat{A}\big(TM^{d,1} \big)\big|_d
$$ 
is the homogeneous differential form of top degree in $\Omega^d(M^{d,1})$ occuring in the exterior product of the Chern character of the bundle $P_{M^{d,1}}$ with the $\widehat{A}$-genus of the tangent bundle $TM^{d,1}$. The $\eta $-invariant of the Dirac operator on a closed manifold $M^{d-1}$ of odd dimension calculates the number of positive eigenvalues minus the number of negative eigenvalues of $\slashed{D }_{M^{d-1}}$, and is defined by 
\begin{align*}
\eta \big( \slashed{D }_{M^{d-1}}\big)= \lim_{s\rightarrow 0} \ \sum_{\substack{\lambda \in \text{spec}( \slashed{D }_{M^{d-1}}) \\ \lambda\neq0}} \, \frac{\text{sign}(\lambda)}{|\lambda|^s} \ . 
\end{align*}
The limit here should be understood as the value of the analytic continuation of the meromorphic function $\sum_{\lambda\neq0} \, \tfrac{\text{sign}(\lambda)}{|\lambda|^s}$ at $s=0$; the regularity of this value is proven in \cite{APS}. The sign difference between the dimensions of the kernels in \eqref{APS Theorem} comes from the fact that we attach cylinders with opposite orientation to the incoming and outgoing boundary; this corresponds to a negative sign for the numbers $\alpha_i$ on the outgoing boundary $\partial_+M^{d,1}$ in the version of the Atiyah-Patodi-Singer index theorem given in \cite{MelrosebGeo}. 
The $\eta $-invariant can be reformulated as an integral over the trace of the corresponding heat kernel operator as
\begin{align}
\label{Equation: Reformulation eta invariant}
\eta \big( \slashed{D }_{ M^{d-1}}\big) = \frac{1}{\sqrt{\pi}} \, \int_0^\infty\, t^{-{1}/{2}} \ \text{Tr}\Big(\slashed{D }_{ M^{d-1}} \, \e^{-t\, \slashed{D }{}_{M^{d-1}}^2} \Big) \ \diff t \ .
\end{align}

We assign to a limit morphism $\phi$ the value of $\mathcal{Z}_{\rm parity}^\zeta$ on a corresponding mapping cylinder; in order for $\mathcal{Z}_{\rm parity}^\zeta$ to be well-defined, this construction must then be independent of the length of the mapping cylinder. We prove this as part of
\begin{theorem}\label{A is field theory}
 \ $\mathcal{Z}_{\rm parity}^\zeta:{\mathsf{Cob}}^\mathscr{F}_{ d, d-1}\to\Hilb$ is an invertible quantum field theory.
\end{theorem}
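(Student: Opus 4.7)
The plan is to check in turn that $\mathcal{Z}^\zeta_{\rm parity}$ is well-defined on morphisms, is functorial, is symmetric monoidal, and admits an inverse. On a regular morphism $M^{d,1}$ the value $\zeta^{\text{ind}(\slashed{D}^+_{\hat M^{d,1}})}$ depends only on the $\mathscr F$-diffeomorphism class of the cobordism together with its collar data, since such a diffeomorphism extends to the cylindrical completion $\hat M^{d,1}$ and conjugates the Dirac operator; independence of the regularisation masses $\alpha_i$ is supplied by the cited theorem from~\cite{MelrosebGeo}. For a limit morphism $\phi:M^{d-1}_1\to M^{d-1}_2$ we assign the value of $\mathcal{Z}^\zeta_{\rm parity}$ on the mapping cylinder $\mathfrak{M}_L(\phi)$ of some length $L>0$, and I would prove this is independent of $L$ by noting that for $L,L'>0$ the completions $\widehat{\mathfrak{M}_L(\phi)}$ and $\widehat{\mathfrak{M}_{L'}(\phi)}$ are $\mathscr F$-diffeomorphic (absorb the rescaling of the finite cylinder into the infinite ends), reducing $L$-independence to the $\mathscr F$-diffeomorphism invariance already established.

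The main obstacle is functoriality under composition of two regular morphisms $M_1^{d,1}:M_1^{d-1}\to M_2^{d-1}$ and $M_2^{d,1}:M_2^{d-1}\to M_3^{d-1}$, where I must establish the index gluing identity
\begin{equation*}
\text{ind}\bigl(\slashed{D}^+_{\widehat{M_2^{d,1}\circ M_1^{d,1}}}\bigr)\;=\;\text{ind}\bigl(\slashed{D}^+_{\hat M_1^{d,1}}\bigr)+\text{ind}\bigl(\slashed{D}^+_{\hat M_2^{d,1}}\bigr).
\end{equation*}
Applying the Atiyah--Patodi--Singer formula~\eqref{APS Theorem} to each term, the bulk integrals of the local density $K_{\rm AS}$ over $M_1^{d,1}$ and $M_2^{d,1}$ add by locality to the integral over the glued cobordism, while the $\eta$-invariant and $\dim\ker$ contributions from the shared boundary $M_2^{d-1}$ enter with opposite signs on the two sides---precisely because the cylindrical-end convention assigns $M_2^{d-1}$ as outgoing for $M_1^{d,1}$ but incoming for $M_2^{d,1}$---and so cancel, leaving only the outer boundary terms on $M_1^{d-1}\sqcup M_3^{d-1}$, which are exactly those of $\widehat{M_2^{d,1}\circ M_1^{d,1}}$. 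The potential bundle twist introduced by the collar identifications $\varphi_\pm$ is handled by first inserting the relevant mapping cylinder, as remarked in the text, so that the b-calculus gluing theorem applies. Composition involving a limit morphism reduces to this case via step one: precomposition or postcomposition by $\phi$ is tantamount to gluing on a mapping cylinder of any convenient length, and the resulting contribution is exactly $\mathcal{Z}^\zeta_{\rm parity}(\phi)$. Preservation of identities follows from the same analysis applied to the trivial mapping cylinder $M^{d-1}\times[0,L]$, where $K_{\rm AS}$ has no top-degree component and the two boundary contributions cancel, giving index zero and value $\zeta^0=1$.

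Symmetric monoidality is then essentially formal: additivity of the Fredholm index under disjoint union, combined with the canonical isomorphism $\C\otimes\C\cong\C$, supplies the required multiplicativity $\mathcal{Z}^\zeta_{\rm parity}(M)\otimes\mathcal{Z}^\zeta_{\rm parity}(N)\cong\mathcal{Z}^\zeta_{\rm parity}(M\sqcup N)$, and all associator, unitor and symmetry coherences reduce to trivial identities on $\C$. Finally, for invertibility I would take the inverse theory to be $\mathcal{Z}^{\zeta^{-1}}_{\rm parity}$, which is again a quantum field theory by the same arguments; the natural symmetric monoidal transformation $\mathcal{Z}^\zeta_{\rm parity}\otimes\mathcal{Z}^{\zeta^{-1}}_{\rm parity}\Rightarrow\mbf1$ has components $\zeta^{\text{ind}}\cdot\zeta^{-\text{ind}}=1$, which is manifestly natural and monoidal. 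All of the analytic substance is concentrated in the gluing/cylindrical-end step above; the remaining steps are formal consequences of the topological invariance and additivity of the index.
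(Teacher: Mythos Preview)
Your proof is correct and follows essentially the same approach as the paper: both establish well-definedness on limit morphisms via $\mathscr{F}$-diffeomorphism of the cylindrical completions, derive the index gluing identity from the Atiyah--Patodi--Singer formula by observing that the $\eta$-invariant reverses sign under orientation reversal while the opposite $\dim\ker$ signs for incoming versus outgoing boundaries furnish the remaining cancellation, and exhibit $\mathcal{Z}^{\zeta^{-1}}_{\rm parity}$ as the inverse. Your treatment is slightly more detailed in spelling out the identity-morphism case and the formal monoidal coherences, but the substance is the same.
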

\begin{proof}
The value of $\mathcal{Z}_{\rm parity}^\zeta$ on a mapping cylinder is independent of its length, since the manifolds constructed by attaching cylindrical ends are $\mathscr{F}$-diffeomorphic. This proves that $\mathcal{Z}_{\rm parity}^\zeta$ is well-defined on limit morphisms $\phi$.

If we cut a manifold $M^{d,1}$ along a hypersurface $H$ into two pieces $M_1^{d,1}$ and $M_2^{d,1}$, then from the Atiyah-Patodi-Singer index theorem \eqref{APS Theorem} we get\footnote{For this we need a cylindrical neighbourhood of $H$ on which all of the field content $\Fscr$ is of product form.} 
\begin{align*}
\text{ind}\big(\slashed{D}^+_{\hat M^{d,1}}\big)=\text{ind}\big(\slashed{D}^+_{\hat M_1^{d,1}}\big)+\text{ind}\big(\slashed{D}^+_{\hat M_2^{d,1}} \big) \ ,
\end{align*} 
since the integration is additive and $\eta\big( \slashed{D }_{M^{d-1}}\big) = - \eta\big( \slashed{D }_{-M^{d-1}}\big)$, where $-M^{d-1}$ is the manifold $M^{d-1}$ with the opposite orientation. The contributions from the boundary along which the cutting takes place cancel in (\ref{APS Theorem}) because of the opposite signs of the dimensions of the kernel of the boundary Dirac operator for incoming and outgoing boundaries.

For regular morphisms $M^{d,1}\colon M^{d-1}_1 \rightarrow M^{d-1}_2$
and $N^{d,1}\colon M^{d-1}_2 \rightarrow M^{d-1}_3$, we can cut the
manifold $N^{d,1}\circ M^{d,1}$ into $M^{d,1}$ and $N^{d,1}$, with
half of the collar around $M^{d-1}_2$ removed and mapping cylinders
corresponding to the identification attached. This uses the
description of the gluing process in terms of mapping cylinders (see
Appendix \ref{Section: Gluing}), but as mentioned earlier the index of
such pieces is the same as the index corresponding to a manifold where
the attachment is twisted by the identification of the collars with
cylinders. This implies 
\begin{align*}
\mathcal{Z}_{\rm parity}^\zeta\big(N^{d,1}\circ M^{d,1}\big)= \mathcal{Z}_{\rm parity}^\zeta\big(M^{d,1}\big)\cdot \mathcal{Z}_{\rm parity}^\zeta\big(N^{d,1}\big) \ .
\end{align*}   
This proves that $\mathcal{Z}_{\rm parity}^\zeta$ is a functor, which
is furthermore symmetric monoidal since all our constructions are
multiplicative under disjoint unions. The inverse functor is $\big(\mathcal{Z}_{\rm parity}^\zeta\big)^{-1}=\mathcal{Z}_{\rm parity}^{\zeta^{-1}}$.
\end{proof}

\begin{remark}
It may seem unnatural for $\mathcal{Z}_{\rm parity}^\zeta$ to
assign the one-dimensional vector space $\C $ to every closed $d-1$-dimensional manifold $M^{d-1}$. 
Rather one would expect a complex line generated by all boundary conditions via an inverse limit construction as for example in~\cite{FreedQuinn}. Assigning $\C$ to every closed manifold is 
only possible due to the presence of canonical APS-boundary conditions related to the $L^2$-condition on the non-compact manifolds~$\hat{M}{}^{d,1}$.
\end{remark}

\subsection{Partition functions and symmetry-protected topological phases}

We turn our attention now to the partition function for a quantum field theory describing the parity anomaly.
According to Definition~\ref{Definition path Integral}, it is a
natural symmetric monoidal transformation
$Z_{\rm parity}^\zeta \colon \mbf1 \Rightarrow \mathsf{tr}\mathcal{Z}_{\rm parity}^\zeta$. This yields, for every closed $d-1$-dimensional manifold $M^{d-1}$, a linear map $Z_{\rm parity}^\zeta(M^{d-1})\colon \C \rightarrow \mathcal{Z}_{\rm parity}^\zeta(M^{d-1})=\C $.
 A linear map $Z_{\rm
  parity}^\zeta(M^{d-1}) \colon \C \rightarrow \C$ can be canonically
identified with a complex number $Z_{\rm parity}^\zeta(M^{d-1})\in \C$. Now there is no ambiguity in the
definition of the partition function as a complex number. The essence
of the parity anomaly, like most anomalies associated with the
breaking of a classical symmetry in quantum field theory, is the
lack of invariance of $Z_{\rm parity}^\zeta$ under limit morphisms $\phi$: The naturality of the partition function
implies that it transforms under gauge transformations $\phi $ by
multiplication with a 1-cocycle $\mathcal{Z}_{\rm parity}^\zeta(\phi)\in\C^\times$; note that
in the present context `gauge transformations' also refer to
isometries and isomorphisms of the spinor bundle $S_{M^{d-1}}$. Since
$\mathcal{Z}_{\rm parity}^\zeta$ depends only on topological data,
this multiplication is given by
\begin{eqnarray}
\mathcal{Z}_{\rm parity}^\zeta(\phi) = \zeta^{{\rm ind}(\slashed{D}{}^+_{\mathfrak{M}(M^{d-1},\phi)})}
\label{eq:Zparityphi}\end{eqnarray}
where $\mathfrak{M}(M^{d-1}, \phi)$ is the corresponding mapping torus constructed by identifying the boundary components of $M^{d-1}\times [0,1]$ using $\phi$. 

\begin{example} 
\label{Example:TopologicalInsolature}
We shall now illustrate how the functorial formalism of this section
connects with the more conventional treatments of the parity anomaly
in the physics literature, following~\cite{WittenFermionicPathInt} (see also~\cite{SeibergWitten});
indeed, what mathematicians call `invertible quantum field theories'
are known as `short-range entangled topological phases' to
physicists. A partition function with parity anomaly can be defined by
fixing its value on a representative for every gauge equivalence class of field configurations and applying \eqref{eq:Zparityphi}. 
Now consider the partition function with parity anomaly defined by 
\begin{align*}
Z_{\rm parity}^{(-1)}\big(M^{d-1}\big)=\big|\text{det}\big(\slashed{D }_{M^{d-1}}\big)\big|  
\end{align*} 
for an arbitrary chosen background $(A_{M^{d-1}},g_{M^{d-1}})$ in
every gauge equivalence class, where the definition of the determinant requires a suitable regularization.
Formally, this is the absolute value of the contribution to the path
integral measure from a massless Dirac fermion in $d-1$ dimensions coupled to a background $(A_{M^{d-1}},g_{M^{d-1}})$.
 There is an ambiguity in defining the phase of $Z_{\rm
   parity}^{(-1)}\big(M^{d-1}\big)$. Time-reversal (or space-reflection)
 symmetry forces $Z_{\rm parity}^{(-1)}\big(M^{d-1}\big)$ to be real. Here we
 chose the phase to make the partition function positive at the fixed representative.

From a physical perspective, having set the phase of the partition function at a fixed background $(A_{M^{d-1}},g_{M^{d-1}})$ we can calculate the phase at a gauge equivalent configuration $\phi(A_{M^{d-1}},g_{M^{d-1}})$ by following the path
\begin{equation}
(1-t)\, (A_{M^{d-1}},g_{M^{d-1}})+t\, \phi(A_{M^{d-1}},g_{M^{d-1}}) \ , \quad t\in[0,1]
\label{eq:spectralflow}\end{equation}
in the configuration space of the field theory, and changing the sign
every time an eigenvalue of the Dirac operator crosses through
zero. It is well-known that this spectral flow can be calculated by
the index of the Dirac operator on the corresponding mapping
cylinder. This physical intuition is formalised by the definition
above for $\zeta=-1$: The phase ambiguity is determined by requiring
the partition function to define a natural symmetric
monoidal transformation.

We can preserve gauge invariance by using Pauli-Villars regularization~\cite{WittenFermionicPathInt}, leading to the gauge invariant partition function  
\begin{align*}
Z_{\rm parity}\big(M^{d-1}\big)=\big|\text{det}\big(\slashed{D }_{M^{d-1}}\big)\big| \ (-1)^{\eta (\slashed{D }_{M^{d-1}})/2} \ .
\end{align*}
The global parity anomaly is due to the fact that the fermion path integral is in general not a real number, whereas classical orientation-reversal (or `parity') symmetry, which acts by complex conjugation on path integrals, would imply that the path integral is real.

The formula \eqref{eq:Zparityphi} for the anomalous phase $\mathcal{Z}^{(-1)}_{\rm parity}(\phi)$ now immediately suggests a way to cancel the parity anomaly: We combine bulk and boundary degrees of freedom by introducing for the bulk fields the action 
\begin{align}
\label{EQ: Action topological isulator bulk fields}
 S_{\rm bulk}\big(M^{d,1}\big) = \ii \pi \, \int_{M^{d,1}}\, K_{\rm AS} \ ,
\end{align} 
where $M^{d,1}$ is a regular morphism from $\emptyset$ to $M^{d-1}$, i.e. $\partial M^{d,1}=M^{d-1}$. Then after integrating out the boundary fermion fields, the contribution to the path integral measure for the combined system is given by
\begin{align*}
Z_{\rm comb}\big(M^{d,1}\big) & = \big|\text{det}\big(\slashed{D }_{M^{d-1}}\big)\big| \ (-1)^{ \eta (\slashed{D }_{M^{d-1}})/2} \ \e^{-S_{\rm bulk}(M^{d,1})} \\[4pt] & = 
\big|\text{det}(\slashed{D }_{M^{d-1}})\big| \, \exp \Big(\,\frac{\ii\pi}2\, \eta \big(\slashed{D }_{M^{d-1}}\big) - \ii\pi\, \int_{M^{d,1}}\, K_{\rm AS}\, \Big) \\[4pt] & = \big|\text{det}\big(\slashed{D }_{M^{d-1}}\big)\big| \ (-1)^{\text{ind}(\slashed{D }^+_{\hat M^{d,1}})} \ ,
\end{align*}
where in the last line we used the Atiyah-Patodi-Singer index formula \eqref{APS Theorem}. This expression is real. Thus the combined bulk-boundary system is invariant under orientation-reversal and gauge transformations, since now its path integral is real, due to `anomaly inflow' from the bulk to the boundary. In particular, the non-anomalous partition function of the combined system
\begin{eqnarray*}
Z_{\rm comb}\big(M^{d,1}\big) = \mathcal{Z}^{(-1)}_{\rm parity}\big(M^{d,1}\big) \ \big|\text{det}\big(\slashed{D }_{M^{d-1}}\big)\big|
\end{eqnarray*}
is defined in the full $d$-dimensional quantum field theory $\mathcal{Z}^{(-1)}_{\rm parity}$, rather than just the truncation $\mathsf{tr}\mathcal{Z}^{(-1)}_{\rm parity}$ in which the original partition function $Z^{(-1)}_{\rm parity}$ lives. Looking at this from a different perspective, we see that the existence of an effective long wavelength action \eqref{EQ: Action topological isulator bulk fields} for the bulk gauge and gravitational fields implies the existence of gapless charged boundary fermions with an anomaly cancelling the anomaly of the bulk quantum field theory under orientation-reversing transformations. 

This example provides a simple model for the general feature of some topological states of matter: Symmetry-protected topological phases in $d$ dimensions are related to global anomalies in $d-1$ dimensions. In the simplest case $d=2$, the quantum mechanical time-reversal anomaly on the $0+1$-dimensional boundary is encoded by the $1+1$-dimensional symmetry-protected topological phase in the bulk whose topological response action \eqref{EQ: Action topological isulator bulk fields} evaluates to $\ii\pi\,\Phi$, where $\Phi$ is the magnetic flux of the background gauge field through $M^{2,1}$. This sets the two-dimensional $\theta$-angle equal to $\pi$, and the action reduces to the Wilson loop of the gauge field $A_{M^1}$ on $\partial M^{2,1}=M^1$.

For the $d=4$ example of the time-reversal (or space-reflection)
invariant $3+1$-dimensional fermionic topological insulator with
$2+1$-dimensional boundary~\cite{WittenFermionicPathInt}, the integral
of the Atiyah-Singer index density $K_{\rm AS}$ in four dimensions
yields the sum of the instanton number $I$ of the background gauge
field and a gravitational contribution related to the signature
$\sigma$ of the four-manifold $M^{4,1}$~\cite{NashBook}. For the cancellation of the parity anomaly we had to introduce the term $\ii\pi\, I$ in the action, which is the anticipated statement that the $\theta$-angle parameterising the axionic response action is equal to $\pi$ inside a topological insulator. The bulk-boundary correspondence discussed above then resembles the well-known situation from three-dimensional Chern-Simons theory, to which the bulk theory reduces on $\partial M^{4,1}=M^3$~\cite{Niemi,AlvarezGaume}.

The present formalism generalises this perspective to systematically
construct quantum field theories with global parity symmetry that
characterise gapless charged fermionic boundary states of certain
symmetry-protected topological phases of matter in all higher even
dimensions $d\geq6$. Indeed, the anomaly of a quantum field theory in
$d=2n$ dimensions involving an action that integrates the
Atiyah-Singer index density $K_{\rm AS}$ reduces on the boundary $\partial M^{d,1}=M^{d-1}$ to coupled combinations of gauge and gravitational Chern-Simons type terms. The bulk action \eqref{EQ: Action topological isulator bulk fields} will now also involve couplings between gauge and gravitational degrees of freedom through intricate combinations of Chern and Pontryagin classes, such that the
bulk symmetry-protected topological phase completely captures the
parity anomaly of the boundary theory. Some examples of such mixed
gauge-gravity phases can be found e.g. in~\cite{Wang:2014pma}.
\end{example}

\section{Anomalies and projective representations\label{sec:generalanomalies}}

When a quantum field theory has a global symmetry which is
non-anomalous, the symmetry group acts on the Hilbert space of quantum
states. When the global symmetry is anomalous, the group instead acts
projectively on the state space, or equivalently a non-trivial central
extension acts linearly. Such central extensions correspond to group
2-cocycles which also specify the class of a gerbe on the classifying
space of the symmetry group. To see this effect in our framework, it
is necessary to extend the functorial quantum field theories defined
in Section~\ref{sec:catQFT} in order to capture the action of the
anomaly on quantum states. In this section we develop a general
framework of extended field theories which will encode anomalies in
this way,
following~\cite{FreedAnomalies,MonnierHamiltionianAnomalies,BoundaryConditionsTQFT}. In
this formalism the same group 2-cocycle characterising the projective
representation in the $d-1$-dimensional boundary field theory also
specifies an invariant of the bulk $d$-dimensional quantum field
theory in which the anomaly is encoded, so that such cocycles also
describe invariants of certain topological phases.

\subsection{Invariant background fields}

Physical fields should be local, i.e. they form a sheaf on
the category of manifolds, and can have (higher) internal symmetries such as gauge
symmetries. We thus incorporate all data of fields such as bundles with connections, spin structures and metrics into a stack
$
\mathscr{F}\colon \mathsf{Man}_d^{\mathrm{op}}\rightarrow \mathsf{Grpd}
$
on the category $\mathsf{Man}_d$ of $d$-dimensional manifolds with
corners and local diffeomorphisms; we regard $\mathsf{Man}_d$ as a
2-category with only trivial 2-morphisms, and $\mathsf{Grpd}$ denotes
the 2-category of small groupoids, functors and natural
isomorphisms. One should think of elements of $\mathscr{F}(M)$ as the
collection of classical background fields on $M$, which in particular
satisfies the sheaf condition, i.e. for every open cover $\{U_a\}$ of
a manifold $M$, the diagram
$$
\Fscr(M)\longrightarrow \prod_a\, \Fscr(U_a)\doublearrow
\prod_{a,b}\, \Fscr(U_a\cap U_b)
$$
is a weak/homotopy equalizer diagram in $\mathsf{Grpd}$. Using stacks we avoid problems associated to the fact that
pullbacks of certain background fields are only functorial up to
canonical isomorphism. We can include some geometrical structures such as metrics by considering the corresponding set as a groupoid with only identity morphisms. With this in mind it would be more general to work with $\infty$-stacks, but for our purpose stacks are enough. 

\begin{remark}
We implicitly pick, for every surjective submersion $\pi \colon Y\rightarrow M$, weak adjoint inverses to the canonical map $\mathscr{F}(M) \rightarrow \mathsf{Desc}_\mathscr{F}(Y)$ where $\mathsf{Desc}_\mathscr{F}(Y)$ is the category of descent data associated to $\pi$. For every refinement 
\[
\begin{tikzcd}
Y_1 \ar[dd,swap,"{f}"] \ar[dr,"{\pi_1}"]  & \\
 & M \\
Y_2 \ar[ru,swap, "{\pi_2}"] &
\end{tikzcd}
\] 
we get a natural functor $f^\ast \colon \mathsf{Desc}_\mathscr{F}(Y_2)\rightarrow \mathsf{Desc}_\mathscr{F}(Y_1)$ for which we pick a weak adjoint inverse. The adjointness condition is essential for ensuring naturality of constructions using descent properties.   
\end{remark}

To generalize the constructions of Appendix \ref{Section: Gluing} we need the following notion. 

\begin{definition}
\label{Def:Invariant element stack}
Let $\mathscr{F}$ be a stack, $M$ a manifold, and $\mathscr{S}(M)$ a groupoid which consists of a collection of open subsets of $M$ including $M$ and a collection of diffeomorphisms as morphisms. 
\begin{itemize}
\item[(a)] An \underline{invariant structure} with respect to $\mathscr{S}(M)$ for an element $\mathsf{f} \in \mathscr{F}(M)$ is a natural 2-transformation
\[
\begin{tikzcd}
 &\; \ar[dd, Leftarrow, shorten >= 20 ,shorten <= 15, "\, \mathsf{f}"] & \\
\mathscr{S}(M)^{\mathrm{op}} \ar[rr, bend left, "{\mathscr{F}}"] \ar[rr, swap, bend right, "\mathsf{1}"]& & \mathsf{Grpd} \\
 & \; &
\end{tikzcd}
\]
from the constant 2-functor sending every object to the groupoid $\mathsf{1}$ with one object and one morphism, such that the natural transformation is induced by $f$ on objects (see the following remark for an explaination). Here we regard $\mathscr{S}(M)$ as a 2-category with trivial 2-morphisms.\footnote{This is the same as a higher fixed point for the groupoid action corresponding to $\mathscr{F}$, as discussed for example in \cite{HesseSchweigertValentino}.} We call an element $\mathsf{f}\in \mathscr{F}(M)$ together with the choice of an invariant structure an \underline{invariant element}.

\item[(b)] A morphism $\mit\Theta\colon \mathsf{f}\rightarrow \mathsf{f}'$ between invariant elements $\mathsf{f},\mathsf{f}' \in \mathscr{F}(M)$ is \underline{invariant} under $\mathscr{S}(M)$ if it induces a modification between the natural 2-transformations corresponding to $\mathsf{f}$ and~$\mathsf{f}'$.
\end{itemize}    
\end{definition}

\begin{remark}
Let us spell out in detail what we mean by saying that $\mathsf{f}\in\mathscr{F}(M)$ induces a natural 2-transformation on objects. A map $\mathsf{f}_U \colon \mathsf{1} \rightarrow \mathscr{F}(U)$ is an element of $\mathscr{F}(U)$. We set $\mathsf{f}_U = \mathsf{f}|_U$ for all $U \in \mathrm{Obj}\big(\mathscr{S}(M)\big)$. To equip this with the structure of a natural 2-transformation we have to fix natural transformations (see Definition~\ref{Definition transformation Bicategory})
 \[
\begin{tikzcd}
\Hom_{\mathscr{S}(M)^{\rm op}}(U_1,U_2) \arrow{r}{\mathsf{1} }\arrow[d,swap,"\mathscr{F}(\, \cdot\, )"] & \Hom_{\mathsf{Grpd}}(\mathsf{1},\mathsf{1})\arrow{d}{\, \mathsf{f}_{U_2\ast}}\\
\Hom_{\mathsf{Grpd}}\big(\mathscr{F}(U_1),\mathscr{F}(U_2)\big)\arrow[ru,Rightarrow, "\mathsf{f}_{U_1 U_2}"] \arrow[r,swap, "\mathsf{f}_{U_1}^\ast"] & \Hom_{\mathsf{Grpd}}\big(\mathsf{1},\mathscr{F}(U_2)\big)
\end{tikzcd}
\]  
This is the same thing as morphisms $\mathsf{f}_{U_1 U_2}(t)\colon t^\ast \, \mathsf{f}_{U_1} \rightarrow \mathsf{f}_{U_2}$ for every morphism $t\colon U_2 \rightarrow U_1$ of $\mathscr{S}(M)$ which have to satisfy the coherence conditions \eqref{Equation1: Definition Transformation} and \eqref{Equation2: Definition Transformation}:
\begin{align}
\label{EQ1:Structure Invariant}
\mathsf{f}_{U_2 U_3}(t_2) \circ t_2^\ast\, \mathsf{f}_{U_1 U_2}(t_1) &= \mathsf{f}_{U_1U_3}(t_1\circ t_2) \circ \Phi_{\mathscr{F}(U_3)\mathscr{F}(U_2)\mathscr{F}(U_1)}(t_1\times t_2) \ , \\[4pt]
\label{EQ2:Structure Invariant}
\mathsf{f}_{U U}(\id_U)\circ \Phi_{\mathscr{F}(U)}(\id_\star) &= \id\,_{\mathsf{f}_U} 
\end{align}
for morphisms $t_2 \colon U_3 \rightarrow U_2$ and $t_1\colon U_2
\rightarrow U_1$. For a sheaf considered as a stack, the maps $\mathsf{f}_{U_1U_2}(t)$ must be identity maps and we reproduce, for example, the definition of an invariant function. 
\end{remark}

\begin{example}
Let $\mathscr{F}=\mathsf{Bun}_G$ be the stack of principal $G$-bundles, and let $M$ be a manifold equipped with an action $\rho\colon \Gamma \rightarrow \text{Diff}(M)$ of a group $\Gamma$ by diffeomorphisms of $M$. We can encode the action into a groupoid $\mathscr{S}(M)$ as in Definition \ref{Def:Invariant element stack} with one object $M$ and morphisms $\{\rho(\gamma)\mid \gamma \in \Gamma \}$. A $G$-bundle $P$ which is invariant under $\mathscr{S}(M)$ comes with gauge transformations $\mit\Theta_\gamma \colon \rho(\gamma)^\ast P \rightarrow P$ satisfying \eqref{EQ1:Structure Invariant} and \eqref{EQ2:Structure Invariant}. This is just a $\Gamma$-equivariant $G$-bundle.
An invariant morphism between two $\Gamma$-equivariant $G$-bundles is then a $\Gamma$-equivariant gauge transformation.    
\end{example}

\begin{definition}
Let $\Sigma $ be a $d-1$-dimensional manifold (with boundary). For every (not necessarily open) interval $I\subset \R$, we say that an element $\mathsf{f} \in \mathscr{F}(\Sigma\times I)$ is \underline{constant} along $I$ if it is invariant under translations in the direction along $I$, i.e. invariant with respect to the groupoid with open subsets of $\Sigma\times I$ as objects and translations along $I$ as morphisms.
\end{definition}

\begin{remark}
We employ a similar definition for manifolds of the form $Y\times I_1\times \cdots \times I_n$.
\end{remark}

\subsection{Geometric cobordism bicategories}\label{Appendix Geometric bicategories}

Inspired by \cite{schommer2011classification} and the sketch of~\cite[Appendix~A]{MonnierHamiltionianAnomalies}, we introduce a bicategory of manifolds equipped with geometric fields. For the definition of a Dirac operator, a metric on the underlying manifold is crucial, whence we cannot assume that the field content is topological. This leads to technical problems in defining 2-morphisms. We make the assumption that the field content is constant near gluing boundaries and use a specific choice of collars to get around these problems.   

We define a bicategory $\mathsf{Cob}^{\mathscr{F}}_{d,d-1,d-2}$ with objects given by quadruples 
\[
\big(M^{d-2},\mathsf{f}^{d-2}, \epsilon_1, \epsilon_2\big)
\]
consisting of a closed $d-2$-dimensional manifold $M^{d-2}$ with $n$ connected components $M^{d-2}_i$, $n$-tuples $\epsilon_{1},\epsilon_{2}\in \R_{>0}^n$ and an element $\mathsf{f}^{d-2}\in \mathscr{F}\big(M^{d-2}\times(-\epsilon_1,\epsilon_1)\times (-\epsilon_2,\epsilon_2)\big)$ which is constant along $(-\epsilon_1,\epsilon_1)\times (-\epsilon_2,\epsilon_2)$. Here we introduced the notation
\begin{align*}
M^{d-2}\times(-\epsilon_1,\epsilon_1)\times (-\epsilon_2,\epsilon_2)= \bigsqcup_{i=1}^n \, M_i^{d-2}\times (-\epsilon_{1,i},\epsilon_{1,i})\times(-\epsilon_{2,i},\epsilon_{2,i}) \ ,
\end{align*} 
which we will continue to use throughout this section.

There are two different kinds of 1-morphisms in $\mathsf{Cob}^{\mathscr{F}}_{d,d-1,d-2}$:
\begin{itemize}
\item[(1a)]
Regular 1-morphisms 
\[
M^{d-1,1}\colon \big(M^{d-2}_-,\mathsf{f}^{d-2}_-, \epsilon_{-1},\epsilon_{-2}\big) \longrightarrow \big(M^{d-2}_+,\mathsf{f}^{d-2}_+,\epsilon_{+1},\epsilon_{+2}\big)
\]
consist of 7-tuples
\[
\big(M^{d-1,1},\varphi^{d-1}_-,\varphi^{d-1}_+,\mathsf{f}^{d-1},{\mit\Theta}^{d-1}_-,{\mit\Theta}^{d-1}_+, \epsilon\big) \ ,
\]
where $M^{d-1,1}$ is a $d - 1$-dimensional manifold with boundary and $n$ connected components together with a decomposition of a collar of its boundary into $N^{d-1}_-$ and $N^{d-1}_+$, 
$\varphi^{d-1}_- \colon M^{d-2}_-\times [0,\epsilon_{-1})\rightarrow N^{d-1}_-$  and $\varphi^{d-1}_+ \colon M^{d-2}_+\times (-\epsilon_{+2},0]\rightarrow N^{d-1}_+$ are diffeomorphisms, $\epsilon \in \R_{>0}^n$, $\mathsf{f}^{d-1}\in \mathscr{F}\big( M^{d-1,1}\times (-\epsilon,\epsilon)\big)$ is constant along $(-\epsilon,\epsilon)$ and\footnote{For this statement to make sense we require that $\epsilon$ is compatible with $\epsilon_{\pm 2}$ on the boundary.} ${\mit\Theta}^{d-1}_\pm \colon \mathsf{f}^{d-2}_\pm\rightarrow \varphi_\pm^\ast \mathsf{f}^{d-1}$ are constant morphisms. Here we use ${\mit\Theta}^{d-1}_\pm$ to implicitly define the structure of a constant object on $N^{d-1}_\pm$. 

\item[(1b)]
Limit 1-morphisms consist of diffeomorphisms $\phi \colon M_-^{d-2}\rightarrow M_+^{d-2}$ together with a morphism ${\mit\Theta} \colon \mathsf{f}^{d-2}_- \rightarrow \phi^\ast \, \mathsf{f}^{d-2}_+$ which is constant along $(-\epsilon_1, \epsilon_1)\times (-\epsilon_2, \epsilon_2)$.\footnote{For this to make sense we require that all $n$-tuples $\epsilon$ are equal.} We refer to diffeomorphisms of this form as `$\mathscr{F}$-diffeomorphisms'.
\end{itemize}

For the composition of regular 1-morphisms we glue the underlying manifolds using their collars, and define the composed field content using covers $U_1\times (-\epsilon,\epsilon)$, $U_2\times (-\epsilon,\epsilon)$ and $U_3\times (-\epsilon,\epsilon)$ constructed from the cover in \eqref{Definition Cover Gluing}, and the descent property of the stack $\mathscr{F}$. Note that we can use the interpretation in terms of mapping cylinders also in this general situation.
Composition of limit 1-morphisms is given by composition of $\Fscr$-diffeomorphisms. The composition of a limit 1-morphism with a regular 1-morphism is given by changing the identification of the collars and ${\mit\Theta}_\pm^{d-1} $ using the limit 1-morphism.

There are also two different kinds of 2-morphisms in $\CobF$:
\begin{itemize}
\item[(2a)]
Regular 2-morphisms (see Figure~\ref{Fig: 2-Morphism}) 
\begin{eqnarray*}
&& M^{d,2} \colon \big(M^{d-1,1}_1,\varphi^{d-1}_{1-},\varphi^{d-1}_{1+},\mathsf{f}^{d-1}_1,{\mit\Theta}^{d-1}_{1-},{\mit\Theta}^{d-1}_{1+}, \epsilon_1\big) \\ && \qquad \qquad \qquad \qquad \qquad \qquad \qquad \qquad \Longrightarrow \big(M^{d-1,1}_2,\varphi^{d-1}_{2-},\varphi^{d-1}_{2+},\mathsf{f}^{d-1}_2,{\mit\Theta}^{d-1}_{2-},{\mit\Theta}^{d-1}_{2+}, \epsilon_2\big) \ ,
\end{eqnarray*}
with $M^{d-1,1}_{i} \colon\big(M^{d-2}_-,\mathsf{f}^{d-2}_-, \epsilon_{-1},\epsilon_{-2}\big) \rightarrow \big(M^{d-2}_+,\mathsf{f}^{d-2}_+, \epsilon_{+1},\epsilon_{+2}\big)$ regular 1-morphisms for $i=1,2$, consist of equivalence classes of 6-tuples 
\[
\big(M^{d,2},\mathsf{f}^d,\varphi^d_-,\varphi^d_+,{\mit\Theta}^d_-,{\mit\Theta}^d_+\big) \ ,
\]
where $M^{d,2}$ is a $d$-dimensional $\langle 2 \rangle $-manifold (see Appendix \ref{Appendix Manifolds with corners..}) whose corners are equipped with a decomposition of a collar of the $0$-boundary into $N^{d}_-$ and $N^{d}_+$ such that the closure of $N^{d}_\pm$ contains the $1$-boundary, $\mathsf{f}^{d}$ is an element of $\mathscr{F}(M^{d,2})$, $\varphi^d_- \colon M^{d-1,1}_1 \times [0,\epsilon_{-1}) \rightarrow N^d_-$ and $\varphi^d_+ \colon M^{d-1,1}_2 \times (-\epsilon_{+2},0] \rightarrow N^d_+$ are diffeomorphisms, and ${\mit\Theta}^d_- \colon \mathsf{f}^{d-1}_1 \rightarrow \varphi^{d\, \ast}_- \, \mathsf{f}^d$ and ${\mit\Theta}^d_+ \colon \mathsf{f}^{d-1}_2 \rightarrow \varphi^{d\, \ast}_+ \, \mathsf{f}^d$ are constant morphisms. All of these structures have to be compatible, in the sense that the diagram~\cite{schommer2011classification}   
\begin{equation}
\begin{footnotesize}
\label{EQ: Compatibility condition 2morphisms}
\begin{tikzcd}
& & M^{d-1,1}_2 \times (-\epsilon_2,0] \ar[dd,"\varphi_+^d"]& & \\ 
& &  & & \\
M^{d-2}_-\times [0,\epsilon_{-1}) \times (-\epsilon_{-2},0] \ar[rr, "{\imath_-}"]\ar[rruu, "{\varphi_{2-}^{d-1}\times \id}"] \ar[rrdd,"{\varphi_{1-}^{d-1}\times -\id}",swap] & & M^{d,2} & &M^{d-2}_+\times  [0,\epsilon_{+1}) \times (-\epsilon_{+2},0] \ar[ll,swap, "{\imath_+}"] \ar[lldd,"{\varphi_{1+}^{d-1}\times -\id}"] \ar[lluu,"{\varphi_{2+}^{d-1}\times \id}",swap] \\ 
& & & & \\
& & M^{d-1,1}_1 \times [0, \epsilon_1) \ar[uu,swap,"\varphi_-^d"] & &
\end{tikzcd}
\end{footnotesize}
\end{equation}
commutes, where $\imath_\pm$ are inclusions. We change the sign of the coordinates corresponding to both intervals in the lower embedding. This induces a diagram of functors in groupoids and we require that all morphisms $\mit\Theta$ are compatible with this diagram.  Note that the collars of the 1-morphisms induce collars for the 1-boundaries which agree by \eqref{EQ: Compatibility condition 2morphisms}. Two such 6-tuples are equivalent if they are $\mathscr{F}$-diffeomorphic relative to half of the collars.

\begin{figure}
\footnotesize
\begin{center}
\begin{overpic}[width=12cm,
scale=1]{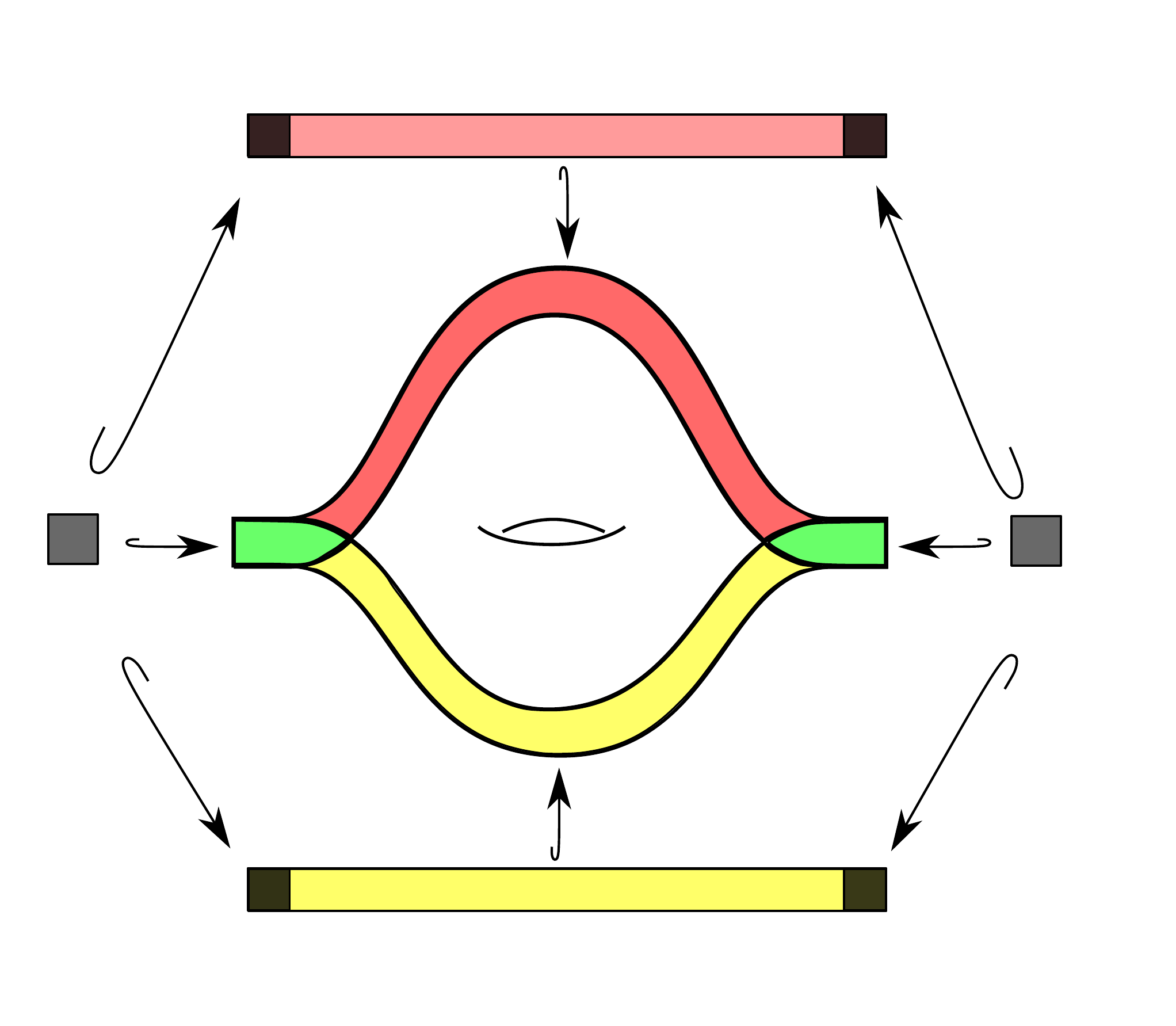}
\put(40,5){$M^{d-1,1}_1 \times [0,\epsilon_1)$}
\put(40,80){$M^{d-1,1}_2 \times (-\epsilon_2,0]$}
\put(-10,35){$M^{d-2}_-\!\times \![0,\epsilon_{-1})\!\times \!(-\epsilon_{-2},0]$}
\put(75,35){$M^{d-2}_+\!\times \![0,\epsilon_{+1})\!\times \!(-\epsilon_{+2},0]$}
\put(46,50){$M^{d,2}$}
\put(49,16){$\varphi^d_-$}
\put(50,69){$\varphi^d_+$}
\put(17,24){$\varphi^{d-1}_{1-}\! \times \! -\id$}
\put(17,56){$\varphi^{d-1}_{2-}\! \times \! \id$}
\put(64,24){$\varphi^{d-1}_{1+}\! \times \! -\id$}
\put(67,56){$\varphi^{d-1}_{2+}\! \times \! \id$}
\end{overpic}
\end{center}
\caption{\small Illustration of a regular 2-morphism in $\CobF$.}
\normalsize
\label{Fig: 2-Morphism}
\end{figure}

\item[(2b)]
Limit 2-morphisms consist of pairs $(\phi, {\mit\Theta})$, where $\phi: M^{d-1,1}_1 \rightarrow M^{d-1,1}_2$ is a diffeomorphism relative to collars together with a morphism ${\mit\Theta} \colon \mathsf{f}^{d-1}_1 \rightarrow \phi^\ast\, \mathsf{f}^{d-1}_2$. There are no non-trivial 2-morphisms between limit 1-morphisms. 
\end{itemize}

We define horizontal and vertical composition of 2-morphisms as follows:
\begin{itemize}
\item[(Ha)]
Horizontal composition of regular 2-morphisms is given by gluing along 1-boundaries.

\item[(Hb)]
Horizontal composition of limit 2-morphisms is defined by ``gluing together'' diffeomorphisms and the descent condition for morphisms in the stack $\mathscr{F}$. This uses the open cover defined in \eqref{Definition Cover Gluing}.

\item[(Hc)]
Horizontal composition of a limit 2-morphism with a regular 2-morphism is defined by the attachment of a mapping cylinder to the 1-boundary. 

\item[(Va)]
Vertical composition of limit 2-morphisms is given by composition of diffeomorphisms, pullback and composition of morphisms in the stack $\mathscr{F}$. 

\item[(Vb)]
Vertical composition of regular 2-morphisms is a little bit more complicated. Simple gluing of $M^{d,2}_1$ and $M^{d,2}_2$ along a common 1-morphism does not give a 2-morphism again, since the resulting 1-boundaries are ``too long". In the context of topological field theories a solution to this problem \cite{schommer2011classification} consists in picking once and for all a diffeomorphism $[0,2] \rightarrow [0,1]$. We are unable to use this trick here, since the stack we consider in this paper contains a metric. Instead, we will use collars to circumvent this problem. 
Given two regular 2-morphisms $M^{d,2}_1 \colon M^{d-1,1}_1 \Rightarrow M^{d-1,1}_2$ and $M^{d,2}_2 \colon M^{d-1,1}_2 \Rightarrow M^{d-1,1}_3$, we define 
\begin{align*}
&\tilde{M}^{d,2}_1 = M^{d,2}_1 \setminus \varphi^d_{1+}\big(M^{d-1,1}_2\times(-\tfrac{\epsilon}{2},0] \big) \ ,\\[4pt]
&\tilde{M}^{d,2}_2 = M^{d,2}_2 \setminus \varphi^d_{2-}\big(M^{d-1,1}_2\times [0,\tfrac{\epsilon}{2}) \big) \ .
\end{align*} 
We define the vertical composition $M^{d,2}_{2}\circ M^{d,2}_{1}$ to be the manifold resulting from gluing $\tilde{M}^{d,2}_1$ and $\tilde{M}^{d,2}_2$ along $M^{d-1,1}_2$. We have to equip this manifold with appropriate collars: 
Write $\tilde{N}^d_1= N^d_1 \cap \varphi^d_{1+}\big(M^{d-1,1}_2\times(-\tfrac{\epsilon}{2},0]\big)$, where $N^d_1$ is the incoming collar of $M^{d,2}_{1}$. We set  
\begin{align*}
C = \big( \varphi^d_{2-} \circ (\id \times \, \cdot \, +\epsilon) \circ (\varphi^d_{1+})^{-1}\big)\big(\tilde{N}^d_1\big) \ .
\end{align*}
We can glue $C$ to the remainder of the collar of $M^{d-1,1}_1$ to get
a new collar; this is only possible because we assumed that the
corresponding elements of $\mathscr{F}$ are constant along the
collars. It is possible to define a new collar for $M^{d-1,1}_3$ in
the same way.

\item[(Vc)]
Vertical composition of a limit 2-morphism with a regular 2-morphism
is defined by changing the identification of collars as in Section~\ref{sec:catQFT}.
\end{itemize}

This completes the definition of the geometric cobordism bicategory $\CobF$.
The disjoint union of manifolds makes $\CobF$ into a symmetric monoidal bicategory.

\subsection{Anomalies and extended quantum field theories\label{sec:anomaliesextended}}

We will now give a general description of anomalies in the framework of functorial quantum field theory. The point of view we take in this paper is that anomalies in $d-1$ dimensions can be described by invertible extended field theories in $d$ dimensions \cite{FreedAnomalies, BoundaryConditionsTQFT, MonnierHamiltionianAnomalies}. This is naturally formulated in the language of symmetric monoidal bicategories (or $(\infty,d)$-categories, see \cite{BoundaryConditionsTQFT}). The most important concepts for the following treatment are summarized in Appendix \ref{Appendix bicategories}; a detailed introduction can be found in~\cite[Chapter~2]{schommer2011classification}.

\begin{definition}
A \underline{$d$-dimensional extended functorial quantum field theory} with background fields $\mathscr{F}$ (or \underline{extended quantum field theory} for short) is a symmetric monoidal 2-functor 
$$
\mathcal{A} \colon \mathsf{Cob}^{\mathscr{F}}_{d,d-1,d-2}\longrightarrow \Tvs
$$
from a geometric cobordism bicategory to the 2-category of 2-vector spaces.
\end{definition} 
\begin{remark}
The definition of the symmetric monoidal bicategory $\Tvs$ is given in Example~\ref{Def:2Vect}. Although the 2-category of 2-Hilbert spaces would be more natural for some physical applications, we choose here to work with $\Tvs$ since it reduces some of the technical complexity while still capturing all essential features. 

Furthermore, the KV 2-vector spaces used in this paper are categorifications of finite dimensional vector spaces. They are enough for our purpose. However, more complicated extended quantum field theories require more elaborate target 2-categories corresponding to infinite dimensional 2-vector spaces.
\end{remark}
The Deligne product $\boxtimes $ induces a tensor product for suitable extended field theories. 

\begin{definition}
An extended quantum field theory $\mathcal{A}$ is
\underline{invertible} if there exists an extended quantum field
theory $\mathcal{A}^{-1}$ and a natural symmetric monoidal 2-isomorphism from $\mathcal{A} \boxtimes \mathcal{A}^{-1}$ to the trivial theory $\mbf1\colon \mathsf{Cob}^{\mathscr{F}}_{d,d-1,d-2}\rightarrow \Tvs $ sending every object to the monoidal unit $\fvs$ of $\Tvs$, every 1-morphism to the identity functor on $\fvs$, and every 2-morphism to the identity natural transformation.
\end{definition}

Defining the truncation of $\mathsf{Cob}^{\mathscr{F}}_{d,d-1,d-2}$ to be the sub-bicategory $\mathsf{trCob}^{\mathscr{F}}_{d,d-1,d-2}$ containing only invertible 2-morphisms, and $\mathsf{tr}\mathcal{A}$ the restriction of the 2-functor $\Aa$ to $\mathsf{trCob}^{\mathscr{F}}_{d,d-1,d-2}$, we can now give a precise definition of a quantum field theory with anomaly.

\begin{definition}
\label{Definition Anommalous field theory}
An \underline{anomalous quantum field theory} with anomaly described
by an invertible extended quantum field theory $\mathcal{A}\colon
\mathsf{Cob}^{\mathscr{F}}_{d,d-1,d-2}\rightarrow \Tvs$ is a natural symmetric monoidal 2-transformation 
\begin{align*}
A \colon \mbf1 \Longrightarrow \mathsf{tr}\mathcal{A} \ .
\end{align*} 
We call $\mathcal{A}$ the \underline{anomaly quantum field theory} describing the anomaly of $A$.
\end{definition} 
This definition is a special case of the relative quantum field theories of~\cite{RelativeQFT}. These anomaly field theories are often topological field theories. For some applications, such as to two-dimensional rational conformal field theories or to six-dimensional $(2,0)$ superconformal field theories, it is necessary to consider also non-invertible quantum field theories to capture the feature that the partition function is valued in a vector space of dimension $>1$~\cite{MonnierHamiltionianAnomalies}. 
 
We can recover the anomalous partition function of Definition \ref{Definition path Integral} by restricting $\mathcal{A}$ to a functor 
$$
\mathcal{Z}:= \mathcal{A}\big|_\emptyset \colon \mathsf{Cob}^{\mathscr{F}}_{d,d-1}\cong \End_{\mathsf{Cob}^{\mathscr{F}}_{d,d-1,d-2}} (\emptyset ) \longrightarrow \End_\Tvs (\fvs) \cong \fvs
$$ 
and $A$ to a natural transformation $Z:=A\big|_\emptyset \colon \mbf1 \Rightarrow \mathsf{tr} \mathcal{A}\big|_\emptyset $.

Unpacking Definition \ref{Def:2sym transformation} we get for every
closed $d-2$-dimensional manifold $M^{d-2}$ a $\C$-linear functor
$A(M^{d-2}) \colon \fvs=\mbf1(M^{d-2}) \rightarrow
\mathcal{A}(M^{d-2})$, which can be (non-canonically) identified with
a complex vector space in $\fvs$, and for all pairs $(M^{d-2}_-,M^{d-2}_+)$ a natural transformation 
\[
\begin{tikzcd}
\Hom_{\mathsf{Cob}^{\mathscr{F}}_{d,d-1,d-2}} \big(M^{d-2}_- , M^{d-2}_+\big) \ar[dd,swap,"\mathcal{A}"] \ar{rr}{\mbf1}& &\Hom_\Tvs (\fvs,\fvs)\ar{dd}{A(M^{d-2}_+)_\ast} \\ 
 & & \\
 \Hom_\Tvs \big(\mathcal{A}( M^{d-2}_-) ,\mathcal{A} (M^{d-2}_+)\big) \ar[uurr, Rightarrow, "A",shorten <= 2em, shorten >= 2em] \ar[rr,swap,"A(M^{d-2}_-)^\ast"] & & \Hom_\Tvs \big(\fvs, \mathcal{A}(M^{d-2}_+)\big)  
\end{tikzcd}
\]
This consists of a natural transformation
\[ 
A\big(M^{d-1,1}\big): \mathcal{A}\big(M^{d-1,1}\big)\circ A\big(M^{d-2}_-\big) \Longrightarrow A\big(M^{d-2}_+\big) 
\] 
for every 1-morphism $M^{d-1,1} \colon M^{d-2}_- \rightarrow
M^{d-2}_+$. The definition further includes a modification $\mit\Pi_A$
consisting of natural isomorphisms 
\[
{\mit\Pi}_A\big(M^{d-2}_-,M^{d-2}_+\big)\colon \chi_\mathcal{A}\circ A\big(M^{d-2}_-\big)\boxtimes A\big(M^{d-2}_+\big)\Longrightarrow A\big(M^{d-2}_- \sqcup M^{d-2}_+\big) \circ \lambda_{\mathsf{Cob}^{\mathscr{F}}_{d,d-1,d-2}}
\]
and a natural isomorphism
\[
M_A^{-1}\colon A(\emptyset)\Longrightarrow \iota_\mathcal{A} \ .
\]  
All of these structures have to satisfy appropriate compatibility conditions, which we summarize in
\begin{proposition}
\label{Lemma Compatibility conditions}
For every anomalous quantum field theory $A$ with anomaly $\mathcal{A}$, there are identities
\begin{align}
\label{EQ1: Lemma Compatibility conditions}
A\big(M^{d-1,1}_2\big) \circ A\big(M^{d-1,1}_1\big) &= A\big(M^{d-1,1}_2 \circ M^{d-1,1}_1\big) \circ \Phi_{\mathcal{A}}\big(\Aa(M^{d-1,1}_2) \circ \Aa(M^{d-1,1}_1)\big) \ , \\[4pt]
\label{EQ2: Lemma Compatibility conditions}
A\big(\id_{M^{d-2}}\big)\circ\Phi_\Aa\big(\Aa(\id_{M^{d-2}}) \big)&= \id_{A(M^{d-2})} \ , \\[4pt]
\label{EQ3: Lemma Compatibility conditions}
A\big(M^{d-1,1}_1\big) &= A\big(M^{d-1,1}_2\big)\circ \big(\mathcal{A}(f)\bullet \id_{A(M^{d-2}_-)}\big) \ , 
\end{align}
for some 2-isomorphism $f\colon M^{d-1,1}_1 \Rightarrow M^{d-1,1}_2$, together with the following commutative diagrams wherein we suppress obvious structure 2-morphisms and identity 2-morphisms:
\begin{equation}
\label{EQ3': Lemma Compatibility conditions}
\begin{footnotesize}
\begin{tikzcd}
\mathcal{A}(M^{d-1,1}_1 \sqcup M^{d-1,1}_2)\bullet \chi_\mathcal{A} \bullet A(M^{d-2}_{1-})\boxtimes A(M^{d-2}_{2-}) \ar[rrr, Rightarrow,"{{\mit\Pi}_A(M^{d-2}_{1-}, M^{d-2}_{2-})}"]
 \ar[dd,swap,Rightarrow, "A{(M^{d-1,1}_1 \sqcup M^{d-1,1}_2)}"] 
& &  &\mathcal{A}(M^{d-1,1}_1 \sqcup M^{d-1,1}_2)\bullet A(M^{d-2}_{1-}\sqcup M^{d-2}_{2-})
\ar[dd,Rightarrow, "{A(M^{d-1,1}_1) \boxtimes A(M^{d-1,1}_2)}"] \\ 
 & & &
\\
\chi_\mathcal{A} \bullet A(M^{d-2}_{1+})\boxtimes A(M^{d-2}_{2+})
\ar[rrr,swap,Rightarrow, "{{\mit\Pi}_A(M^{d-2}_{1+},M^{d-2}_{2+})}"] 
& & & A(M^{d-2}_{1+}\sqcup M^{d-2}_{2+})
\end{tikzcd}
\end{footnotesize}
\end{equation}

\begin{equation}
\label{EQ4: Lemma Compatibility conditions}
\begin{scriptsize}
\begin{tikzcd}
 \mathcal{A}(\alpha_{\CobF})\!\bullet\! A((M^{d-2}_1\sqcup M^{d-2}_2)\sqcup M^{d-2}_3)
 \ar[r, Rightarrow,"{A(\alpha_{\CobF})}"]& 
 A((M^{d-2}_1\sqcup M^{d-2}_2)\sqcup M^{d-2}_3)
  \\
   & \\
\mathcal{A}(\alpha_{\CobF})\!\bullet\! \chi_\mathcal{A} \!\bullet\! ( A( M^{d-2}_1\sqcup M^{d-2}_2)\boxtimes A( M^{d-2}_3) )
\ar[uu,Rightarrow,"{{\mit\Pi}_A(M^{d-2}_1\sqcup M^{d-2}_2, M^{d-2}_3 )} "] &  \chi_\mathcal{A} \!\bullet\! (A(M^{d-2}_1)\boxtimes A(M^{d-2}_2 \sqcup M^{d-2}_3 ) )\ar[uu,swap,Rightarrow,"{{\mit\Pi}_A( M^{d-2}_1,M^{d-2}_2 \sqcup M^{d-2}_3) }"] \\
   & \\
\mathcal{A}(\alpha_{\CobF})\!\bullet\! \chi_\mathcal{A} \!\bullet\!  ((\chi_\mathcal{A}\!\bullet\! A(M^{d-2}_1)\boxtimes A(M^{d-2}_2))\boxtimes A(M^{d-2}_3)) \ar[uu,Rightarrow,"{{\mit\Pi}_A(M^{d-2}_1,M^{d-2}_2)} "] \ar[r,swap,Rightarrow, "{{\mit\Omega}_\mathcal{A} }"] & \chi_\mathcal{A} \!\bullet\!  ( A(M^{d-2}_1)\boxtimes (\chi_\mathcal{A}\!\bullet\! A( M^{d-2}_2)\boxtimes A(M^{d-2}_3) )) \ar[uu,swap,Rightarrow, "{{\mit\Pi}_A( M^{d-2}_2,M^{d-2}_3) }"]
\end{tikzcd}
\end{scriptsize}
\end{equation}

\begin{equation}
\label{EQ5: Lemma Compatibility conditions}
\begin{tikzcd}
 \mathcal{A}\big(\lambda_\CobF\big)\bullet A(\emptyset \sqcup M^{d-2}) \arrow[rr,Rightarrow,"A(\lambda_\CobF)"] & & A\big(M^{d-2}\big)\ar[d, Rightarrow,"{{\mit\Gamma}_\mathcal{A}^{-1} }"] \\
\mathcal{A}\big(\lambda_\CobF\big)\bullet \chi_\mathcal{A} \bullet A(\emptyset )\boxtimes A\big(M^{d-2}\big) \ar[u,Rightarrow, "{{\mit\Pi}_A(\emptyset,M^{d-2}) }"]\arrow[rr,swap, Rightarrow,"M^{-1}_A\boxtimes \id"] & &  \iota_\mathcal{A} \boxtimes A\big(M^{d-2}\big)
\end{tikzcd}
\end{equation}

\begin{equation}
\label{EQ6: Lemma Compatibility conditions}
\begin{tikzcd}
A\big(M^{d-2} \sqcup \emptyset\big) \arrow[rr,Rightarrow,"A(\rho_\CobF)"] & & \mathcal{A}\big(\rho_\CobF\big)\bullet A\big(M^{d-2}\big)\ar[d,Rightarrow,"{{\mit\Delta}_\mathcal{A} }"] \\
\chi_\mathcal{A} \bullet A\big(M^{d-2}\big) \boxtimes A(\emptyset )\ar[u,Rightarrow,"{{\mit\Pi}_A(M^{d-2},\emptyset)}"]\arrow[rr,swap, Rightarrow,"\id \boxtimes M^{-1}_A"] & & A\big(M^{d-2}\big)\boxtimes \iota_\mathcal{A}  
\end{tikzcd}
\end{equation}
and
\begin{equation}
\label{EQ7: Lemma Compatibility conditions}
\begin{small}
\begin{tikzcd}
\mathcal{A}( \beta_{\CobF})\bullet\chi_\mathcal{A}\bullet A(M^{d-2}_1)\boxtimes A(M^{d-2}_2) \ar[d,swap,Rightarrow,"{{\mit\Upsilon}^{-1}_\mathcal{A} }"] \ar[rrr,Rightarrow,"{{\mit\Pi}_A(M^{d-2}_1, M^{d-2}_2)}"] & & & \mathcal{A}( \beta_{\CobF})\bullet A(M^{d-2}_1\sqcup M^{d-2}_2)\ar[d,Rightarrow, "{A(\beta_{\CobF})}"]\\
\chi_\mathcal{A}\bullet  A(M^{d-2}_2)\boxtimes A(M^{d-2}_1) \ar[rrr,swap,Rightarrow, "{{\mit\Pi}_A(M^{d-2}_2,M^{d-2}_1)}"] & & & A(M^{d-2}_2\sqcup M^{d-2}_1)
\end{tikzcd}
\end{small}
\end{equation}
\end{proposition}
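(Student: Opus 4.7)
The plan is to unpack Definition \ref{Def:2sym transformation} of a natural symmetric monoidal 2-transformation applied to the specific case $A\colon \mathbf{1}\Rightarrow \mathsf{tr}\mathcal{A}$, and verify that each displayed identity or diagram is exactly one of the defining axioms (or an immediate consequence thereof) once specialized to the geometric cobordism bicategory $\CobF$ and the trivial source 2-functor $\mathbf{1}$. In particular, because $\mathbf{1}$ sends every 1-morphism to the identity functor on $\fvs$ and every 2-morphism to the identity natural transformation, the target side of the naturality squares collapses and the coherence data degenerate: the compositor $\Phi_{\mathbf{1}}$, the unitor and the symmetry structure 2-morphisms of $\mathbf{1}$ all become identities. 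This is the mechanism that turns abstract coherence axioms into the specific identities claimed.

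First I would handle \eqref{EQ1: Lemma Compatibility conditions}--\eqref{EQ3: Lemma Compatibility conditions}. Equation \eqref{EQ1: Lemma Compatibility conditions} is the compatibility of $A$ with horizontal composition of 1-morphisms as part of the definition of a pseudonatural transformation, combined with the fact that $\mathbf{1}$'s compositor is trivial; the only remaining non-identity piece is the compositor $\Phi_\mathcal{A}$ of $\mathcal{A}$, yielding the stated formula. Equation \eqref{EQ2: Lemma Compatibility conditions} is the unitality axiom for a pseudonatural transformation at identity 1-morphisms, again after collapsing $\mathbf{1}$'s unitor. Equation \eqref{EQ3: Lemma Compatibility conditions} is the 2-naturality of $A$ with respect to a 2-isomorphism $f$ between 1-morphisms in the target direction: the 2-transformation must intertwine $A(M^{d-1,1}_1)$ and $A(M^{d-1,1}_2)$ via $\mathcal{A}(f)$, horizontally composed with the identity on $A(M^{d-2}_-)$.

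Next I would turn to the coherence diagrams \eqref{EQ3': Lemma Compatibility conditions}--\eqref{EQ7: Lemma Compatibility conditions}. Diagram \eqref{EQ3': Lemma Compatibility conditions} is exactly the naturality condition for the modification $\mit\Pi_A$ with respect to horizontal composition with 1-morphisms in each factor. Diagrams \eqref{EQ4: Lemma Compatibility conditions}, \eqref{EQ5: Lemma Compatibility conditions}, \eqref{EQ6: Lemma Compatibility conditions}, \eqref{EQ7: Lemma Compatibility conditions} are the four standard axioms of a symmetric monoidal 2-transformation: compatibility with the associator $\alpha_{\CobF}$, the left unitor $\lambda_{\CobF}$, the right unitor $\rho_{\CobF}$, and the braiding $\beta_{\CobF}$, respectively, with structure modifications $\mit\Omega_\mathcal{A}, \mit\Gamma_\mathcal{A}, \mit\Delta_\mathcal{A}, \mit\Upsilon_\mathcal{A}$ of the symmetric monoidal 2-functor $\mathcal{A}$ appearing because the corresponding structure modifications for $\mathbf{1}$ are identities. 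In each case the axiom from Appendix \ref{Appendix bicategories}, written out for the pair $(\mathbf{1},\mathsf{tr}\mathcal{A})$, reduces verbatim to the diagram stated here.

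The main obstacle is purely bookkeeping: one must carefully match the direction of every structure 2-morphism (particularly the inverses in $\mit\Gamma_\mathcal{A}^{-1}$ and $\mit\Upsilon_\mathcal{A}^{-1}$) and keep track of which instances of $\chi_\mathcal{A}$, $\iota_\mathcal{A}$ and the associators/unitors get suppressed when $\mathbf{1}$'s analogues trivialize. To keep the argument manageable, I would prove \eqref{EQ1: Lemma Compatibility conditions}--\eqref{EQ3: Lemma Compatibility conditions} first as a lemma, then treat the four symmetric monoidal axioms in the order $(\alpha,\lambda,\rho,\beta)$, referring in each case to the explicit diagram in Definition \ref{Def:2sym transformation}. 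No additional structure of $\CobF$ or of the target $\Tvs$ is required beyond the general bicategorical definitions; the content of the proposition is that the abstract axioms, once specialized, take the concrete form listed.
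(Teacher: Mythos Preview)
Your proposal is correct and follows essentially the same approach as the paper: both arguments simply unpack the defining data and coherence axioms of a natural symmetric monoidal 2-transformation (Definition~\ref{Def:2sym transformation}) specialized to $A\colon \mbf1\Rightarrow \mathsf{tr}\mathcal{A}$, identifying \eqref{EQ1: Lemma Compatibility conditions}--\eqref{EQ2: Lemma Compatibility conditions} with the 2-transformation coherences \eqref{Equation1: Definition Transformation}--\eqref{Equation2: Definition Transformation}, \eqref{EQ3: Lemma Compatibility conditions} with naturality, \eqref{EQ3': Lemma Compatibility conditions} with the modification axiom \eqref{EQ: Modification} for ${\mit\Pi}_A$, and \eqref{EQ4: Lemma Compatibility conditions}--\eqref{EQ7: Lemma Compatibility conditions} with \eqref{EQ:1 Definition s.m. transformation}--\eqref{EQ:4 Definition s.m. transformation}. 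Your additional remarks about how the triviality of $\mbf1$ collapses its structure 2-morphisms make explicit what the paper leaves implicit, but the logical content is the same.
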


\begin{proof}
Writing out the coherence diagrams \eqref{Equation1: Definition
  Transformation} and \eqref{Equation2: Definition Transformation} for
$A$ implies \eqref{EQ1: Lemma Compatibility conditions} and
\eqref{EQ2: Lemma Compatibility conditions}. The identity \eqref{EQ3:
  Lemma Compatibility conditions} is the naturality condition for the
natural symmetric monoidal 2-transformation $A$. The diagram \eqref{EQ3': Lemma Compatibility conditions} follows from the diagram \eqref{EQ: Modification} for the modification ${\mit\Pi}_A$.
The diagrams \eqref{EQ4: Lemma Compatibility conditions}--\eqref{EQ7: Lemma Compatibility conditions} follow from writing out the coherence conditions \eqref{EQ:1 Definition s.m. transformation}--\eqref{EQ:4 Definition s.m. transformation} for $A$.
\end{proof}

\begin{remark}
These conditions should be understood as a projective (or twisted) version of the
definition of a symmetric monoidal functor. For this reason we have
drawn the diagrams \eqref{EQ4: Lemma Compatibility
  conditions}--\eqref{EQ7: Lemma Compatibility conditions} in close
analogy to the diagrams in the definition of a braided monoidal
functor. 
\end{remark}

An anomalous quantum field theory with trivial anomaly $A\colon \mbf1
\Rightarrow \mbf1$ is a $d-1$-dimensional quantum field theory in the sense of Definition~\ref{Definition: Field theory}: We can canonically identify the functor $A(M^{d-2})\colon \fvs\rightarrow \fvs$ with the vector space $A(M^{d-2})(\C )$ and the natural transformation $A(M^{d-1,1})\colon \text{id}_{\fvs}\circ A(M^{d-2}_-)\Rightarrow A(M^{d-2}_+)$ with a linear map $A(M^{d-1,1}) \colon  A(M^{d-2}_-)(\C)\rightarrow A(M^{d-2}_+)(\C)$. The compatibility conditions summarised by Proposition~\ref{Lemma Compatibility conditions} then imply that the vector spaces and linear maps defined in this way form a quantum field theory.

\subsection{Projective anomaly actions\label{sec:projreps}}

Following \cite{MonnierHamiltionianAnomalies,BoundaryConditionsTQFT}
we describe how the extended quantum field theory encodes the
projective action on the state space of an anomalous field theory $A$ with anomaly
$\Aa$. We fix an object $M^{d-2}\in \mathrm{Obj}\big(\CobF \big)$. The limit 1-automorphisms of $M^{d-2}$ form the group of physical symmetries $\Sym(M^{d-2})$. Every $\phi \in \Sym(M^{d-2})$ gives rise to a $\C$-linear functor $\mathcal{A}(\phi) \colon \mathcal{A}(M^{d-2})\rightarrow \mathcal{A}(M^{d-2}) $. Choosing a non-canonical equivalence $\chi \colon \mathcal{A}(M^{d-2})\rightarrow \fvs$ identifies $\mathcal{A}(\phi)$ as a functor which takes the tensor product with a one-dimensional vector space $L_{\chi,\phi}$. The structure of $\mathcal{A}$ defines an isomorphism 
\begin{align*}
\alpha_{\chi, \phi_1 , \phi_2} \colon L_{\chi,\phi_1 } \otimes L_{\chi,\phi_2}  \longrightarrow L_{\chi,\phi_2 \circ \phi_1} \ .
\end{align*}
If we furthermore pick an isomorphism $\varphi \colon L_{\chi, \phi} \rightarrow \C$ for every $\phi \in \Sym(M^{d-2})$ we get a family of linear isomorphisms
\begin{align}\label{eq:2cocycledef}
\alpha_{\chi,\varphi, \phi_1 , \phi_2} \colon \C \longrightarrow \C \ .
\end{align}  
We will show later on using abstract arguments that this is a 2-cocycle for the group $\Sym(M^{d-2})$ whose group cohomology class is independent of the chosen equivalence $\chi$ and isomorphisms $\varphi$; a concrete proof can be found in \cite{MonnierHamiltionianAnomalies}. 

This cocycle describes the projective action of $\Sym(M^{d-2})$ on the space of quantum states of the theory $A(M^{d-2})$ as follows:
Let $A: \mbf1 \Rightarrow \mathsf{tr}\mathcal{A}$ be an anomalous quantum field theory with anomaly $\mathcal{A}$.
We use the equivalence $\chi$ chosen above to identify $A(M^{d-2})\colon \fvs \rightarrow \mathcal{A}(M^{d-2})$ with a vector space $A_\chi(M^{d-2})$.  
From $A$ we get a natural transformation $A(\phi): \mathcal{A}(\phi)\circ A(M^{d-2})\Rightarrow A(M^{d-2})$, which by horizontal composition with the identity natural transformation of $\chi$ induces a linear map
\begin{align*}
A_\chi(\phi) \colon A_\chi\big(M^{d-2}\big)\otimes L_{\chi, \phi} \longrightarrow A_\chi \big(M^{d-2}\big) \ .
\end{align*} 
By precomposing with the isomorphisms $\varphi^{-1} $ we get a projective representation 
\[
\rho_\varphi \colon \Sym\big(M^{d-2}\big) \longrightarrow \mathrm{End}_\C \big(A_\chi(M^{d-2})\big) \ .
\]
We cannot say anything about the structure of this projective representation in general, but we can describe the failure of the composition law explicitly in terms of $\mathcal{A}$:
\begin{align*}
\rho_\varphi (\phi_2) \circ \rho_\varphi (\phi_1)= \alpha_{\chi ,\varphi, \phi_1, \phi_2} \ \rho_\varphi (\phi_2\circ \phi_1) 
\end{align*} 

We say that the quantum field theory $A$ is anomaly-free on $M^{d-2}$ if there is a choice of $\chi$ and $\varphi$ such that the corresponding projective representation $\rho_\varphi$ reduces to an honest representation. This is only possible if the corresponding cohomology class of $\alpha_{\chi ,\varphi, \phi_1, \phi_2}$ is trivial. 

More generally, we can build a projective representation of the groupoid $\Sym \CobF$ of symmetries having the same objects as $\CobF$ and all limit 1-morphisms as morphisms. For this, we first need to recall the notion of a groupoid 2-cocycle.
\begin{definition}
A \underline{2-cocycle} of a groupoid $\mathscr{G}$ with values in $\fvs$ is a 2-functor $\alpha \colon \mathscr{G}\rightarrow \mathsf{BLine}_\C$, where we consider $\mathscr{G}$ as a 2-category with trivial 2-morphisms and $\mathsf{BLine}_\C$ is the 2-category with one object, and the symmetric monoidal category $\mathsf{Line}_\C$ of complex lines and linear isomorphisms as endomorphisms.  
\end{definition}
 
\begin{remark}\label{rem:2cocycle}
Let us spell out explicitly some details of this definition:
\begin{itemize}
\item[(a)]
We can pick an equivalence between $\mathsf{Line}_\C$ and
$\mathsf{B}\C^\times = \C \,/\!\! /\, \C^\times $ by choosing for every complex line $L$ an isomorphism $\chi \colon L \rightarrow \C$; the inverse of this equivalence is the embedding of  $\C \,/\!\! /\, \C ^\times $ into $\mathsf{Line}_\C$. This induces a 2-functor $\alpha:\Gscr \rightarrow \mathsf{B}^2\C^\times$. Writing out Definition \ref{Definition Morphism Bicategory} we get for every pair $(g,g')\in \mathrm{Hom}_\mathscr{G}(G_1,G_2)\times \mathrm{Hom}_\Gscr(G_2,G_3)$ a non-zero complex number $\alpha_{g,g'}$ such that 
\begin{align*}
\alpha_{g_3\circ g_2, g_1} \, \alpha_{g_3,g_2} = \alpha_{g_3, g_2\circ g_1} \,  \alpha_{g_2,g_1} \ , 
\end{align*}
for all composable morphisms $g_1,g_2,g_3$, and 
\begin{align*}
\alpha_{\id_{\sft(g)},g}=\alpha_{\id_{\sft(g)},\id_{\sft(g)}}=\alpha_{g,\id_{\sfs(g)}} \ . 
\end{align*}
Note that the 2-morphism $\alpha_1 \colon \alpha(\id)\Rightarrow \id$ is completely fixed by the coherence condition \eqref{EQ2: Definition 2Functor} and takes the value $\alpha_{\id,\id}^{-1}$.

\item[(b)]
The data contained in a natural 2-transformation $\sigma :\alpha\Rightarrow\alpha'$ between two 2-cocycles is given by a collection $\sigma_g \in \C^\times$ for all morphisms $g$ in $\mathscr{G}$ such that
\begin{align*}
\sigma_{g_2\circ g_1} \ \alpha'_{g_2,g_1}= \alpha_{g_2,g_1} \ \sigma_{g_1} \, \sigma_{g_2} 
\end{align*}
for all composable morphisms $g_1,g_2$. This is the coherence condition \eqref{Equation1: Definition Transformation} which also implies \eqref{Equation2: Definition Transformation}.
We see that natural 2-transformations restrict to the usual coboundaries on endomorphisms of an object.
This immediately implies that the 2-cocycles \eqref{eq:2cocycledef} are well-defined up to coboundaries. To see this we pick two different 2-equivalences $\chi_1,\chi_2 \colon \mathsf{BLine}_\C \rightarrow \mathsf{B}^2\C^\times $ which both have the embedding $i\colon \sfB^2\C^\times\rightarrow \mathsf{BLine}_\C$ as inverse. We then get a chain of natural 2-transformations
\begin{align*}
\chi_1 \Longrightarrow \chi_2 \circ i\circ \chi_1 \Longrightarrow \chi_2
\end{align*}   
which implies that the 2-cocycles are independent of the choice of $\chi$ up to coboundary. 

\item[(c)]
The data contained in a modification $\theta:\sigma\Rrightarrow\sigma'$ between two natural 2-transformations is an assignment of an element $\theta_G \in \C^\times $ to every $G\in \mathrm{Obj}(\mathscr{G})$ such that
\begin{align*}
\theta_{\sft(g)} \ \sigma_g = \sigma'_g \ \theta_{\sfs(g)} \ ,
\end{align*}
which is the condition \eqref{EQ: Modification}.
\end{itemize}
\end{remark} 

Having defined 2-cocycles for groupoids we can now define projective representations (see e.g.~\cite[Section 2.3.1]{TwistedDWandGerbs}).

\begin{definition}
A \underline{projective representation} $\rho $ of a groupoid $\mathscr{G}$ twisted by a 2-cocycle $\alpha\colon \mathscr{G} \rightarrow \sfB^2\C^\times$ consists of the following data:
\begin{itemize}
\item[(a)]
A complex vector space $V_G$ for all $G\in \mathrm{Obj}(\mathscr{G})$.

\item[(b)]
A linear map $\rho (g)\colon V_{\sfs(g)}\rightarrow V_{\sft(g)}$ for each morphism $g$ of $\Gscr$ such that 
\begin{align*}
\rho(g_2) \circ \rho(g_1) = \alpha_{g_2,g_1} \ \rho (g_2 \circ g_1)
\end{align*}
for all composable morphisms $g_1,g_2$.
\end{itemize}
\end{definition}

In this definition we work with cocycles valued in $\sfB^2\C^\times$. A similar but slightly more complicated definition using cocycles with target $\mathsf{BLine}_\C$ can be deduced from

\begin{proposition}\label{prop:projrep}
A projective groupoid representation with 2-cocycle $\alpha\colon \mathscr{G}\rightarrow \sfB^2\C^\times \subset \Tvs$ is the same as a natural 2-transformation $\mbf1 \Rightarrow \alpha$, where $\alpha$ is considered as a 2-functor to $\Tvs$.\footnote{This is the same thing as a higher fixed point for the representation $\alpha $ of $\mathscr{G}$.}  
\end{proposition}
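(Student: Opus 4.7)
The plan is to unpack the definition of a natural 2-transformation $\sigma \colon \mbf1 \Rightarrow \alpha$ between 2-functors $\mathscr{G} \to \Tvs$ and observe that the data collapses dramatically: both $\mbf1$ and $\alpha$ send every object of $\mathscr{G}$ to $\fvs$ and every 1-morphism to $\id_\fvs$, so the only non-trivial structure in $\alpha$ (after composing with the embedding $\sfB^2\C^\times \hookrightarrow \Tvs$) is the compositor $\Phi_\alpha(g_1,g_2)$, which by Remark~\ref{rem:2cocycle}(a) is the scalar $\alpha_{g_2,g_1} \in \C^\times$ acting as a 2-automorphism of $\id_\fvs$.

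First I would extract the object-level data: for every $G \in \mathrm{Obj}(\mathscr{G})$ the component $\sigma_G$ is a $\C$-linear functor $\fvs \to \fvs$, and every such functor is determined up to canonical natural isomorphism by the vector space $V_G := \sigma_G(\C)$, via $\sigma_G \cong V_G \otimes (-)$. Next, for each morphism $g \colon G_1 \to G_2$ the 2-transformation supplies a natural transformation $\sigma_g \colon \alpha(g)\circ \sigma_{G_1} \Rightarrow \sigma_{G_2}\circ \mbf1(g)$, which because $\alpha(g) = \mbf1(g) = \id_\fvs$ reduces to a natural transformation $V_{G_1}\otimes(-) \Rightarrow V_{G_2}\otimes(-)$; evaluation at $\C$ identifies this with a single linear map $\rho(g) \colon V_{G_1} \to V_{G_2}$.

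I would then feed these identifications into the first coherence condition of a 2-transformation applied to a composable pair $(g_1,g_2)$. The left-hand side is $\sigma_{g_2\circ g_1}$, whose value on $\C$ is $\rho(g_2\circ g_1)$, while the right-hand side is the vertical composition of $\sigma_{g_1}$ and $\sigma_{g_2}$ combined with the compositor $\Phi_\alpha(g_1,g_2) = \alpha_{g_2,g_1}$ of $\alpha$ (the compositor of $\mbf1$ being trivial); matching conventions yields exactly the projective composition law
\begin{align*}
\rho(g_2)\circ \rho(g_1) = \alpha_{g_2,g_1}\, \rho(g_2\circ g_1) \ .
\end{align*}
The remaining coherence at identities is automatic given Remark~\ref{rem:2cocycle}(a), which pins the identity compositor of $\alpha$ down from the composition data. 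The converse direction is then obtained by reversing these identifications: given $(V_G,\rho)$ one sets $\sigma_G := V_G\otimes(-)$ and defines $\sigma_g$ as the natural transformation whose component at $\C$ is $\rho(g)$, and the projective composition law implies both coherence diagrams for $\sigma$.

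The main obstacle I anticipate is purely bookkeeping rather than conceptual: keeping straight the direction of $\sigma_g$ (source versus target), the passage from the compositor $\Phi_\alpha$ living in $\sfB^2\C^\times$ to a 2-morphism in $\Tvs$ along the embedding, and ensuring that the scalar produced by the coherence condition matches $\alpha_{g_2,g_1}$ rather than its inverse in the projective representation axiom. No non-trivial ideas are required beyond the Yoneda-style observation that $\C$-linear endofunctors of $\fvs$ are determined by their value on $\C$.
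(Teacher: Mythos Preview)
Your proposal is correct and takes the same approach as the paper, which simply states that the result ``follows immediately from Definition~\ref{Definition transformation Bicategory}''. You have spelled out in detail exactly the unpacking of that definition that the paper leaves implicit, including the identification of $\sigma_G$ with $V_G$, of $\sigma_g$ with $\rho(g)$, and of the coherence condition \eqref{Equation1: Definition Transformation} with the projective composition law; your caveat about bookkeeping of signs and directions is apt but does not hide any genuine obstacle.
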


\begin{proof}
This follows immediately from Definition \ref{Definition transformation Bicategory}. 
\end{proof}

\begin{remark}
We can use Proposition~\ref{prop:projrep} to define intertwiners between projective representations as modifications between the corresponding natural 2-transformations. 
\end{remark}

To apply this general formalism to the anomalous field theories at
hand, we introduce the Picard 2-groupoid $\mathsf{Pic}_2(\mathscr{B})$
of a monoidal 2-category $\mathscr{B}$ consisting of the objects of
$\mathscr{B}$ which are invertible with respect to the monoidal
product, and invertible 1-morphisms and 2-morphisms; there is a
canonical embedding $\mathsf{Pic}_2(\mathscr{B})\to\mathscr{B}$. An extended quantum field theory $\Aa$ is invertible if and only if it factors uniquely through $\Aa:\CobF\to\mathsf{Pic}_2(\Tvs)\hookrightarrow\Tvs$. 
 
We can pick an equivalence of 2-categories $\mathsf{Pic}_2(\Tvs)\rightarrow \mathsf{BLine}_\C$ by choosing a non-canonical equivalence between every invertible 2-vector space and $\fvs$; an inverse to this equivalence is given by the embedding $\mathsf{BLine}_\C\rightarrow \mathsf{Pic}_2(\Tvs)$.
The invertibililty of the anomaly quantum field theory $\mathcal{A}$ and this equivalence induces a 2-cocycle of the symmetry groupoid with values in $\fvs$: 
\begin{align*}
\alpha^\Aa:\Sym \CobF \longrightarrow \mathsf{BLine}_\C \ .
\end{align*}
The same argument as that used in Remark~\ref{rem:2cocycle}(b) shows that this cocycle is independent of the choice of equivalence $\mathsf{Pic}_2(\Tvs)\rightarrow \mathsf{BLine}_\C$ up to coboundary. Combining these facts with Proposition~\ref{prop:projrep} we can then infer

\begin{proposition}
Every anomalous quantum field theory $A:\mbf1\Rightarrow\mathsf{tr}\Aa$ induces a projective
representation of the symmetry groupoid $\Sym \CobF$. The
2-cocycle $\alpha^\Aa$ corresponding to this representation is unique up to coboundary.
\end{proposition}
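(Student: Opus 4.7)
The strategy is to combine the hypotheses directly with Proposition~\ref{prop:projrep}. The symmetry groupoid $\Sym\CobF$ sits inside $\mathsf{trCob}^{\mathscr{F}}_{d,d-1,d-2}$ as the sub-bicategory generated by objects and limit 1-morphisms (with invertible limit 2-morphisms); call this inclusion $\iota$. By the invertibility hypothesis, the 2-functor $\mathsf{tr}\mathcal{A}\circ \iota$ takes values in $\mathsf{Pic}_2(\Tvs)$: every object goes to an invertible 2-vector space, every limit 1-morphism to an equivalence, and every limit 2-morphism to an invertible 2-isomorphism. Composing with a chosen equivalence $\kappa\colon \mathsf{Pic}_2(\Tvs)\to \mathsf{BLine}_\C$ then yields the 2-cocycle
\[
\alpha^{\mathcal{A}} := \kappa\circ \mathsf{tr}\mathcal{A}\circ \iota \colon \Sym\CobF \longrightarrow \mathsf{BLine}_\C \ ,
\]
where the right-hand side is understood as a 2-functor to $\Tvs$ via the embedding $\mathsf{BLine}_\C \hookrightarrow \mathsf{Pic}_2(\Tvs)\hookrightarrow \Tvs$.

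Next, I would restrict the natural symmetric monoidal 2-transformation $A\colon \mbf1\Rightarrow \mathsf{tr}\mathcal{A}$ along $\iota$ and then whisker with $\kappa$, i.e. form
\[
\kappa\bullet (A\circ \iota) \colon \mbf1\Longrightarrow \alpha^{\mathcal{A}} \ .
\]
The data $A(M^{d-2})$, $A(\phi)$ and the modification $\mit\Pi_A$ all restrict, and the coherence identities summarised in Proposition~\ref{Lemma Compatibility conditions} become precisely the identities satisfied by a natural 2-transformation in the sense of Definition~\ref{Definition transformation Bicategory}. Applying Proposition~\ref{prop:projrep} then produces the projective representation of $\Sym\CobF$ twisted by $\alpha^{\mathcal{A}}$, with vector spaces $V_{M^{d-2}}=A_\kappa(M^{d-2})$, linear maps $\rho(\phi)=A_\kappa(\phi)$ obtained by whiskering, and the twisted composition law inherited from \eqref{EQ1: Lemma Compatibility conditions}.

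For uniqueness up to coboundary, I would argue exactly as sketched in Remark~\ref{rem:2cocycle}(b). Given two choices of equivalence $\kappa_1,\kappa_2\colon \mathsf{Pic}_2(\Tvs)\to \mathsf{BLine}_\C$ with common quasi-inverse the embedding $i$, one has a chain of natural 2-transformations
\[
\kappa_1 \Longrightarrow \kappa_2\circ i\circ \kappa_1 \Longrightarrow \kappa_2
\]
which whiskers against $\mathsf{tr}\mathcal{A}\circ \iota$ to give a natural 2-transformation between $\alpha_1^{\mathcal{A}}$ and $\alpha_2^{\mathcal{A}}$; by Remark~\ref{rem:2cocycle}(b) such a natural 2-transformation is exactly a coboundary between the two cocycles. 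One checks in the same way that any two inverses $\mathcal{A}^{-1}$ for $\mathcal{A}$ are connected by an invertible 2-isomorphism, so the construction of $\alpha^{\mathcal{A}}$ does not depend on that auxiliary choice either.

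The routine parts are the translation of the coherence diagrams \eqref{EQ3': Lemma Compatibility conditions}--\eqref{EQ7: Lemma Compatibility conditions} into the compatibility required for a natural 2-transformation to $\alpha^{\mathcal{A}}$, where one must verify carefully that restricting to limit 1-morphisms trivialises the coherence data involving $\chi_{\mathcal{A}}$, $\iota_{\mathcal{A}}$ and the structure 2-morphisms of $\Tvs$ so that only scalar factors remain. The main conceptual obstacle is ensuring that the whiskering with $\kappa$ is well-defined as a 2-transformation into $\mathsf{BLine}_\C$; this relies on the fact that $A$ takes its values in invertible objects and 1-morphisms when restricted to $\Sym\CobF$, which is guaranteed precisely because $\mathcal{A}$ is invertible and the limit morphisms form a groupoid.
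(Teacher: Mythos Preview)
Your proposal is correct and follows essentially the same approach as the paper: factor $\mathsf{tr}\mathcal{A}$ through $\mathsf{Pic}_2(\Tvs)$ using invertibility, compose with a chosen equivalence to $\mathsf{BLine}_\C$ to obtain the 2-cocycle $\alpha^{\mathcal{A}}$, invoke Proposition~\ref{prop:projrep} to convert the restricted 2-transformation $A$ into a projective representation, and appeal to the argument of Remark~\ref{rem:2cocycle}(b) for uniqueness up to coboundary. The paper presents this more tersely, simply stating that the proposition follows by ``combining these facts with Proposition~\ref{prop:projrep}'' after the preceding discussion, without spelling out the whiskering or the restriction along $\iota$ as explicitly as you do.
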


We have seen in Proposition~\ref{prop:projrep} that natural 2-transformations $\mbf1\Rightarrow \alpha$ are the same things as projective representations of groupoids, so it should come as no surprise that these cocycles appear in the description of anomalies. The interesting prospect is that we can extend these cocycles to invertible extended field theories. This allows us to calculate quantities related to anomalies using the machinery of extended quantum field theories. Furthermore, we can couple such a theory to a bulk theory cancelling the anomaly as in Example~\ref{Example:TopologicalInsolature}. It is not clear that every anomaly admits such an extension, but all anomalies should give a projective representation of the symmetry groupoid.

\section{Extended quantum field theory and the parity anomaly\label{sec:EQFTparity}}

In this section we extend the quantum field theory $\mathcal{Z}_{\rm
  parity}^\zeta$ constructed in Section~\ref{sec:parityQGT} to an invertible
extended quantum field theory $\Aa_{\rm parity}^\zeta$ describing the parity anomaly of an anomalous field theory $A_{\rm parity}^\zeta:\mbf1 \Rightarrow\mathsf{tr}\mathcal{A}_{\rm parity}^\zeta$. This leads us naturally to the index theorem for manifolds with corners. We use the version of \cite{LoyaMelrose} involving b-geometry; a non-technical introduction to the topic can be found in \cite{Loyaindex}. An alternative approach can be found in~\cite{BunkeIndex}. The most important concepts from b-geometry for the ensuing formalism are summarised in Appendix~\ref{Appendix b-geometry}; a detailed introduction can be found in~\cite{MelrosebGeo}, see also~\cite{Sati} for a more physics oriented introduction. 

We fix the background field content introduced in Section~\ref{sec:catQFT} into the stack
\begin{align*}
\mathscr{F}= \mathsf{Bun}_G^\nabla \, \times \, \mathsf{Met} \, \times \, \mathsf{Spin} \, \times \, \mathsf{Or}
\end{align*}
for the geometric cobordism bicategory $\CobF$ constructed in
Section~\ref{Appendix Geometric bicategories}, together with a
finite-dimensional unitary representation $\rho_G$ of the gauge group
$G$ which specifies the matter field content. For technical reasons we
restrict ourselves to the sub-bicategory of $\CobF$ containing only
objects with vanishing index for the Dirac operator on every connected
component; this imposes conditions on the topology of each manifold $M^{d-2}$. 
This condition is a requirement for the existence of a well-defined index
theory on manifolds with corners~\cite{LoyaMelrose}. Similar restrictions also
appear in~\cite{BunkeIndex}. 
We further require that all structures on the collars are of product from. By a slight abuse of notation, we continue to call this bicategory $\CobF$.

\subsection{Index theory}
  
We have seen in Section \ref{Categorical case} that it is helpful to attach mapping cylinders to a manifold encoding the data of the identification of the boundary components with lower-dimensional objects. In the extended case we also need mapping boxes at the corners. 
Let $Y_i$, $i=1,2,3, 4$ be four closed manifolds equipped with $\mathscr{F}$-fields of product form $\mathsf{f}_i \in \mathscr{F}\big(Y_i\times (-\epsilon_1, \epsilon_1)^2\big)$, and a diagram of $\mathscr{F}$-diffeomorphisms $\varphi_{ij}$:
\begin{equation*}
\begin{tikzcd}
Y_1 \ar[d,swap,"\varphi_{13}"] \ar[r,"\varphi_{12}"] & Y_2\ar[d,"\varphi_{24}"] \\
Y_3 \ar[r,swap,"\varphi_{34}"] & Y_4 
\end{tikzcd}
\end{equation*}
Then the mapping box $\mathfrak{M}(Y, \varphi)$ of length
$\epsilon$ corresponding to this data is constructed by gluing
$Y_1\times \big[0, \tfrac{3}{4}\, \epsilon\big)^2$, $Y_2\times
\big(\tfrac{1}{4}\, \epsilon,\epsilon\big] \times \big[0,
\tfrac{3}{4}\, \epsilon\big)$, $Y_3 \times \big[0, \tfrac{3}{4}\,
\epsilon\big)\times  \big(\tfrac{1}{4}\, \epsilon,\epsilon\big]$ and
$Y_4\times \big( \tfrac{1}{4}\, \epsilon, \epsilon\big]^2$ along
$\varphi_{ij}$. Using descent we can construct an element $\mathsf{f}
\in \mathscr{F}\big(\mathfrak{M}(Y, \varphi)\big)$.
   
Given a regular 2-morphism $M^{d,2}$ from $M_1^{d-1,1}\colon M_- ^{d-2}\rightarrow M_+^{d-2}$ to $M_2^{d-1,1}\colon M_- ^{d-2}\rightarrow M_+^{d-2}$ in $\mathsf{Cob}_{d,d-1,d-2}^\mathscr{F}$, by definition it comes with collars $N^d_- \cong M_1^{d-1,1}\times [0,\epsilon_1)$ and $N^d_+ \cong M_2^{d-1,1}\times (-\epsilon_1,0]$.
We first attach mapping cylinders of a fixed length $\epsilon \in \R_{>0}$ to $M^{d-1,1}_1$, $M^{d-1,1}_2$ and the $0$-boundary. In a second step we attach mapping boxes of length $\epsilon$ to the corners of $M^{d,2}$. We denote this new manifold by $M'{}^{d,2}$ (see Figure \ref{Fig: Extended Indextheorem}). For this to be well-defined we need compatibility of all collars involved.
The new manifold has four distinct boundaries which we denote by
$M'_1{}^{d-1,1}$, $M'_2{}^{d-1,1}$, $C\big(M_-^{d-2}\big)= M_-^{d-2}\times \big[-\epsilon-\tfrac{1}{2}\, \epsilon_1, \epsilon+\tfrac{1}{2}\, \epsilon_1\big]$ and $C\big(M_+^{d-2}\big)= M_+^{d-2}\times \big[-\epsilon-\tfrac{1}{2}\, \epsilon_1, \epsilon+\tfrac{1}{2}\, \epsilon_1\big]$. 

\begin{figure}
\footnotesize
\begin{center}
\begin{overpic}[width=6cm,
scale=1]{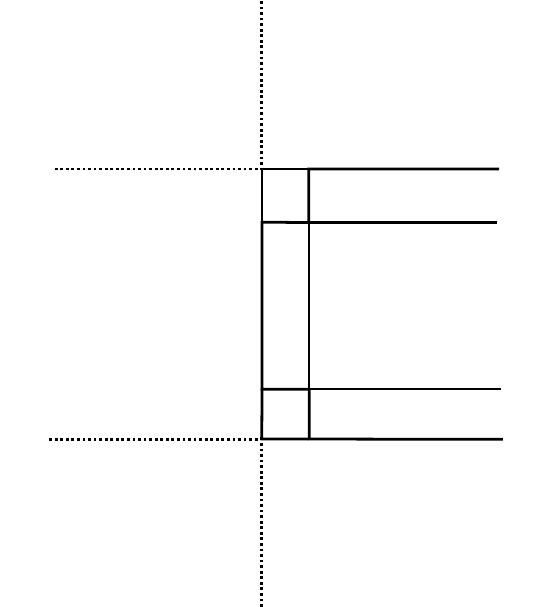}
\put(65,50){$M^{d,2}$}
\put(75,23){$\nwarrow$}
\put(80,20){$M_1^{d-1,1}$}
\put(75,74){$\swarrow$}
\put(80,77){$M_2^{d-1,1}$}
\put(-5,10){$M_-^{d-2}\times (-\infty,0]^2$}
\put(-5,50){$C(M_-^{d-2})\times (-\infty,0]^2$}
\put(-15,87){$M_-^{d-2}\times (-\infty,0]\times [0,\infty)$}
\put(55,10){$M'_1{}^{d-1,1}\times (-\infty,0]$}
\put(55,87){$M'_2{}^{d-1,1}\times [0, \infty)$}
\end{overpic}
\end{center}
\caption{\small Illustration of the construction of $\hat{M}'{}^{d,2}$ near $M_-^{d-2}$.}
\label{Fig: Extended Indextheorem}
\end{figure} 
 
We can now attach cylindrical ends to $M'{}^{d,2}$. For this, we first define
\begin{align*}
\hat{M}'{}^{d,2\,\circ}= M'{}^{d,2}\ \sqcup_{\partial M'{}^{d,2}}\ & \big( M'_1{}^{d-1,1} \times (-\infty, 0] \sqcup  M'_2{}^{d-1,1} \times [0, \infty ) \\
& \qquad \sqcup C(M_-^{d-2})\times (-\infty,0] \sqcup C(M_+^{d-2})\times [0,\infty) \big) \ ,
\end{align*} 
where we use the collars to glue the manifolds and extend all fields as products. Then $\hat{M}'{}^{d,2\,\circ}$ is a non-compact manifold with corners. Further gluing (see Figure~\ref{Fig: Extended Indextheorem}) produces 
\begin{align*}
\hat{M}'{}^{d,2} = \hat{M}'{}^{d,2\,\circ}\ &\sqcup_{\partial \hat{M}'{}^{d,2\,\circ}}\ \big( M_-^{d-2} \times (-\infty,0]^2 \sqcup M_-^{d-2} \times (-\infty,0]\times [0,\infty) \\ & \qquad \qquad \qquad \sqcup  M_+ ^{d-2} \times [0 , \infty)^2 \sqcup  M_+ ^{d-2} \times (-\infty, 0] \times [0, \infty) \big) 
\end{align*}
with all structures extended as products. As in the case of manifolds with boundaries, the Dirac operator $\slashed{D }_{\hat{M}'{}^{d,2}}$ is not Fredholm in general, and one can prove analogously that $\slashed{D }_{\hat{M}'{}^{d,2}}$ is Fredholm if and only if the induced Dirac operators on the corners and boundaries are invertible~\cite{LoyaMelrose}. 

When the kernel of the corner Dirac operator is non-trivial, we have
to add a mass perturbation~\cite{Loyaindex}. The induced twisted spinor bundle over $Y= M_+ ^{d-2} \sqcup -M_- ^{d-2}$ decomposes into spinors of positive and negative chirality. We pick a unitary self-adjoint isomorphism $T_i\colon \ker(\slashed{D }_{Y_i}) \rightarrow \ker(\slashed{D }_{Y_i})$, for every connected component $Y_i$ of the corner $Y$, which is odd with respect to the $\Z_2$-grading of the spinor bundle; this is possible since the index of $\slashed{D}_Y$ is $0$ by assumption. We define
\[
 T_\pm= \bigoplus_{i=1}^n\, T_{\pm,i} \qquad \text{ and } \qquad T=T_-\oplus T_+ \ ,
\] 
where $T_{\pm,i}:\ker\big(\slashed{D }_{M^{d-2}_{\pm,i}}\big)
\rightarrow \ker\big(\slashed{D }_{M^{d-2}_{\pm ,i}}\big)$. Now the
operator $\slashed{D }_Y - T$ is invertible. This suggests extending
$T$ to an operator $\hat{T}$ on $\hat{M}'{}^{d,2}$ such that the
massive Dirac operator $\slashed{D }_{\hat{M}'{}^{d,2}}-\hat{T}$ is Fredholm on weighted Sobolev spaces. A concrete construction of $\hat{T}$ can be found in \cite[Section 2.3]{LoyaMelrose},\footnote{$\hat{T}$ corresponds to $-S$ constructed in \cite[Section 2.3]{LoyaMelrose}, where we choose the same operators for the two remaining corners.} from which it is clear that $\hat{T}$ is independent of the length $\epsilon$ of the attached mapping cylinders and boxes. When we choose for every boundary component a small mass $\alpha_i$ as in Section~\ref{Categorical case}, then 
$$ 
\slashed{D }_{\hat{M}'{}^{d,2}}^+ -\hat{T}^+ \colon \e^{\alpha\cdot s}H^1\big(\hat S_{\hat M'{}^{d,2}}^+\big)\longrightarrow \e^{\alpha\cdot s}L^2\big(\hat S_{\hat M'{}^{d,2}}^-\big)
$$ 
is a Fredholm operator on weighted Sobolev spaces~\cite[Theorem
2.6]{LoyaMelrose}.
We restrict ourselves to a description of the corresponding index theorem on manifolds which are of the form $M'{}^{d,2}$ for a regular 2-morphism $M^{d,2}$ in $\mathsf{Cob}^{\mathscr{F}}_{d,d-1,d-2}$; the more general version can be found in~\cite[Theorem~6.13]{LoyaMelrose}.  

To define the $\eta$-invariant on a manifold $M^{d-1,1}$ with boundary
we proceed as in Section~\ref{Categorical case} and define
$\hat{M}^{d-1,1}$ by attaching cylindrical ends to $M^{d-1,1}$. In
general, the Dirac operator $\slashed{D}_{\hat{M}^{d-1,1}}$ has a
continuous spectrum, so we have to use the expression \eqref{Equation: Reformulation eta invariant} to define the $\eta$-invariant as an integral
\begin{align*}
{}^{\mathrm{b}}\eta \big(\slashed{D }_{\hat{M}^{d-1,1}}\big)= \frac{1}{\sqrt{\pi}} \, \int_0^\infty \, t^{-{1}/{2}} \ {}^{\mathrm{b}}\mathrm{Tr}\Big( \slashed{D} _{\hat{M}^{d-1,1}}\, \e^{-t\,\slashed{D }_{\hat{M}^{d-1,1}}^2} \Big) \ \diff t \ , 
\end{align*}
where we have to replace the usual trace by the b-geometric trace (see
Appendix \ref{Appendix b-geometry}) because its argument is not a
trace-class operator on $\hat{M}^{d-1,1}$ in general. There are other
ways of defining $\eta $-invariants for manifolds with boundaries
using appropriate boundary conditions \cite{DaiFreed,
  LeschWojciechowski, MuellerEtaInvariant}. The
$^{\mathrm{b}}\eta$-invariant agrees with the canonical boundary
conditions on spinors at the infinite ends of the cylinders induced by scattering Lagrangian subspaces, which we describe below.
 
There is a further contribution to the index theorem coming from the
corners. We define for $M'_i{}^{d-1,1}$, $i=1,2$ the scattering
Lagrangian subspace
\begin{align*}
\Lambda_{C_i} = \Big\lbrace \, \lim_{s_- \rightarrow -\infty}\, \Psi(y_-,s_-)\oplus  \lim_{s_+ \rightarrow \infty}\, \Psi(y_+,s_+) \ \Big| \ \Psi\in C^\infty\big(\hat{S}_{\hat{M}'_i{}^{d-1,1}}\big) \cap \ker\big(\slashed{D } _{\hat{M}'_i{}^{d-1,1}}\big) \ \text{bounded} \, \Big\rbrace 
\end{align*}
where $s_-\in(-\infty,0]$, $s_+\in[0,\infty)$ and $y_\pm\in M_\pm^{d-2}$.
The set $\Lambda_{C_i}\subset \ker \big(\slashed{D }_Y\big)$ is a Lagrangian subspace of $\ker \big(\slashed{D }_{\partial M^{d-1,1}_i}\big)$ with respect to the symplectic form $\omega (\, \cdot\, , \, \cdot\, )= (\Gamma \, \cdot\, ,\, \cdot\, )_{L^2}$~\cite{MuellerEtaInvariant}, where $\Gamma$ is the chirality operator
on the spinor bundle over the corners, i.e. $\Gamma$ acts by scalar multiplication with $\pm\ii$ on $S_Y^\pm$.
We define an odd unitary self-adjoint isomorphism $C_i$ of $\ker
\big(\slashed{D }_{Y}\big)$, called the scattering matrix of
$\slashed{D } _{\hat{M}'_i{}^{d-1,1}}$, by setting $C_i=\id$ on
$\Lambda_{C_i}$ and $C_i=-\id$ on $\Lambda_{C_i}^\perp$. We denote by
$\Lambda_{T}\subset \ker \big(\slashed{D }_Y\big)$ the $+1$-eigenspace
of $T$; there is a one-to-one correspondence between Lagrangian
subspaces of $\ker \big(\slashed{D }_Y\big)$ and boundary conditions
$T\in \mathrm{End}_\C \big(\ker (\slashed{D }_Y)
\big)$. Following~\cite{LeschWojciechowski,Bunke}, we introduce the
`exterior angle' between Lagrangian subspaces by the spectral formula
\begin{align}
\label{EQ: Definition of m( , )}
\mu(\Lambda_{T}, \Lambda_{C_i})=-\frac{1}{\pi} \ \sum_{\substack{\e^{\ii\theta}\in \mathrm{spec}(-T^-\,C_i^+)\\ -\pi<\theta <\pi}} \, \theta \ ,
\end{align} 
where the grading is with respect to the $\Z_2$-grading of the twisted spinor bundle over the corners. 

With this notation, we can now formulate the index theorem for
manifolds of the form $M'{}^{d,2}$ as

\begin{theorem}
\label{Theorem extended index theorem}
If $M^{d,2}$ is a regular 2-morphism in $\CobF$, then
\begin{align}
\label{Index theorem corners}
\begin{split}
\mathrm{ind} \big(\slashed{D }_{\hat{M}'{}^{d,2}}^+ -\hat{T}^+
\big) = &\int_{M'{}^{d,2}} \, K_{\rm AS} -\frac{1}{2}\, \Big( \,
^{\mathrm{b}}\eta \big(\slashed{D}_{\hat{M}'_1{}^{d-1,1}}
\big)+{}^{\mathrm{b}}\eta \big(\slashed{D }_{\hat{M}'_2{}^{d-1,1}}
\big) \\ & +\, \dim \ker \big(
\slashed{D}_{\hat{M}'_1{}^{d-1,1}} \big) -\dim \ker
\big(\slashed{D}_{\hat{M}'_2{}^{d-1,1}} \big) \\ & +\,
\dim(\Lambda_{T} \cap \Lambda_{C_1}) -\dim(\Lambda_T \cap
\Lambda_{C_2} ) + \mu(\Lambda_{T}, \Lambda_{C_1}) + \mu(\Lambda_{ T},
\Lambda_{C_2}) \, \Big) \ .
\end{split}
\end{align}
\end{theorem}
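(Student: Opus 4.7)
The plan is to derive \eqref{Index theorem corners} as a direct specialization of the general index theorem for Dirac operators on exact b-manifolds with corners of codimension~$2$ due to Loya and Melrose, namely \cite[Theorem~6.13]{LoyaMelrose}. First I would verify that the noncompact manifold $\hat{M}'{}^{d,2}$, equipped with the perturbed Dirac operator $\slashed{D}^+_{\hat{M}'{}^{d,2}}-\hat{T}^+$, falls into their framework: by construction every element of $\Fscr$ is of product form on the cylindrical ends and on the attached mapping boxes, so the operator is a b-Dirac operator with the product structure near all boundary hypersurfaces and corners required by b-calculus. Fredholmness on the weighted Sobolev spaces $\e^{\alpha\cdot s}H^1$ and $\e^{\alpha\cdot s}L^2$ then follows from \cite[Theorem~2.6]{LoyaMelrose}; the hypothesis requires invertibility of the indicial family at each boundary hypersurface and at each corner, which is ensured by the small weights $\alpha_i$ on the boundary and by the choice of odd unitary self-adjoint endomorphism $T$ of $\ker(\slashed{D}_Y)$ at the corner. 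Such a $T$ exists because $\mathrm{ind}(\slashed{D}_{Y_i})=0$ for every connected component $Y_i$ of $Y$, which is the standing topological restriction on the objects of $\CobF$.

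Next I would identify each summand on the right-hand side of \eqref{Index theorem corners} with its counterpart in the Loya-Melrose formula. The bulk term is the integral of the Atiyah-Singer density $K_{\rm AS}$ over the compact core $M'{}^{d,2}$; the cylindrical ends and mapping-box pieces contribute nothing because $K_{\rm AS}$ vanishes pointwise on any product manifold of the form $Y\times I$ or $Y\times I_1\times I_2$. The codimension-one boundary contributions are the b-eta invariants $^{\mathrm{b}}\eta(\slashed{D}_{\hat{M}'_i{}^{d-1,1}})$, defined via the b-trace of the heat kernel on the infinite cylindrical end, together with the kernel dimensions $\dim\ker(\slashed{D}_{\hat{M}'_i{}^{d-1,1}})$, which enter with opposite signs for incoming and outgoing components in complete analogy to the sign pattern of \eqref{APS Theorem}; this is forced by attaching inward-pointing cylinders on the incoming boundary and outward-pointing cylinders on the outgoing boundary. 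The codimension-two corner contributions yield the Lagrangian-intersection and exterior-angle terms $\dim(\Lambda_T\cap\Lambda_{C_i})$ and $\mu(\Lambda_T,\Lambda_{C_i})$, which measure the mismatch between the boundary conditions determined by $T$ and those determined by the scattering matrix $C_i$; the spectral formula \eqref{EQ: Definition of m( , )} is precisely the one used in \cite{LeschWojciechowski}.

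The main obstacle I anticipate is the consistent bookkeeping of signs and conventions. Loya-Melrose state their theorem using an auxiliary endomorphism written as $-S$, which differs by a sign from our $\hat{T}$ (as pointed out in the footnote accompanying the construction of $\hat{T}$), so the global signs on the $\mu$ and intersection terms must be tracked through their derivation. Moreover, their formula is phrased symmetrically over the boundary hypersurfaces, while our setup distinguishes incoming and outgoing pieces $\hat{M}'_1{}^{d-1,1}$ and $\hat{M}'_2{}^{d-1,1}$; the opposite orientations of the cylinders attached to these two pieces produce the alternating signs of $\dim\ker(\slashed{D}_{\hat{M}'_i{}^{d-1,1}})$ and $\dim(\Lambda_T\cap\Lambda_{C_i})$, but, because $\mu$ depends on the phases of the eigenvalues of $-T^-C_i^+$ rather than on an oriented quantity, both $\mu$-terms retain the same sign. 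Once these orientation conventions are carefully reconciled, \eqref{Index theorem corners} follows directly from \cite[Theorem~6.13]{LoyaMelrose}.
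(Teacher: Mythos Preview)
Your starting point is correct — the result is indeed a specialization of \cite[Theorem~6.13]{LoyaMelrose} — but you have missed the actual content of the proof. The compact core $M'{}^{d,2}$ has \emph{four} boundary hypersurfaces, not two: besides $M'_1{}^{d-1,1}$ and $M'_2{}^{d-1,1}$ there are the two cylinders $C(M^{d-2}_-)$ and $C(M^{d-2}_+)$ (see the paragraph preceding the theorem and Figure~\ref{Fig: Extended Indextheorem}). The general Loya--Melrose formula therefore produces, for each of $C(M^{d-2}_\pm)$, its own ${}^{\mathrm b}\eta$-invariant, its own kernel-dimension term, and its own corner contributions $\dim(\Lambda_{-\ii\Gamma T_\pm}\cap\Lambda_{C_\pm})$ and $\mu(\Lambda_{-\ii\Gamma T_\pm},\Lambda_{C_\pm})$. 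None of these appear in \eqref{Index theorem corners}, so the substance of the argument is to show that they all vanish; your proposal does not address this at all. (Your remark that $K_{\rm AS}$ vanishes on product pieces concerns only the bulk integral and is irrelevant here.)

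The paper carries out this vanishing explicitly. Attaching cylindrical ends to $C(M^{d-2}_\pm)$ yields $M^{d-2}_\pm\times(-\infty,\infty)$; separation of variables for the Dirac operator $\sigma_t(\slashed{D}_{M^{d-2}_\pm}+\partial_t)$ shows there are no $L^2$ harmonic spinors, so the kernel terms vanish. The scattering Lagrangian of the infinite cylinder is the diagonal $\Delta(\ker\slashed{D}_{M^{d-2}_\pm})$, and since the chirality operator $\Gamma$ has opposite signs on the two ends while $T_\pm$ is the same, $\Lambda_{-\ii\Gamma T_\pm}\cap\Lambda_{C_\pm}=0$. Finally, by Remark~\ref{Remark: Relation eta Invariants} the remaining combination ${}^{\mathrm b}\eta+\mu$ equals the $\eta$-invariant of a finite cylinder with identical boundary conditions at both ends, which vanishes by \cite[Theorem~2.1]{LeschWojciechowski}. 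This computation is the heart of the proof; the sign bookkeeping you focus on is comparatively routine.
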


\begin{remark}
The extra corner contributions in the last line of \eqref{Index theorem corners} to the usual
(b-geometric) Atiyah-Patodi-Singer formula \eqref{APS Theorem} can be
understood as follows. Let $M^{d-1,1}$ be a regular
1-morphism in the bicategory $\CobF$. Then for every $T\in
\mathrm{End}_\C\big(\ker (\slashed{D }_{\partial M^{d-1,1}}) \big)$ as
above we can relate the spectral data of the massive Dirac operator on
$\hat M^{d-1,1}$ to their massless counterparts as
\begin{align*}
{}^{\rm b}\eta\big(\slashed{D}_{\hat M^{d-1,1}} - \hat T \big)&= {}^{\rm b}\eta\big(\slashed{D}_{\hat M^{d-1,1}} \big)+\mu(\Lambda_{T},
  \Lambda_{C}) \ , \\[4pt]
\dim \ker \big(\slashed{D}_{\hat M^{d-1,1}} - \hat T \big) &= \dim
                                                             \ker
                                                             \big(\slashed{D}_{\hat
                                                             M^{d-1,1}}\big)+\dim(\Lambda_T\cap\Lambda_C)
                                                             \ ,
\end{align*}
where $\Lambda_C$ is the scattering Lagrangian subspace for
$M^{d-1,1}$. 
\end{remark}

\begin{remark}\label{Remark: Relation eta Invariants}
We describe the relation between $^{\rm b}\eta $-invariants and
$\eta $-invariants with boundary conditions~\cite{RelationEtaInvariants}. We denote by $\Pi_+$ the
projection onto the space spanned by the positive eigenspinors of
$\slashed{D }_{\partial M^{d-1,1}}$, and by $\Pi_T$ the projection onto the positive eigenspace $\Lambda_T$. This
allows us to define a Dirac operator $\D_T$, which coincides with $\slashed{D
}_{\partial M^{d-1,1}}$, on the domain
\begin{align*}
\big\{
\Psi\in H^1\big(\hat S_{\hat M^{d-1,1}}\big) \ \big| \ (\Pi_+ + \Pi_T)\Psi\big|_{\partial M^{d-1,1}}=0
\big\} \ .
\end{align*}
The operator $\D_T$ is self-adjoint and elliptic for all $T$. It is shown in \cite[Theorem~1.2]{RelationEtaInvariants} that
\begin{align}\label{EQ: b-eta= eta with boundary condition}
\eta(\D_T)= {}^{\rm b} \eta \big(\slashed{D}_{\hat M^{d-1,1}} \big)+\mu(\Lambda_{T},
  \Lambda_{C}) \ ,
\end{align}  
so that we can combine the ${}^{\rm
  b}\eta$-invariant and the exterior angle $\mu$ in \eqref{Index
  theorem corners} into an $\eta$-invariant for a Dirac operator with suitable boundary conditions induced by the Lagrangian subspace $\Lambda_T$.    
\end{remark}

\begin{proof}[Proof of Theorem \ref{Theorem extended index theorem}]
From the general index theorem for manifolds with corners \cite[Theorem 6.13]{LoyaMelrose} we get 
\begin{align*}
&\text{ind} \big(\slashed{D }_{\hat{M}'{}^{d,2}}^+ -\hat{T}^+
  \big) \\ & \quad = \int_{M'{}^{d,2}}\, K_{\rm AS} -\frac{1}{2}\, \Big( \, {}^{\rm
          b}\eta \big(\slashed{D}_{\hat{M}'_1{}^{d-1,1}} \big)+{}^{\rm
          b}\eta \big(\slashed{D }_{\hat{M}'_2{}^{d-1,1}} \big) +
          {}^{\rm b}\eta \big(\slashed{D}_{\hat C(M^{d-2}_-)} \big)+
          {}^{\rm b}\eta \big(\slashed{D}_{\hat C(M^{d-2}_+)} \big)\\ 
& \qquad + \dim \ker \big(\slashed{D}_{\hat{M}'_1{}^{d-1,1}}
  \big)+\dim(\Lambda_{T} \cap \Lambda_{C_1})-\dim \ker
  \big(\slashed{D}_{\hat{M}'_2{}^{d-1,1}} \big)-\dim(\Lambda_T \cap
  \Lambda_{C_2}) \\
& \qquad +\dim \ker \big(\slashed{D}_{\hat{C}(M^{d-2}_-)} \big)
  +\dim(\Lambda_{-\ii\Gamma\, T_-} \cap \Lambda_{C_-}) -\dim \ker
  \big(\slashed{D}_{\hat{C}(M^{d-2}_+)} \big) -\dim(\Lambda_{-\ii\Gamma\,
  T_+} \cap \Lambda_{C_+}) \\
& \qquad +\mu(\Lambda_{T}, \Lambda_{C_1})+ \mu(\Lambda_{T}, \Lambda_{C_2})
  +\mu(\Lambda_{-\ii\Gamma\, T_+}, \Lambda_{C_+}) +
  \mu(\Lambda_{-\ii\Gamma\, T_-}, \Lambda_{C_-}) \, \Big)
\end{align*}
where $\Lambda_{C_\pm}$ are the scattering Lagrangian subspaces for
$C(M^{d-2}_\pm)$, respectively. We can calculate the contributions
from the boundaries $C(M^{d-2}_\pm)$ explicitly and show that they all
vanish. 

Attaching infinite cylindrical ends to $C(M^{d-2}_\pm)$ leads to the manifolds $M_\pm^{d-2}\times (-\infty,\infty)$.
The Dirac operator on the manifold $M_\pm^{d-2}\times (-\infty, \infty)$ of odd dimension $d-1$ is given by \eqref{Equation: Dirac operator on a product}:
\begin{align}
\label{EQ: Proof index theorem}
\slashed{D}_{{M_\pm^{d-2}} \times (-\infty, \infty)}= \sigma_t\,
  \big(\slashed{D}_{M_\pm^{d-2}} + \partial_t \big) \ , 
\end{align}
where $t\in(-\infty,\infty)$. 
We are interested in the dimension of the space of harmonic
spinors $\Psi(y_\pm,t)$. By elliptic regularity there exists a basis of smooth
sections. Multiplying \eqref{EQ: Proof index theorem} with
$\sigma_t^{-1}$, we get
\begin{align*}
\big(\slashed{D}_{M_\pm^{d-2}}+ \partial_t \big)\Psi(y_\pm,t)=0 \ .
\end{align*} 
Using separation of variables $\Psi(y_\pm,t)=\psi(y_\pm)\, \alpha(t)$, this
equation reduces to a pair of equations
\begin{align}
\label{EQ: Dirac operator on cylinder Proof index theorem}
\slashed{D}_{M_\pm^{d-2}} \, \psi(y_\pm)= \lambda \, \psi(y_\pm)  \qquad
  \mbox{and} \qquad
\frac{\diff \alpha(t)}{\diff t}=-\lambda \, \alpha(t) \ , 
\end{align}
for an arbitrary constant $\lambda $ which must be real since $\slashed{D}_{M_\pm^{d-2}} $ is an elliptic operator. 
The second equation has solution (up to a constant) $\alpha(t)= \e^{-\lambda\, t}$, and
we finally see that there are no non-zero square-integrable spinors
$\Psi(y_\pm,t)$ with eigenvalue $0$.
Hence the contributions from the terms $\dim \ker \big(\slashed{D}_{\hat{C}(M_\pm^{d-2})}\big)$ are $0$.

A solution of \eqref{EQ: Dirac operator on cylinder Proof index
  theorem} is bounded if and only if $\lambda=0$, and so the
scattering Lagrangian subspace takes the form 
\begin{align*}
\Lambda_{C_\pm}= \Delta\big(\ker (\slashed{D}_{M_\pm^{d-2}}) \big)=
  \big\{ \psi \oplus \psi \ \big| \ \psi \in \ker
  \big(\slashed{D}_{M_\pm^{d-2}} \big) \big\} \subset \ker
  \big(\slashed{D}_{M_\pm^{d-2}} \big) \oplus \ker
  \big(\slashed{D}_{-{M_\pm^{d-2}}} \big) \ .
\end{align*}
This implies that 
\begin{align*}
\dim(\Lambda_{-\ii\Gamma \, T_\pm} \cap \Lambda_{C_\pm})= 0 \ ,
\end{align*}
since the chirality operator $\Gamma$ on the outgoing and ingoing
boundaries differs by a sign while $T_\pm$ is the same over both boundaries.

Finally, by Remark~\ref{Remark: Relation eta Invariants}, ${}^{\rm
  b}\eta \big(\slashed{D}_{\hat{C}(M^{d-2}_\pm)}\big)+
\mu(\Lambda_{-\ii\Gamma \, T_\pm}, \Lambda_{C_{\pm}})$ is the
$\eta$-invariant on a cylinder with identical boundary conditions at
both ends, which vanishes by~\cite[Theorem~2.1]{LeschWojciechowski}. 
\end{proof}

For later use we derive here a formula for the index of a 2-morphism
under cutting. For this, we first have to study the behaviour of the
various quantities in the index formula \eqref{Index theorem corners} under orientation-reversal.

\begin{lemma}\label{Lemma: Properties terms in the index theorem}
Let $M^{d-1,1}$ be a regular 1-morphism in $\CobF$ with fixed boundary condition
$T \in \mathrm{End}_\C \big(\ker(\slashed{D}_{\partial M^{d-1,1}}) \big)$ as
above. If we reverse the orientation of $M^{d-1,1}$, then $T$ still
defines a suitable boundary condition of $\slashed{D}_{\partial (-M^{d-1,1})}$ and 
\begin{align*}
\dim \ker\big(\slashed{D}_{\hat M^{d-1,1}} \big) &= \dim
                                                     \ker\big(\slashed{D}_{-\hat
                                                     M^{d-1,1}} \big)
                                                     \ \qquad
                                                     \mbox{and} \qquad
{}^{\rm b}\eta \big(\slashed{D}_{\hat{M}^{d-1,1}} \big)=-{}^{\rm
                                                     b}\eta
                                                     \big(\slashed{D}_{-\hat{M}^{d-1,1}}
                                                     \big) \ , \\[4pt] 
\dim(\Lambda_T \cap \Lambda_C)&=\dim(\Lambda_T \cap \Lambda_{-C})
                                \qquad \mbox{and} \qquad
\mu(\Lambda_{T}, \Lambda_{C})=-\mu(\Lambda_{T}, \Lambda_{-C}) \ ,
\end{align*}
where $\Lambda_{-C}$ is the scattering Lagrangian subspace for $-M^{d-1,1}$. 
\end{lemma}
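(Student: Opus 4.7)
The strategy is to exhibit a canonical unitary identification between the Dirac operator on $\hat{M}^{d-1,1}$ and the negative of the Dirac operator on $-\hat{M}^{d-1,1}$, and then track how the spectral data on the corners transform under this identification.

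First I would construct a unitary bundle isomorphism $U\colon \hat S_{\hat{M}^{d-1,1}} \to \hat S_{-\hat{M}^{d-1,1}}$ intertwining the Dirac operators up to sign, i.e.\ $U\,\slashed{D}_{\hat M^{d-1,1}}\,U^{-1} = -\slashed{D}_{-\hat M^{d-1,1}}$; such a $U$ is provided by a standard construction using Clifford multiplication by the volume form (or a unit normal to $M^{d-1,1}$). Restricted to the corners $Y$, this identification exchanges the two chirality eigenspaces of the grading operator $\Gamma$, effectively swapping the $\Z_2$-grading of $S_Y$. Claim (1) is then immediate, since $U$ restricts to a bijection of kernels; Claim (2) follows from the heat-kernel formula \eqref{Equation: Reformulation eta invariant} (extended b-geometrically), as each integrand picks up a sign under $\slashed{D} \mapsto -\slashed{D}$.

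For (3), the map $U$ sends bounded harmonic spinors to bounded harmonic spinors, so by inspection of the definition of the scattering Lagrangian subspace, $\Lambda_{-C} = \Lambda_C$ as subspaces of $\ker(\slashed{D}_Y)$; since $T$ depends only on the corner structure and not on the orientation of $M^{d-1,1}$, the subspace $\Lambda_T$ is unchanged, and (3) is immediate. For (4), the formula \eqref{EQ: Definition of m( , )} involves the spectrum of the unitary operator $-T^- C^+$ on $\ker^+(\slashed{D}_Y)$. Under the swap of the $\Z_2$-grading, the corresponding operator for $-M^{d-1,1}$ becomes $-T^+ C^-$ on $\ker^-(\slashed{D}_Y)$; conjugating by the unitary $T^+$ and using the self-adjointness relation $(T^+)^{\ast} = T^-$ one finds $(T^+)^{-1}(-T^+ C^-)T^+ = -C^- T^+ = (-T^- C^+)^{\ast}$. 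Hence the spectrum of the new operator consists of the complex conjugates $\e^{-\ii\theta}$ of the eigenvalues $\e^{\ii\theta}$ of $-T^- C^+$, so the sum in \eqref{EQ: Definition of m( , )} changes sign, establishing (4).

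The main technical subtlety is the careful bookkeeping in (4): one must correctly translate the swap of the $\Z_2$-grading on the corner spinor bundle under orientation reversal into the explicit spectral relation between the two unitary operators entering the definition of the exterior angle. The remaining claims are essentially formal consequences of the intertwining relation for $U$.
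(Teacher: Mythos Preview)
Your argument is correct, and for claims (1)--(3) it essentially coincides with the paper's proof: the paper simply observes that $\slashed{D}_{\hat M^{d-1,1}}=-\slashed{D}_{-\hat M^{d-1,1}}$ as operators on the \emph{same} underlying spinor bundle (in odd dimensions there is no need to pass through an explicit unitary $U$), and that the Lagrangian subspaces $\Lambda_C$ and $\Lambda_T$ do not depend on the orientation.

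For claim (4), however, you take a genuinely different route. The paper does not compute with the spectral formula \eqref{EQ: Definition of m( , )} at all; instead it invokes the result of Lesch--Wojciechowski that $\mu(\Lambda_T,\Lambda_C)$ can be identified with the $\eta$-invariant of a Dirac operator on a cylinder with boundary conditions induced by $\Lambda_T$ and $\Lambda_C$, and then uses the general fact that $\eta$-invariants change sign under orientation reversal. Your direct spectral computation---tracking the swap of the $\Z_2$-grading on the corners and showing that $-T^+C^-$ is unitarily conjugate to $(-T^-C^+)^*$---is more elementary and self-contained, avoiding the external reference entirely. The paper's approach is shorter and more conceptual but relies on a nontrivial identification proved elsewhere; yours is hands-on and makes explicit exactly which structure (the corner grading) is responsible for the sign change. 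Both are valid, and your version arguably gives more insight into the mechanism behind the identity.
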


\begin{proof}
There is an equality $\slashed{D}_{\hat M^{d-1,1}}=-\slashed{D}_{-\hat M^{d-1,1}}$
of operators acting on sections of the underlying twisted
spinor bundle $\hat S_{\hat M^{d-1,1}}$, which implies the first two
equations. The Lagrangian subspaces $\Lambda_{C}$ and $\Lambda_{T}$
are independent of the orientation, which implies the third equation. 

We can interpret the exterior angle $\mu(\Lambda_{T}, \Lambda_{C})$ as
the $\eta$-invariant of a cylinder with boundary conditions induced by
$\Lambda_{T}$ and $\Lambda_{C}$~\cite{LeschWojciechowski}. Reversing
the orientation of this cylinder corresponds to $\mu(\Lambda_{T},
\Lambda_{-C})$. The last equation then follows from the fact that the
$\eta$-invariant changes sign under orientation-reversal.
\end{proof}

\begin{proposition}\label{Corollary: Index theorem for composition of regular 2-morphisms}
The index is additive under vertical composition of regular 2-morphisms in
$\CobF$ if we choose identical boundary conditions on the corners. 
\end{proposition}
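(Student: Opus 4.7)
The plan is to apply the index theorem~\eqref{Index theorem corners} to each of the regular 2-morphisms $M^{d,2}_1 \colon M^{d-1,1}_1 \Rightarrow M^{d-1,1}_2$, $M^{d,2}_2 \colon M^{d-1,1}_2 \Rightarrow M^{d-1,1}_3$ and to their vertical composition $M^{d,2}_2 \circ M^{d,2}_1$, and then to compare the resulting expressions term-by-term. By hypothesis, the same unitary chirality-odd endomorphism $T$ is chosen on the corners $M^{d-2}_\pm$ for all three 2-morphisms, so the Lagrangian subspace $\Lambda_T\subset\ker(\slashed{D}_Y)$ is common to every index computation.

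First I would show that the bulk Atiyah-Singer contribution is additive. By the definition of vertical composition, thin collar neighbourhoods of $M^{d-1,1}_2$ are removed from each factor and the remainders are glued along compatible collars on which all $\mathscr{F}$-data are constant. After attaching the mapping cylinders and boxes to obtain $M'{}^{d,2}$, this manifold decomposes as the union of $M'_1{}^{d,2}$ and $M'_2{}^{d,2}$ along a measure-zero set, and since $K_{\rm AS}$ is determined locally by the $\mathscr{F}$-data we obtain $\int_{M'{}^{d,2}} K_{\rm AS}=\int_{M'_1{}^{d,2}} K_{\rm AS}+\int_{M'_2{}^{d,2}} K_{\rm AS}$. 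Moreover, the boundary and corner contributions associated with the source $M^{d-1,1}_1$ and target $M^{d-1,1}_3$ appear identically on both sides of the claimed equality, since the ${}^{\mathrm{b}}\eta$-invariant, scattering Lagrangian and intersection dimensions for $\hat M'_1{}^{d-1,1}$ (respectively $\hat M'_3{}^{d-1,1}$) are intrinsic to $M^{d,2}_1$ (respectively $M^{d,2}_2$) and coincide with those computed for the composition.

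The crucial step is to show that the contributions associated with the middle 1-morphism $M^{d-1,1}_2$, which appear in the index formulas for $M^{d,2}_1$ and $M^{d,2}_2$ but not in that of the composition, cancel when summed. In the cobordism-gluing convention used in the proof of Theorem~\ref{A is field theory}, $M^{d-1,1}_2$ appears as the outgoing boundary of $M^{d,2}_1$ with one orientation and as the incoming boundary of $M^{d,2}_2$ with the opposite orientation. Inspecting~\eqref{Index theorem corners}, the invariants ${}^{\mathrm{b}}\eta$ and $\mu(\Lambda_T,\Lambda_C)$ enter with the same sign for incoming and outgoing boundaries, while $\dim\ker$ and $\dim(\Lambda_T\cap\Lambda_C)$ enter with opposite signs. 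By Lemma~\ref{Lemma: Properties terms in the index theorem}, the former change sign under orientation-reversal whereas the latter are orientation-invariant, so that the ${}^{\mathrm{b}}\eta$ and $\mu$ contributions cancel via the Lemma while the $\dim\ker$ and $\dim(\Lambda_T\cap\Lambda_C)$ contributions cancel via the sign asymmetry in the formula.

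The principal obstacle is therefore bookkeeping rather than analysis: one must verify that, under the identifications prescribed by vertical composition, the scattering matrix $C$ entering the $M^{d,2}_2$ contribution is indeed the scattering matrix of $-M'_2{}^{d-1,1}$, so that Lemma~\ref{Lemma: Properties terms in the index theorem} applies with the correct signs. This follows from the locality of the scattering Lagrangian, which is determined by bounded harmonic spinors on $\hat M'_2{}^{d-1,1}$, together with the assumption that all stack data are of product type on the collars, so that the orientation-reversal on $M^{d-1,1}_2$ induced by the gluing acts precisely as recorded in the Lemma.
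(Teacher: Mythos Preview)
Your overall strategy matches the paper's: apply \eqref{Index theorem corners} to each piece and cancel. Your treatment of the boundary and corner terms associated to $M^{d-1,1}_2$ is correct and in fact more explicit than the paper, which simply cites Lemma~\ref{Lemma: Properties terms in the index theorem} for the cancellation.

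The one genuine gap is in your handling of the bulk Atiyah--Singer integral. The assertion that $(M^{d,2}_2\circ M^{d,2}_1)'$ decomposes as $M'_1{}^{d,2}\cup M'_2{}^{d,2}$ along a measure-zero set is false as stated: you yourself observe that collar neighbourhoods of $M^{d-1,1}_2$ are \emph{deleted} in the vertical composition, and in addition the primed construction attaches fresh mapping cylinders along $M^{d-1,1}_2$ to each of $M'_1{}^{d,2}$ and $M'_2{}^{d,2}$, whereas no such attachment occurs in $(M^{d,2}_2\circ M^{d,2}_1)'$ because $M^{d-1,1}_2$ is now interior. So the two sides differ by regions of positive measure, and locality of $K_{\rm AS}$ alone does not close the argument.

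The paper flags exactly this point as ``not completely obvious'' and resolves it differently: since every other term in \eqref{Index theorem corners} is manifestly independent of the collar lengths, so is $\int_{M'{}^{d,2}}K_{\rm AS}$; one may therefore shrink the collars and use the mapping-cylinder description of gluing to cover $(M^{d,2}_2\circ M^{d,2}_1)'$ by trimmed versions $\tilde M'_i{}^{d,2}$ with $\tfrac34$ of the relevant collar removed. An equally valid fix, closer in spirit to your argument, is to note that on a collar of product form all curvature forms are pulled back from a $(d{-}1)$-dimensional manifold, so the degree-$d$ density $K_{\rm AS}$ vanishes identically there; this makes both the deleted collar strips and the extra mapping cylinders along $M^{d-1,1}_2$ invisible to the integral, and your additivity then follows.
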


\begin{proof}
The contributions from the gluing boundary cancel each other by Lemma \ref{Lemma: Properties terms in the index theorem}.  We still have to show that 
\begin{align*}
\int_{(M_2^{d,2}\circ M_1^{d,2})'}\, K_{\rm AS}=\int_{M_1'{}^{d,2}}\,
  K_{\rm AS}+\int_{M_2'{}^{d,2}}\, K_{\rm AS} \ .
\end{align*}
This is not completely obvious, since the vertical composition also
involves deleting half of the collars of the gluing boundary. However,
from
\eqref{Index theorem corners} and the construction of
$\hat{M}'{}^{d,2}$ it is clear that $\int_{M'{}^{d,2}}\, K_{\rm AS}$
is independent of the length of the collars. Using the description of
gluing in terms of mapping cylinders we can cover $(M_2^{d,2}\circ
M_1^{d,2})'$ by $\tilde{M}_1'{}^{d,2}$ and $\tilde{M}_2'{}^{d,2}$,
where $\tilde{M}_i'{}^{d,2}$ is the manifold $M_i'{}^{d,2}$ with $\tfrac{3}{4}$ of the collar corresponding to the gluing boundary removed. 
\end{proof}

\subsection{The extended quantum field theory $\mathcal{A}_{\rm parity}^\zeta$} \label{Section: Extending A}

We shall now proceed to extend the quantum field theory
$\mathcal{Z}_{\rm parity}^\zeta$ to an anomaly quantum field theory
$\Aa_{\rm parity}^\zeta:\CobF\to\Tvs$ describing the parity anomaly in $d-1$ dimensions. Following Section~\ref{Categorical case},
we would like to define something like $\mathcal{A}_{\rm
  parity}^\zeta(M^{d,2}) = \zeta^{\text{ind} (\slashed{D
  }_{\hat{M}'{}^{d,2}}^+ -\hat{T}^+ )}$ for a fixed
$\zeta\in\C^\times$ and every regular 2-morphism
$M^{d,2}$ of $\CobF$.
The problem with this definition is that the index may depend on our
choice of $T \in \text{End}_\C\big(\ker (\slashed{D}_Y) \big)$. The
resolution is to include the data about the choice of $T$ into our
extended quantum field theory. 

We do this by combining, for each object $M^{d-2}$ of $\CobF$, all possible boundary conditions $T$ into a
category $\mathcal{A}_{\rm parity}^\zeta(M^{d-2})$ in the following way: Let $\mathsf{T}(M^{d-2})$ be the category with one
object for every odd self-adjoint unitary $T\in \text{End}_\C\big(\ker (\slashed{D}_{M^{d-2}}) \big) $ and one
morphism between every pair of objects. The
category $\mathcal{A}_{\rm parity}^\zeta(M^{d-2})$ is then defined to be the finite completion 
of the $\C$-linearisation $\C\mathsf{T}(M^{d-2}) = \{\C\,T_i \mid T_i\in\mathsf{T}(M^{d-2})\}$ of the category $\mathsf{T}(M^{d-2})$. A concrete model is given, for example, by the functor category 
\begin{equation}
\mathcal{A}_{\rm parity}^\zeta(M^{d-2})\cong \big[\mathsf{T}(M^{d-2}), \fvs \big]
\label{eq:functorcat}\end{equation}
of pre-cosheaves of complex vector spaces on $\mathsf{T}(M^{d-2})$. 

We can construct a $\C$-linear functor $\mathcal{A}_{\rm parity}^\zeta (M^{d-1,1}):\Aa_{\rm parity}^\zeta(M_-^{d-2})\to\Aa_{\rm parity}^\zeta(M_+^{d-2})$ for
a regular 1-morphism $ M^{d-2}_- \times
[0,\epsilon_-)\overset{\varphi_-}{\hookrightarrow} M^{d-1,1}
\overset{\varphi_+}{\hookleftarrow} M^{d-2}_+\times (-\epsilon_+,0]$
in $\CobF$ by using the corresponding boundary and corner contributions to the index formula \eqref{Index theorem corners} to build a complex line encoding all possible boundary conditions on $M_+^{d-2}$. For generators $T_\pm$ of $\mathcal{A}_{\rm parity}^\zeta(M^{d-2}_\pm)$, we define 
\begin{align*}
I_{M^{d-1,1}}(T_-,T_{+})=&-\tfrac{1}{2} \, \big( \, {}^{\rm
                                               b}\eta(\slashed{D}_{\hat{M}'{}^{d-1,1}}
                                               )-\dim \ker
                                               (\slashed{D}_{\hat{M}'{}^{d-1,1}}
                                               ) \\ & \qquad \qquad - \dim(\Lambda_{T_-\oplus T_{+}} \cap \Lambda_C)+ \mu(\Lambda_{T_-\oplus T_{+}}, \Lambda_{C}) \, \big) \ .
\end{align*}
We will sometimes drop the subscript $M^{d-1,1}$ when it is obvious from the context.
We fix a generator $T_-$ of $\mathcal{A}_{\rm parity}^\zeta (M^{d-2}_-)$.
Then we get a morphism in $\mathcal{A}_{\rm parity}^\zeta (M^{d-2}_+)$ from generator $T_{+,i}$ to generator $T_{+,j}$ in $\mathsf{T}(M^{d-2}_+)$ by multiplying the special morphism between $T_{+,i}$ and $T_{+,j}$ with 
\begin{align*}
\zeta^{I_{M^{d-1,1}}(T_-,T_{+,j})-I_{M^{d-1,1}}(T_-,T_{+,i})} \ .
\end{align*} 
These morphisms fit into a family of diagrams $\Ja\big(M^{d-1,1}\big)(T_-)\colon \mathsf{T}(M^{d-2}_+)\rightarrow \mathcal{A}_{\rm parity}^\zeta (M^{d-2}_+)$. We finally define the complex line
\begin{align*}
\mathcal{A}_{\rm parity}^\zeta \big(M^{d-1,1}\big)(T_-)= \lim_{\longleftarrow} \, \Ja\big(M^{d-1,1}\big)(T_-) \ .
\end{align*}
To be precise, we have to invoke the axiom of choice to pick a representative for the limit. We define $\mathcal{A}_{\rm parity}^\zeta \big(M^{d-1,1}\big)(f)$ for a morphism $f\colon T_-\rightarrow T'_-$ in $\sfT(M^{d-2}_-)$ to be the unique map induced by the matrix of compatible morphisms 
\begin{align}
\label{EQ: Definition A on morphisms natural transformation}
\big(\zeta^{I_{M^{d-1,1}}(T'_-,T_{+,j})-I_{M^{d-1,1}}(T_-,T_{+,i})} \big)
\end{align}
via the functoriality of limits.
We can summarise our definition of $\Aa_{\rm parity}^\zeta$ on regular 1-morphisms by the commutative diagram
\begin{equation}\label{Definition on 1-Morphisms}
\begin{tikzcd}
 & & & &\big[ \sfT(M^{d-2}_+), \mathcal{A}_{\rm parity}^\zeta(M^{d-2}_+)\big] \ar[dd, "{\lim\limits_{\longleftarrow}}"] \\
 & & & & \\
 \ \ \ \ \ \mathcal{A}_{\rm parity}^\zeta(M^{d-2}_-)\ar[rrrruu, "{\Ja_{\rm cont}(M^{d-1,1})}"] \ar[rrrr,swap, "{\mathcal{A}_{\rm parity}^\zeta(M^{d-1,1})}"]& & & & \mathcal{A}_{\rm parity}^\zeta(M^{d-2}_+) \\
 \sfT(M^{d-2}_-)\ar[rrrruuu, controls={+(-4,4) and +(0,0)}, "{\Ja(M^{d-1,1})}",in=180] \ar[u, hook] & & & &
\end{tikzcd}
\end{equation}
where $\Ja_{\rm cont}(M^{d-1,1})$ is the continuous extension of $\Ja(M^{d-1,1})$, which is unique up to a canonical natural isomorphism.  

For a regular 2-morphism $M^{d,2}:M_1^{d-1,1}\Rightarrow M_2^{d-1,1}$, we define a family of morphisms in $\mathcal{A}_{\rm parity}^\zeta(M^{d-2}_+)$ by
\begin{align}
\xi\big(M^{d,2}\big)_{T_-,T_{+,i}} = \zeta^{\text{ind}
  (\slashed{D}^+_{\hat{M}'{}^{d,2}}-\widehat{T_-\oplus T_{+,i}}{}^+)}\cdot
  \text{id}_{T_{+,i}} \colon T_{+,i} \longrightarrow T_{+,i} \ .
\label{eq:familyofmorphisms}\end{align}
\begin{proposition} The family of morphisms
  \eqref{eq:familyofmorphisms} induces a natural transformation 
\[
\mathcal{A}_{\rm parity}^\zeta\big(M^{d,2}\big) = \text{id}_{\lim\limits_{\longleftarrow}}\bullet \xi_{\rm
  cont}\big(M^{d,2}\big) \colon \mathcal{A}_{\rm
  parity}^\zeta\big(M^{d-1,1}_1\big) \Longrightarrow \mathcal{A}_{\rm
  parity}^\zeta\big(M^{d-1,1}_2\big) 
\]
as indicated in the diagram
\begin{equation}\label{Definition on 2-Morphisms}
\begin{small}
\begin{tikzcd}
 & & & & & & & &\big[ \sfT(M^{d-2}_+), \mathcal{A}_{\rm parity}^\zeta(M^{d-2}_+)\big] \ar[dddd, "{\lim\limits_{\longleftarrow}}"] \\
 & & & & & & & &\\
  & & & & & & & &\\
   & & & & & & & &\\
 \sfT(M^{d-2}_-) \ar[rrrr, hook] \ar[rrrrrrrruuuu, bend left = 30,
 "\Ja(M_1^{d-1,1})"{name=A1}]\ar[rrrrrrrruuuu, bend left = 12, swap, "\Ja(M_2^{d-1,1})"{name=A2}] & & & & \mathcal{A}_{\rm parity}^\zeta(M^{d-2}_-)\ar[rrrruuuu, bend left = 15, ""{name=B1}]\ar[rrrruuuu, bend right = 15, ""{name=B2}] \ar[rrrr, "{\mathcal{A}_{\rm parity}^\zeta(M_1^{d-1,1})}", ""{name=C1}] \ar[rrrr, "{\mathcal{A}_{\rm parity}^\zeta(M_2^{d-1,1})}", bend right, swap, , ""{name=C2}]& & & & \mathcal{A}_{\rm parity}^\zeta(M^{d-2}_+) 
\arrow[Rightarrow, from=A1, to=A2, shorten <= 1em, shorten >=1em, "{\xi(M^{d,2})}", pos=0.5]
\arrow[Rightarrow, from=B1, to=B2, shorten <= 1em, "{\xi_{\rm cont}(M^{d,2})}"]
\arrow[Rightarrow, from=C1, to=C2, shorten <= 0.5em, shorten >= 0.5em, "{\mathcal{A}_{\rm parity}^\zeta(M^{d,2})} "]
\end{tikzcd}
\end{small}
\end{equation}
where $\xi_{\rm cont}(M^{d,2})$ is the unique extension of
$\xi(M^{d,2})$ to a natural transformation between the continuous
extensions of $\Ja(M_1^{d-1,1})$ and $\Ja(M_2^{d-1,1})$.
\end{proposition}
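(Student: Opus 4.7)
The plan is to verify that the scalars $\xi(M^{d,2})_{T_-,T_{+,i}}$ assemble into a natural transformation between the functors $\Ja(M_1^{d-1,1})$ and $\Ja(M_2^{d-1,1})$, and then to promote this to the required 2-morphism by horizontal composition with the limit 2-functor.

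First I would show that, for fixed $T_-\in \sfT(M^{d-2}_-)$, the family $\{\xi(M^{d,2})_{T_-,T_{+,i}}\}_i$ defines a natural transformation $\Ja(M_1^{d-1,1})(T_-)\Rightarrow \Ja(M_2^{d-1,1})(T_-)$ of functors $\sfT(M^{d-2}_+)\to \mathcal{A}_{\rm parity}^\zeta(M^{d-2}_+)$. Since the hom-sets of $\sfT(M^{d-2}_+)$ are singletons and the relevant morphisms in $\mathcal{A}_{\rm parity}^\zeta(M^{d-2}_+)$ are scalar multiples of the canonical generating morphisms between the $T_{+,\bullet}$, commutativity of the naturality square reduces to the single scalar identity
\begin{align*}
& \text{ind}\big(\slashed{D}^+_{\hat{M}'{}^{d,2}}-\widehat{T_-\oplus T_{+,j}}{}^+\big) -\text{ind}\big(\slashed{D}^+_{\hat{M}'{}^{d,2}}-\widehat{T_-\oplus T_{+,i}}{}^+\big) \\
& \qquad = \big[I_{M_2^{d-1,1}}(T_-,T_{+,j})-I_{M_2^{d-1,1}}(T_-,T_{+,i})\big]-\big[I_{M_1^{d-1,1}}(T_-,T_{+,j})-I_{M_1^{d-1,1}}(T_-,T_{+,i})\big]
\end{align*}
for all generators $T_{+,i},T_{+,j}$. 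This is the key step, and it follows directly from Theorem~\ref{Theorem extended index theorem}: all $T_+$-independent contributions --- namely $\int_{M'{}^{d,2}} K_{\rm AS}$ together with the $^{\mathrm b}\eta$ and $\dim\ker$ terms on the horizontal faces $\hat M_l'{}^{d-1,1}$ --- cancel in the left-hand difference, leaving precisely the $T_+$-dependent corner terms $-\tfrac12\big(\pm\dim(\Lambda_{T_-\oplus T_+}\cap \Lambda_{C_l})+\mu(\Lambda_{T_-\oplus T_+},\Lambda_{C_l})\big)$ at each corner component, which by the very definition of the functions $I_{M_l^{d-1,1}}$ recombine into the right-hand side.

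Next I would verify naturality in the $T_-$-direction: for any morphism $g\colon T_-\to T_-'$ of $\sfT(M^{d-2}_-)$ and any generator $T_{+,i}$, the square comparing $\Ja(M_1^{d-1,1})(g)$ with $\Ja(M_2^{d-1,1})(g)$ through $\xi(M^{d,2})$ must commute. Because $\Ja(M_l^{d-1,1})(g)$ is the unique map whose matrix entries are the scalars \eqref{EQ: Definition A on morphisms natural transformation} and because $\xi(M^{d,2})_{T_-,T_{+,i}}$ is itself a scalar multiple of $\id_{T_{+,i}}$, this reduces to a scalar identity entirely analogous to the one above, now with the corner contributions varying in $T_-$ rather than $T_+$; a second application of Theorem~\ref{Theorem extended index theorem} closes the argument.

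With $\xi(M^{d,2})$ established as a natural transformation of diagrams valued in $\big[\sfT(M^{d-2}_+),\mathcal{A}_{\rm parity}^\zeta(M^{d-2}_+)\big]$ indexed by $\sfT(M^{d-2}_-)$, I would invoke the universal property of the $\C$-linearisation and finite completion of $\sfT(M^{d-2}_-)$ implicit in \eqref{eq:functorcat}: every such natural transformation extends uniquely (up to canonical natural isomorphism) to a natural transformation $\xi_{\rm cont}(M^{d,2})$ between the continuous extensions $\Ja_{\rm cont}(M_1^{d-1,1})$ and $\Ja_{\rm cont}(M_2^{d-1,1})$ on $\mathcal{A}_{\rm parity}^\zeta(M^{d-2}_-)$. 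Since the limit functor $\lim\limits_{\longleftarrow}\colon \big[\sfT(M^{d-2}_+),\mathcal{A}_{\rm parity}^\zeta(M^{d-2}_+)\big]\to \mathcal{A}_{\rm parity}^\zeta(M^{d-2}_+)$ is $\C$-linear and 2-functorial, the horizontal composite $\id_{\lim\limits_{\longleftarrow}}\bullet\,\xi_{\rm cont}(M^{d,2})$ then delivers the required natural transformation $\mathcal{A}_{\rm parity}^\zeta(M_1^{d-1,1})\Rightarrow \mathcal{A}_{\rm parity}^\zeta(M_2^{d-1,1})$ displayed in \eqref{Definition on 2-Morphisms}. The only genuine obstacle is the corner index identity in the first paragraph, which rests entirely on Theorem~\ref{Theorem extended index theorem}; the continuous extensions and final horizontal composition are purely formal.
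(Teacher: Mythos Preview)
Your proposal is correct and follows essentially the same approach as the paper: both reduce the naturality of $\xi(M^{d,2})$ to commutativity of scalar squares in the $T_+$-direction and then in the $T_-$-direction, each of which is an immediate consequence of the corner index theorem (Theorem~\ref{Theorem extended index theorem}), with the remaining passage through continuous extension and the limit functor being purely formal. The paper draws the two relevant diagrams explicitly and remarks that the cancellation works precisely because the incoming and outgoing boundaries enter the index formula with opposite signs, which is exactly the mechanism you invoke.
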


\begin{proof}
The only thing we need to show is that $\xi(M^{d,2})$ is actually a natural transformation. 
First of all, we have to prove that we really do get a diagram from the definition \eqref{eq:familyofmorphisms}. For this, we have to prove that the diagram
\begin{equation}\label{Diagram definition A on morphisms}
\begin{tikzcd}
\vdots \ar{d} & & & & \vdots \ar{d} \\
T_{+,i} \ar{d} \ar{d} \ar[dd,swap , bend right=30,"{\zeta^{I(M^{d-1,1}_1,T_-,T_-,T_{+,i},T_{+,j})}}"]  \ar[rrrr,"{\zeta^{\text{ind} (\slashed{D}^+_{\hat{M}'{}^{d,2}}-\widehat{T_-\oplus T_{+,i}}{}^+)}}"] & & & & T_{+,i} \ar{d} \ar[dd,bend left=30,"{\zeta^{I(M^{d-1,1}_2,T_-, T_-,T_{+,i},T_{+,j})}}"] 
\\
\vdots \ar{d} & & & & \vdots \ar{d} \\
T_{+,j} \ar{d} \ar[rrrr,swap,"{\zeta^{\text{ind} (\slashed{D}^+_{\hat{M}'{}^{d,2}}-\widehat{T_-\oplus T_{+,j}}{}^+)}}"]& & & & T_{+,j} \arrow{d} \\
\vdots  & & & &  \vdots 
\end{tikzcd}\
\end{equation}
commutes, where we defined
\begin{align*}
I(M^{d-1,1},T_{-},T'_{-},T_{+,i},T_{+,j})=I_{M^{d-1,1}}(T'_{-},T_{+,j})-I_{M^{d-1,1}}(T_{-},T_{+,i}) \ .
\end{align*}
This is an immediate consequence of the index theorem \eqref{Index
  theorem corners}; note that this only works because the incoming and outgoing boundaries contribute with different signs. Thus the diagram \eqref{Diagram definition A on morphisms} induces a morphism $\mathcal{A}_{\rm parity}^\zeta(M^{d,2})_{T_-} \colon \mathcal{A}_{\rm parity}^\zeta(M^{d-1,1}_1)(T_-)\rightarrow \mathcal{A}_{\rm parity}^\zeta(M^{d-1,1}_2)(T_-)$. To show that these morphisms fit into a natural transformation it is enough to show that the diagram
\begin{equation*}
\begin{tikzcd}
T_{+,i} \ar[d, swap, "{\zeta^{I(M^{d-1,1}_1,T_-,T'_-,T_{+,i},T_{+,i})}}"] \ar[rrrr,"{\zeta^{\text{ind} (\slashed{D}^+_{\hat{M}'{}^{d,2}}-\widehat{T_-\oplus T_{+,i}}{}^+)}}"] & & & & T_{+,i} \ar[d,"{\zeta^{I(M^{d-1,1}_2,T_-,T'_-,T_{+,i},T_{+,i})}}"]
\\
T_{+,i}  \ar[rrrr,swap,"{\zeta^{\text{ind} (\slashed{D}^+_{\hat{M}'{}^{d,2}}-\widehat{T'_-\oplus T_{+,i}}{}^+)}}"] & & & & T_{+,i} 
\end{tikzcd}
\end{equation*}
commutes. This follows again immediately from the index theorem. 
\end{proof}

We define the theory $\mathcal{A}_{\rm parity}^\zeta$ on limit 1-morphisms as the functor corresponding to a mapping cylinder. From the definition of $\mathcal{A}_{\rm parity}^\zeta(M^{d-1,1})$ it is clear that this is independent of the length of the mapping cylinder, since only the behaviour at infinity is important. On limit 2-morphisms we define the theory to be the value of $\mathcal{A}_{\rm parity}^\zeta$ on a mapping cylinder of length $\epsilon$. 
This completes the definition of $\mathcal{A}_{\rm parity}^\zeta$.

Before demonstrating that $\mathcal{A}_{\rm parity}^\zeta$ is an extended quantum field theory, we explicitly calculate the functors corresponding to limit 1-morphisms.

\begin{proposition}
\label{Proposition: A on limit 1-morphisms}
Let $\phi \colon M^{d-2}_- \rightarrow M^{d-2}_+$ be a limit 1-morphism in $\CobF$ with mapping cylinder $\mathfrak{M}(\phi)$, and let $T_\pm$ be fixed objects in $\sfT(M^{d-2}_\pm)$. Then 
\begin{align}
\label{EQ: A on limit 1-morphisms}
I_{\mathfrak{M}(\phi)}(T_-,T_+) = -\tfrac{1}{2} \, \big( \dim (\Lambda_{T_-} \cap \Lambda_{\phi^\ast T_+}) + \mu(\Lambda_{T_-} ,\Lambda_{\phi^\ast T_+}) \big) \ .
\end{align}
\end{proposition}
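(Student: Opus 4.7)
The plan is to evaluate the four contributions in the definition \eqref{EQ: A on limit 1-morphisms} of $I_{\mathfrak{M}(\phi)}(T_-,T_+)$ directly on the mapping cylinder. After attaching the collars and the cylindrical ends, $\hat{\mathfrak{M}}(\phi)'$ is isometric to $M^{d-2}_-\times\R$ with all background fields pulled back from $M^{d-2}_-$ as constant product data; the identification $\phi$ serves only to relabel $M^{d-2}_+\cong M^{d-2}_-$ at the outgoing cylindrical end.

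I would first verify the vanishing of the two spectral invariants of the unperturbed Dirac operator on $\hat{\mathfrak{M}}(\phi)'$. Repeating the separation-of-variables argument from the proof of Theorem~\ref{Theorem extended index theorem}, any harmonic spinor $\Psi(y,s)=\psi(y)\,\alpha(s)$ reduces to $\slashed{D}_{M^{d-2}_-}\psi=\lambda\psi$ with $\alpha(s)=\e^{-\lambda s}$, which is never square integrable on $\R$, so $\dim\ker\big(\slashed{D}_{\hat{\mathfrak{M}}(\phi)'}\big)=0$. For the b-eta invariant, factorising $\e^{-t\slashed{D}^2}$ as a product heat kernel on $M^{d-2}_-\times\R$ decomposes $^{\mathrm{b}}\mathrm{Tr}\big(\slashed{D}\,\e^{-t\slashed{D}^2}\big)$ into two pieces: one involves $\mathrm{Tr}_{M^{d-2}_-}\big(\sigma_t\,\slashed{D}_{M^{d-2}_-}\,\e^{-t\slashed{D}_{M^{d-2}_-}^2}\big)$, which vanishes by the spectral symmetry of $\slashed{D}_{M^{d-2}_-}$ induced by the anticommutation with $\sigma_t$; the other is the b-trace of $\partial_s\,\e^{t\partial_s^2}$ on $\R$, whose Schwartz kernel vanishes on the diagonal. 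Hence $^{\mathrm{b}}\eta\big(\slashed{D}_{\hat{\mathfrak{M}}(\phi)'}\big)=0$.

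For the remaining two terms, the bounded harmonic spinors on $\hat{\mathfrak{M}}(\phi)'$ are precisely the constant sections $\psi\in\ker(\slashed{D}_{M^{d-2}_-})$, which across the $\phi$-identification become $\phi_\ast\psi\in\ker(\slashed{D}_{M^{d-2}_+})$. Consequently the scattering Lagrangian is the graph of $\phi_\ast$,
\[
\Lambda_C=\big\{\psi\oplus\phi_\ast\psi\,\big|\,\psi\in\ker\big(\slashed{D}_{M^{d-2}_-}\big)\big\}\subset\ker\big(\slashed{D}_{M^{d-2}_-}\big)\oplus\ker\big(\slashed{D}_{M^{d-2}_+}\big),
\]
so that $\Lambda_{T_-\oplus T_+}\cap\Lambda_C\cong\Lambda_{T_-}\cap\Lambda_{\phi^\ast T_+}$. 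For the exterior angle I would invoke Remark~\ref{Remark: Relation eta Invariants} together with the interpretation of $\mu$ as an $\eta$-invariant on a cylinder equipped with Lagrangian boundary conditions~\cite{LeschWojciechowski}: pulling $\Lambda_{T_-\oplus T_+}$ back through $\phi$ realises $^{\mathrm{b}}\eta\big(\slashed{D}_{\hat{\mathfrak{M}}(\phi)'}-\hat T\big)$ as the $\eta$-invariant of a cylinder over $M^{d-2}_-$ with boundary conditions $\Lambda_{T_-}$ and $\Lambda_{\phi^\ast T_+}$, which in turn encodes $\mu(\Lambda_{T_-},\Lambda_{\phi^\ast T_+})$ together with a correction supported on $\Lambda_{T_-}\cap\Lambda_{\phi^\ast T_+}$.

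The main obstacle is this last reduction: the spectral formula \eqref{EQ: Definition of m( , )} defining $\mu(\Lambda_{T_-\oplus T_+},\Lambda_C)$ lives on the larger kernel $\ker(\slashed{D}_Y)$ with $Y=(-M^{d-2}_-)\sqcup M^{d-2}_+$, and matching both the signs and the kernel multiplicities with the target formula requires careful book-keeping of the eigenvalues $\e^{\ii\theta}=-1$ of $-T^-C^+$ that are excluded from the sum but registered by $\dim(\Lambda_{T_-\oplus T_+}\cap\Lambda_C)$. This analytic input rests on the results of~\cite{LeschWojciechowski,RelationEtaInvariants}.
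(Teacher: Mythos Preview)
Your approach is largely correct and rests on the same two external inputs as the paper, namely Remark~\ref{Remark: Relation eta Invariants} and \cite[Theorem~2.1]{LeschWojciechowski}, but the paper organises the argument more efficiently and thereby avoids the obstacle you flag. Rather than computing $^{\mathrm b}\eta$ and $\mu(\Lambda_{T_-\oplus T_+},\Lambda_C)$ separately, the paper immediately invokes \eqref{EQ: b-eta= eta with boundary condition} to package $^{\mathrm b}\eta\big(\slashed D_{\hat{\mathfrak M}(\phi)'}\big)+\mu(\Lambda_{T_-\oplus T_+},\Lambda_C)=\eta(\slashed D_{T_-,T_+})$ as a single $\eta$-invariant with APS-type boundary conditions on the \emph{compact} mapping cylinder. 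That quantity is manifestly diffeomorphism invariant, so the $\mathscr F$-diffeomorphism induced by $(\phi^{-1},\id)$ from $\mathfrak M(\phi)$ to the straight cylinder $M^{d-2}_-\times[0,1]$ converts it into $\eta(\slashed D_{T_-,\phi^\ast T_+})$, which \cite[Theorem~2.1]{LeschWojciechowski} evaluates directly as $\mu(\Lambda_{T_-},\Lambda_{\phi^\ast T_+})$. No separate heat-kernel computation of $^{\mathrm b}\eta$ is needed, and no spectral bookkeeping on the large kernel $\ker(\slashed D_Y)$ is required: the eigenvalues at $\e^{\ii\theta}=-1$ that worry you never have to be isolated, because the combined invariant transforms as a whole. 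The remaining terms $\dim\ker(\slashed D_{\hat{\mathfrak M}(\phi)'})$ and $\dim(\Lambda_{T_-\oplus T_+}\cap\Lambda_C)$ are then handled exactly as you propose, by extending the same diffeomorphism to the manifolds with cylindrical ends and repeating the separation-of-variables and scattering-Lagrangian arguments from the proof of Theorem~\ref{Theorem extended index theorem}; this is what the paper means by ``similar arguments''. In short, your route works but does redundant work on $^{\mathrm b}\eta$ and leaves you with a delicate matching problem; the paper's packaging trick eliminates both.
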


\begin{proof}
By Remark \ref{Remark: Relation eta Invariants} we get a term in $I_{\mathfrak{M}(\phi)}(T_-,T_+)$ corresponding to the $\eta$-invariant with boundary conditions induced by the Lagrangian subspaces $\Lambda_{T_\pm}$.
There is a diffeomorphism induced by $\phi^{-1}$ and $\text{id}$ from the mapping cylinder of $\phi$ with length $1$ to the cylinder $M^{d-2}_-\times [0,1]$. The boundary conditions change to new boundary conditions induced by $\Lambda_{T_-}$ and $\Lambda_{\phi^\ast T_+}$. 
The $\eta$-invariant for this situation was calculated in~\cite[Theorem 2.1]{LeschWojciechowski}, from which we get
\begin{align*}
\eta (\D_{T_-,\phi^\ast T_+})= \mu(\Lambda_{T_-},\Lambda_{\phi^\ast T_+}) \ .
\end{align*}
We can extend the diffeomorphism induced by $\phi^{-1}$ and $\id$ above to manifolds with cylindrical ends attached. The expression (\ref{EQ: A on limit 1-morphisms}) then follows from similar arguments to those used in the proof of Theorem~\ref{Theorem extended index theorem}. 
\end{proof}

\begin{theorem}\label{Theorem: A is extended field theory}
 \ $\mathcal{A}_{\rm parity}^\zeta \colon \mathsf{Cob}_{d,d-1,d-2}^\mathscr{F} \rightarrow \Tvs$ is an invertible extended quantum field theory.
\end{theorem}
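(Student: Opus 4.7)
The plan is to verify the three structural requirements --- 2-functoriality, symmetric monoidal structure, and invertibility --- separately. First, for 2-functoriality on 1-morphisms I would construct the compositor $\Phi_{T_0,T_1}(M_0,M_1) \colon \Aa_{\rm parity}^\zeta(M_1)\circ \Aa_{\rm parity}^\zeta(M_0)(T_0) \to \Aa_{\rm parity}^\zeta(M_1\circ M_0)(T_0)$ foreshadowed in the summary. Fixing intermediate boundary conditions $T_1$ on the middle corner, Lemma \emph{Properties terms in the index theorem} ensures that all contributions from the gluing 1-boundary appear with opposite signs on the two sides, yielding the additivity
\[
I_{M_1\circ M_0}(T_0,T_2)=I_{M_0}(T_0,T_1)+I_{M_1}(T_1,T_2)
\]
for every choice of $T_2$. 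This relates the diagrams $\Ja$ defining the two limits and extends, via the universal property of limits, to the desired natural isomorphism $\Phi$ after arguing that the diagram of boundary conditions is filtered. For limit 1-morphisms, Proposition \emph{$\Aa_{\rm parity}^\zeta$ on limit 1-morphisms} reduces the functoriality check to an identity for Lagrangian intersection dimensions and exterior angles under composition of $\mathscr{F}$-diffeomorphisms, which follows from additivity of $\mu$ under orthogonal direct sums.

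Next I would treat 2-morphisms. Vertical composition preservation is exactly the content of Proposition \emph{Index theorem for composition of regular 2-morphisms}: the cancellation of gluing boundary contributions together with additivity of the bulk Atiyah-Singer density gives $\xi(M_2^{d,2}\circ M_1^{d,2})_{T_-,T_{+,i}} = \xi(M_2^{d,2})_{T_-,T_{+,i}}\circ \xi(M_1^{d,2})_{T_-,T_{+,i}}$, which promotes to the required equality of natural transformations after continuous extension. Horizontal composition compatibility is analogous, using additivity of the index under gluing along a 1-boundary (again via the sign conventions of Lemma \emph{Properties terms in the index theorem}) together with naturality of $\Phi$. Preservation of identity 2-morphisms reduces, through Proposition \emph{$\Aa_{\rm parity}^\zeta$ on limit 1-morphisms}, to the vanishing $\mu(\Lambda_T,\Lambda_T)=0$ combined with $\Lambda_T\cap\Lambda_T=\Lambda_T$, giving a zero exponent.

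For the symmetric monoidal structure, all numerical ingredients entering the definition --- the index, the ${}^{\rm b}\eta$-invariant, kernel dimensions, and the exterior angle $\mu$ --- are additive over connected components. This makes the coherence isomorphisms for associator, braiding and unitors tautological, so the pentagon, hexagon and triangle diagrams commute on the nose after the evident identification $\Aa_{\rm parity}^\zeta(M_1^{d-2}\sqcup M_2^{d-2})\cong \Aa_{\rm parity}^\zeta(M_1^{d-2})\boxtimes \Aa_{\rm parity}^\zeta(M_2^{d-2})$. Invertibility is then witnessed by $(\Aa_{\rm parity}^\zeta)^{-1}=\Aa_{\rm parity}^{\zeta^{-1}}$: under the Deligne tensor product every 2-morphism is assigned $\zeta^{n}\cdot\zeta^{-n}=1$, producing the required natural symmetric monoidal 2-isomorphism to $\mbf1$ in direct analogy with the unextended case of Theorem \emph{$\mathcal{Z}_{\rm parity}^\zeta$ is an invertible quantum field theory}.

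The principal obstacle will be verifying associativity of $\Phi$ under triple composition, i.e.\ the pentagon axiom for the compositor data of the 2-functor. While the underlying numerical content is again additivity of indices and $\eta$-invariants under successive cuts, the bookkeeping of collars, attached mapping cylinders and mapping boxes, and the differing sign conventions for Lagrangian data on incoming versus outgoing corners must be tracked in parallel. The essential tools here are Lemma \emph{Properties terms in the index theorem} together with the collar-length independence of $\int_{M'{}^{d,2}} K_{\rm AS}$ established in the proof of Proposition \emph{Index theorem for composition of regular 2-morphisms}; these ensure that intermediate corner contributions cancel and that the resulting isomorphism depends only on the underlying composite 1-morphism, which is precisely what the pentagon coherence demands.
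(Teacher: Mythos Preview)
Your proposal rests on a claimed additivity
\[
I_{M_1\circ M_0}(T_0,T_2)=I_{M_0}(T_0,T_1)+I_{M_1}(T_1,T_2)
\]
that is \emph{false} in general, and this is not a minor bookkeeping issue but the crux of the whole construction. Recall that $I_{M^{d-1,1}}(T_-,T_+)$ contains the ${}^{\rm b}\eta$-invariant of $\hat{M}'{}^{d-1,1}$ itself, not merely boundary data; $\eta$-invariants do \emph{not} add under gluing of odd-dimensional manifolds along hypersurfaces --- there is a defect term (a Maslov-type correction) and this is precisely what the exterior angle $\mu$ and the Lagrangian intersection terms are designed to record. Lemma~\ref{Lemma: Properties terms in the index theorem} concerns orientation reversal only; it does not give a gluing formula for $I$. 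The paper instead builds the compositor $\Phi$ as an explicit non-trivial cone whose components are $\zeta^{I(T_-,T_1)+I(T_1,T_2)-I(T_-,T_2)}$, and the fact that this exponent is generically non-zero is exactly the statement that the anomaly 2-cocycle $\alpha^{\Aa_{\rm parity}^\zeta}_{\phi_1,\phi_2}$ computed in Section~\ref{sec:projparity} is non-trivial. If your additivity held, $\Phi$ would be the identity and the parity anomaly would vanish.

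The same error propagates: your claim that horizontal composition of 2-morphisms follows from ``additivity of the index under gluing along a 1-boundary'' is again incorrect --- the paper's calculation shows explicitly that $\log_\zeta\Aa_{\rm parity}^\zeta(M_2^{d,2}\bullet M_1^{d,2})$ differs from the sum by exactly the $\Phi$-correction terms on $\partial_\pm$. Likewise, the identity mapping cylinder does \emph{not} give a zero exponent: by Proposition~\ref{Proposition: A on limit 1-morphisms} one has $I_{M^{d-2}\times[0,1]}(T,T)=-\tfrac12\dim\Lambda_T\neq 0$ when $\ker\slashed D_{M^{d-2}}\neq 0$, so the unitor $\Phi_{M^{d-2}}$ is itself non-trivial. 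The coherence condition \eqref{EQ2: Definition 2Functor} holds because these non-trivial contributions \emph{cancel} in the composite $\Phi_{M^{d-1,1},\id}\circ(\id\bullet\Phi_{M^{d-2}})$, not because each is separately trivial. You need to construct $\Phi$ honestly as a non-trivial natural isomorphism and then verify the pentagon and unit coherences by direct computation with the exponents, as the paper does.
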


\begin{proof}
We construct a family of natural isomorphisms $\Phi_{M^{d-2}} \colon \id \Rightarrow \mathcal{A}_{\rm parity}^\zeta(\id_{M^{d-2}})$. For this, it is enough to construct a natural isomorphism $\Phi'_{M^{d-2}}\colon \Ca \Rightarrow \Ja({M^{d-2}\times [0,1]})$, where $\Ca$ is the functor sending an object $T$ of $\sfT(M^{d-2})$ to the constant diagram on $T$. The collection of special morphisms from $T_i$ to $T$ multiplied with $\zeta^{I_{M^{d-2}\times [0,1]}(T,T_i)}$ induces such an isomorphism:
\[
\begin{tikzcd}
T_i \ar[d] \ar[dd, bend right=40,"
{\zeta^{I(T,T_j)-I(T,T_i)}}", swap]  & & \\
\vdots \ar[d] & & T\ar[llu,"\zeta^{I(T,T_i)}",swap] \ar[lld, "\zeta^{I(T,T_j)}"] \\
T_j  & & 
\end{tikzcd}
\]
The naturality with respect to morphisms in $\sfT(M^{d-2})$ follows immediately from the commuting diagram
\[
\begin{tikzcd}
T_1 \ar[d,swap,"1"] \ar[rrr,"\zeta^{I(T_1,T_i)}"] & & &  T_i \ar[d,"{\zeta^{I(T_2,T_i)-I(T_1,T_{i})}}"] \\
 T_2 \ar[rrr,swap,"\zeta^{I(T_2,T_i)}"] & & & T_i
\end{tikzcd}
\] 

For the composition of regular 1-morphisms we have to construct
natural $\C$-linear isomorphisms 
\begin{align}\label{eq:Phinatural}
\Phi_{M^{d-1,1}_1,M^{d-1,1}_2} \colon \mathcal{A}_{\rm parity}^\zeta\big(M^{d-1,1}_2\big) \circ \mathcal{A}_{\rm parity}^\zeta\big(M^{d-1,1}_1\big)\Longrightarrow \mathcal{A}_{\rm parity}^\zeta\big(M^{d-1,1}_2\circ M^{d-1,1}_1\big) \ .
\end{align}
Using \eqref{Definition on 1-Morphisms} we get the diagram
\[
\begin{tikzcd}
 &  &\big[ \sfT(M^{d-2}_{1+}), \mathcal{A}_{\rm
   parity}^\zeta(M^{d-2}_{1+})\big] \ar[rr, "{\lim\limits_{\longleftarrow}
   \, \circ \, \Ja^\star }"] \ar[dd, "{\lim\limits_{\longleftarrow}}"]  &  &\big[ \sfT(M^{d-2}_{2+}), \mathcal{A}_{\rm parity}^\zeta(M^{d-2}_{2+})\big] 
 \ar[dd, "{\lim\limits_{\longleftarrow}}"] \\
 &  & &  &  \\
 \mathcal{A}_{\rm parity}^\zeta(M^{d-2}_{1-})\ar[rrrruu, controls={+(1,6) and +(0,2)}, "{\Ja(M^{d-1,1}_2\circ M^{d-1,1}_1)}" description,in=90, name=U] \ar[rruu, "{\Ja(M^{d-1,1}_1)}" description] \ar[rr,swap, "{\mathcal{A}_{\rm parity}^\zeta(M^{d-1,1}_1)}"]&  & \mathcal{A}_{\rm parity}^\zeta(M^{d-2}_{1+}) \ar[rruu, "{\Ja(M^{d-1,1}_2)}" description] \ar[rr,swap, "{\mathcal{A}_{\rm parity}^\zeta(M^{d-1,1}_2)}"]&  & \mathcal{A}_{\rm parity}^\zeta(M^{d-2}_{2+}) 
\end{tikzcd}
\]
with $M^{d-2}_{1+}=M^{d-2}_{2-}$. The lower part of this diagram is commutative up to a canonical
natural isomorphism coming from the universal property of the limit,
since by definition $\Ja$ is continuous; this isomorphism depends on
the concrete realisation of the limit that we pick. Our goal is to now
define a natural isomorphism 
$$
\Phi' \colon
\lim\limits_{\longleftarrow} \, \circ \, \Ja^\star \circ
\Ja\big(M^{d-1,1}_1\big) \Longrightarrow  \Ja\big(M^{d-1,1}_2\circ M^{d-1,1}_1\big)
$$ 
which then induces the natural isomorphism \eqref{eq:Phinatural}. 

We begin by evaluating $\Ja^\star \circ \Ja(M^{d-1,1}_1)$ on a fixed
object $T_-$ of  $\sfT(M^{d-2}_{1-})$ to get the family of diagrams
\begin{align*}
\Ja^\star \circ \Ja\big(M^{d-1,1}_1\big)(T_-) \colon
                                               \sfT(M^{d-2}_{1+})
                                               \longrightarrow
                                               \big[\sfT(M^{d-2}_{2+}),
                                               \mathcal{A}_{\rm
                                               parity}^\zeta(M^{d-2}_{2+}) \big]
\end{align*}         
defined by
\begin{align*}                                      
T_1 &\!\longmapsto\! \left[ \begin{small} \begin{array}{c}
\Ja^\star\circ \Ja\big(M^{d-1,1}_1\big)(T_-)[T_1]\colon \sfT(M^{d-2}_{2+})
                                        \rightarrow \mathcal{A}_{\rm
                                        parity}^\zeta\big(M^{d-2}_{2+} \big) \\[4pt]
T_2 \mapsto T_2 \\[4pt]
(f \colon T_2\rightarrow T_2') \mapsto \big(\zeta^{I(T_1,T'_2)-I(T_1,T_2)} \cdot f \colon T_2 \rightarrow T'_2\big)
\end{array} \end{small} \right] \ , \\[4pt]
(g \colon T_1 \rightarrow T'_1) &\!\longmapsto\! \left[ \begin{small} \begin{array}{c}
\Ja^\star \circ \Ja\big(M^{d-1,1}_1\big)(T_-)[g] \colon \Ja^\star \circ \Ja\big(M^{d-1,1}_1\big)(T_-)[T_1]\Rightarrow \Ja^\star \circ \Ja\big(M^{d-1,1}_1\big)(T_-)[T'_1] \\[4pt]
\Ja^\star \circ \Ja\big(M^{d-1,1}_1\big)(T_-)[g]_{T_2}\colon T_2[T_1] \xlongrightarrow{\zeta^{I(T_-,T'_1)-I(T_-,T_1)+I(T'_1,T_2)-I(T_1,T_2)}} T_2[T'_1]
\end{array} \end{small} \right] .
\end{align*}
Since the
category $\mathcal{A}_{\rm parity}^\zeta(M^{d-2}_{2+})$ is complete, we can calculate the limit of this diagram objectwise. For this, we fix an object $T_2$ of $\sfT(M^{d-2}_{2+})$. A realisation for the limit is then given by the cone
\begin{equation*}
\begin{tikzcd}
\cdots \ar[r]  & T_2[T_1] \ar[rrrr, "\zeta^{ I(T_-,T'_1)-I(T_-,T_1)+I(T'_1,T_2)-I(T_1,T_2) }"] & &  & &  T_2[T'_1]\ar[r] & \cdots \\
 & & &  & & & \\
  & &  & & & & \\
& & &   \Ja\big(M^{d-1,1}_2\circ M^{d-1,1}_1\big)(T_-)[T_2]=T_2 \ar[llluuu, bend left, out=20]\ar[rrruuu, bend right, out=-20] \ar[lluuu, "{\zeta^{I(T_-,T_1)+I(T_1,T_2)-I(T_-,T_2)}}" description]  \ar[rruuu,swap, "{\zeta^{I(T-,T'_1)+I(T'_1,T_2)-I(T_-,T_2)}}" description]    & & &
\end{tikzcd}
\end{equation*}
This cone is universal, since all maps involved are isomorphisms, and
so by the universal property of limits we get an isomorphism 
\begin{align*}
\Phi^{\prime \, -1}_{T_-,T_2} \colon \Ja\big(M^{d-1,1}_2\circ
  M^{d-1,1}_1\big)(T_-)[T_2] \longrightarrow \lim_{\longleftarrow} \,
  \circ \, \Ja^\star \circ \Ja\big(M^{d-1,1}_1\big)(T_-)[T_2] \ .
\end{align*}

To show that this construction is natural in $T_2$ it is enough to observe that the diagram
\begin{equation*}
\begin{tikzcd}
\Ja\big(M^{d-1,1}_1\big)(T_-)[T_2]=T_2 \ar[dd,swap,  "{\zeta^{I(T_-,T_1)+I(T_1,T_2)-I(T_-,T_2)}}" ] \ar[rrrr, "\zeta^{I(T_-,T'_2)-I(T_-,T_2)}"]& & & &\Ja\big(M^{d-1,1}_1\big)(T_-)[T'_2]=T'_2 \ar[dd,  "{\zeta^{I(T_-,T_1)+I(T'_1,T_2)-I(T'_-,T_2)}}" ] \\
& & & & \\
T_2[T_1] \ar[rrrr, swap, "{\zeta^{I(T'_1,T_2)-I(T_1, T_2)}}"] & & &  & T'_2[T_1]
\end{tikzcd}
\end{equation*}
commutes. From the commutativity of the diagram
\begin{equation*}
\begin{tikzcd}
\Ja\big(M^{d-1,1}_1\big)(T_-)[T_2]=T_2 \ar[dd,swap,  "{\zeta^{I(T_-,T_1)+I(T_1,T_2)-I(T_-,T_2)}}" ] \ar[rrrr, "{\zeta^{I(T'_-,T_2)-I(T_-,T_2)}}"] & & & & \Ja\big(M^{d-1,1}_1\big)(T'_-)[T_2]=T_2 \ar[dd,  "{\zeta^{I(T'_-,T_1)+I(T_1,T_2)-I(T'_-,T_2)}}" ] \\
& & & & \\
T_2[T_1] \ar[rrrr, swap, "{\zeta^{I(T'_-,T_1)-I(T_-,T_1)}}"] & & & & T_2[T_1]
\end{tikzcd}
\end{equation*}
it follows that this construction is also natural in $T_-$. This completes the construction of the natural isomorphism $\Phi$. 

In order for $\Phi$ to equip $\mathcal{A}_{\rm parity}^\zeta$ with the structure of a 2-functor, we need to check naturality with respect to 2-morphisms and associativity. We start with the compatibility with 2-morphisms. Using the index theorem, this follows from the calculation
\begin{align*}
&\log_{\zeta}\mathcal{A}_{\rm parity}^\zeta\big(M_2^{d,2}\bullet M_1^{d,2}\big)_{T_-,T_+} \\
& \qquad \qquad \quad = \int_{(M_2^{d,2}\bullet M_1^{d,2})'} \, K_{\rm AS} +I_{\partial_+(M_2^{d,2}\bullet M_1^{d,2})}(T_-,T_+)-I_{\partial_-(M_2^{d,2}\bullet M_1^{d,2})}(T_-,T_+) 
\\[4pt]
& \qquad \qquad \quad = \int_{M'_1{}^{d,2}} \, K_{\rm AS}+\int_{M'_2{}^{d,2}} \, K_{\rm AS}  +I_{\partial_+(M_2^{d,2}\bullet M_1^{d,2})}(T_-,T_+)-I_{\partial_-(M_2^{d,2}\bullet M_1^{d,2})}(T_-,T_+)
\\& \qquad \qquad \qquad + \big(I_{\partial_+M_1^{d,2}}(T_-,T_1)+I_{\partial_+M_2^{d,2}}(T_1,T_+) \big) - \big(I_{\partial_+M_1^{d,2}}(T_-,T_1) +I_{\partial_+M_2^{d,2}}(T_1,T_+) \big)
\\& \qquad \qquad \qquad + \big(I_{\partial_-M_1^{d,2}}(T_-,T_1)+I_{\partial_-M_2^{d,2}}(T_1,T_+) \big)- \big(I_{\partial_-M_1^{d,2}}(T_-,T_1)+I_{\partial_-M_2^{d,2}}(T_1,T_+) \big) 
\\[4pt]
& \qquad \qquad \quad = \log_{\zeta}\mathcal{A}_{\rm parity}^\zeta\big(M_1^{d,2}\big)_{T_-,T_1}+ \log_{\zeta}\mathcal{A}_{\rm parity}^\zeta\big(M_2^{d,2}\big)_{T_1,T_+} \\ 
& \qquad \qquad \qquad +\big(I_{\partial_+(M_2^{d,2}\bullet M_1^{d,2})}(T_-,T_+)-I_{\partial_+M_1^{d,2}}(T_-,T_1) -I_{\partial_+M_2^{d,2}}(T_1,T_+)\big) \\
& \qquad \qquad \qquad -\big(I_{\partial_-(M_2^{d,2}\bullet M_1^{d,2})}(T_-,T_+)-I_{\partial_-M_1^{d,2}}(T_-,T_1) -I_{\partial_-M_2^{d,2}}(T_1,T_+)\big) \\[4pt]
& \qquad \qquad \quad = \log_{\zeta}\mathcal{A}_{\rm parity}^\zeta\big(M_1^{d,2}\big)_{T_-,T_1}+ \log_{\zeta}\mathcal{A}_{\rm parity}^\zeta\big(M_2^{d,2}\big)_{T_1,T_+} \\ 
& \qquad \qquad \qquad +\log_{\zeta}\big(\Phi_{\partial_+ M_1^{d,2},\partial_+M_2^{d,2}}\big)_{T_-,T_+}+\log_{\zeta}\big(\Phi^{-1}_{\partial_- M_1^{d,2},\partial_-M_2^{d,2}}\big)_{T_-,T_+} \ ,
\end{align*}
for all objects $T_1$ of $\sfT\big(\partial_+ \partial_+ M_1^{d,2}\big)$.

It remains to demonstrate compatibility with associativity: $\Phi \circ (\Phi \bullet\id) = \Phi \circ ( \id\bullet \Phi) $, i.e. the coherence condition \eqref{EQ1: Definition 2Functor}. For this, we fix three composable $d-1$-dimensional manifolds $M^{d-1,1}_i$, $i=1,2,3$ with incoming boundaries $M^{d-2}_{i-}$ and outgoing boundaries $M^{d-2}_{i+}$. By the naturality of all constructions, it is enough to check the relation for fixed objects $T_-$ of $ \sfT(M^{d-2}_{1-})$, $T_1$ of $ \sfT(M^{d-2}_{2-})=\sfT(M^{d-2}_{1+})$, $T_2 $ of $ \sfT(M^{d-2}_{3-})= \sfT(M^{d-2}_{2+})$, and $T_+$ of $ \sfT(M^{d-2}_{3+})$. This follows immediately from the commutative diagram
\begin{equation*}
\begin{tikzcd}
& & & & T_+ \ar[ddddrrrr, "{\zeta^{I(T_1,T_+)-I(T_2,T_1)-I(T_2,T_+)}}" description] \ar[ddddllll, "{\zeta^{I(T_-,T_2)-I(T_-,T_1)-I(T_1,T_2)}}" description] \ar[dddddddd, "{\zeta^{I(T_-,T_+)-I(T_2,T_1)-I(T_2,T_+)-I(T_-,T_1)}}" description] & & & & \\
 & & & &    & & & & \\
  & & & &    & & & & \\
   & & & &    & & & & \\
 T_+ \ar[ddddrrrr, "{\zeta^{I(T_-,T_+)-I(T_-,T_2)-I(T_2,T_+)}}" description] & & & & & & & & T_+ \ar[ddddllll, "{\zeta^{I(T_-,T_+)-I(T_-,T_1)-I(T_1,T_+)}}" description] \\
  & & & &    & & & & \\
   & & & &    & & & & \\
 & & & &    & & & & \\
 & & & & T_+ & & & &
\end{tikzcd}
\end{equation*}   

We finally have to check the coherence condition \eqref{EQ2: Definition 2Functor}.
We fix a regular 1-morphism $M^{d-1,1}\colon M^{d-2}_-\rightarrow M^{d-2}_+$. The situation can be represented diagramatically by
\begin{equation*}
\begin{small}
\begin{tikzcd}
&  & \ar[d,swap, Leftarrow, shorten >= 1, shorten <= -25, "\Phi'"] &  &  \\
 &  &\big[ \sfT(M^{d-2}_+), \mathcal{A}_{\rm parity}^\zeta(M^{d-2}_+) \big] \ar[rr, "{\lim\limits_{\longleftarrow} \, \circ \, \Ja^\star }", bend left=15] \ar[dd,swap, "{\lim\limits_{\longleftarrow} }"]  &  &\big[ \sfT(M^{d-2}_{+}), \mathcal{A}_{\rm parity}^\zeta(M^{d-2}_{+}) \big] 
 \ar[dd, "{\lim\limits_{\longleftarrow}}"] \\
 &  & &  &  \\
 \mathcal{A}_{\rm parity}^\zeta(M^{d-2}_{-})\ar[rrrruu, controls={+(1,6) and +(0,2)}, "{\Ja(M^{d-1,1})}" description,in=90, name=U] \ar[rrrr, bend right=15, in=-110, "\mathcal{A}_{\rm parity}^\zeta(M^{d-1,1})",swap] \ar[rruu, "{\Ja(M^{d-1,1})}" description] \ar[rr,swap, "{\mathcal{A}_{\rm parity}^\zeta(M^{d-1,1})}"]&  & \mathcal{A}_{\rm parity}^\zeta(M^{d-2}_+)  \ar[rruu, "{\Ja(\id)}" description, bend left=15, ""{name=ar2}]\ar[rruu, "{\Ca}" description, bend right=15, ""{name=ar1}] \ar[rr, "{\mathcal{A}_{\rm parity}^\zeta(\id)}", bend right, swap] \ar[rr,"\id" , swap] \ar[d,swap, Rightarrow,shorten <= 0, shorten >= -10, "\Phi_{M^{d-1,1},\id}" ]& \ & \mathcal{A}_{\rm parity}^\zeta(M^{d-2}_{+})   \\
  & & \ &\ar[u,swap, Leftarrow,shorten <= 0 , shorten >= 5,"\Phi_{M_+^{d-2}}"] &
 \ar[from=ar1, to=ar2, Rightarrow, shorten <= 2 , shorten >= 15,swap, "\Phi'_{M^{d-2}_+}", pos=0.2]
\end{tikzcd}
\end{small}
\end{equation*}
where here we abbreviate the identity regular 1-morphism
$M_+^{d-2}\times[0,1]$ by $\id$.
We have to show that the composition of the natural transformations in the lower part of this diagram is the identity. We can do this by showing that the composition in the upper part is the identity. We evaluate the resulting natural transformation at a fixed object $T_-$ of $\sfT(M^{d-2}_-)$. By naturality we can also fix an object $T_+$ of $\sfT(M^{d-2}_+)$. Then the composition gives
\begin{equation}
\label{EQ:Proof of C2}
\begin{tikzcd}
\big( T_+ \ar[rr, "{\zeta^{I_{\id}(T_+,T_+)}}"] & & T_+ \ar[rrrrrrr,
  "{\zeta^{I_{M^{d-1,1}}(T_-,T_+)-I_{M^{d-1,1}}(T_-,T_+)-I_{\id}(T_+,T_+)}}"]
  & & & & & & & T_+\big) = \big( T_+ \ar[r,"\id"]& T_+\big) \ ,
\end{tikzcd}
\end{equation}
which proves the condition \eqref{EQ2: Definition 2Functor}. The coherence condition for $\mathcal{A}_{\rm parity}^\zeta(\id)\circ \mathcal{A}_{\rm parity}^\zeta(M^{d-1,1})$ can be proven in the same way.

Next we come to the vertical composition of regular 2-morphisms. It is
enough to show that the composition is given by multiplication for
fixed objects $T_\pm $ of $\mathcal{A}_{\rm parity}^\zeta(M^{d-2}_\pm)$. This follows immediately from Proposition~\ref{Corollary: Index theorem for composition of regular 2-morphisms} by an argument similar to the one used in the proof of Theorem~\ref{A is field theory}.
The conditions for limit 1-morphisms and limit 2-morphisms follow now from their representations as mapping cylinders. 

Now we check compatibility with the monoidal structure. There are canonical $\C$-linear equivalences of categories given on objects by
\[
\chi^{-1}_{M^{d-2}, M'{}^{d-2}}\colon \mathcal{A}_{\rm parity}^\zeta(M^{d-2} \sqcup M'{}^{d-2}) \longrightarrow \mathcal{A}_{\rm parity}^\zeta(M^{d-2})\boxtimes \mathcal{A}_{\rm parity}^\zeta(M'{}^{d-2})
\] 
sending $(T, T')\in \text{End}_\C\big(\ker (\slashed{D}_{M^{d-2}}) \big)\oplus \text{End}_\C\big(\ker (\slashed{D}_{M'{}^{d-2}}) \big) \cong \text{End}_\C\big(\ker (\slashed{D}_{M^{d-2}\sqcup M'{}^{d-2}}) \big)$ to $T \boxtimes T'$, and 
\[
\iota^{-1} \colon \mathcal{A}_{\rm parity}^\zeta(\emptyset) \longrightarrow \fvs
\]
sending $0\in\{0\}=\text{End}_\C\big(\ker (\slashed{D}_{\emptyset}) \big)$ to $\C$. All further structures required for $\mathcal{A}_{\rm parity}^\zeta$ to be a symmetric monoidal 2-functor are trivial. It is straightforward if tedious to check that all diagrams in the definition of a symmetric monoidal 2-functor commute, but we shall not write them out explicitly.
Finally, it is straightforward to see that $\mathcal{A}_{\rm parity}^\zeta$ factors through the Picard 2-groupoid $\mathsf{Pic}_2(\Tvs)$, and hence $\mathcal{A}_{\rm parity}^\zeta$ is invertible. 
\end{proof}

\begin{remark}
The proof of Theorem \ref{Theorem: A is extended field theory} is more or less independent of the concrete form of $I_{M^{d-1,1}}(T_-,T_+)$ and the index theorem. It only uses additivity under vertical composition and the decomposition 
\[
\text{ind}\big(\slashed{D}^+_{\hat{M}'{}^{d,2}}-\widehat{T_-\oplus T_{+}}{}^+ \big) = \int_{M'{}^{d,2}} \, K_{\rm AS} +I_{\partial_+M^{d,2}}(T_-,T_+)-I_{\partial_-M^{d,2}}(T_-,T_+) \ ,
\]
into a local part and a global part depending solely on boundary conditions. Hence it should be possible to apply this or a similar construction to a large class of invariants depending on boundary conditions. 
A particularly interesting example would involve $\eta$-invariants on odd-dimensional manifolds with corners, which should be related to chiral anomalies in even dimensions and extend Dai-Freed theories~\cite{DaiFreed}. 
\end{remark}

\subsection{Projective representations and symmetry-protected topological phases\label{sec:projparity}}

A quantum field theory with parity anomaly is now regarded as a theory relative to $\Aa_{\rm parity}^\zeta$ as described in Section~\ref{sec:anomaliesextended}, i.e. a natural
symmetric monoidal 2-transformation $A_{\rm parity}^\zeta:\mbf1\Rightarrow\mathsf{tr}\Aa_{\rm parity}^\zeta$. The concrete description of the extended quantum field theory
$\mathcal{A}_{\rm parity}^\zeta$ given in the proof of Theorem~\ref{Theorem:
  A is extended field theory} allows us to calculate the corresponding
groupoid 2-cocycle along the lines discussed in
Section~\ref{sec:projreps}; this information about the parity anomaly is contained in the isomorphism \eqref{eq:Phinatural}. We choose a $\C$-linear equivalence of
categories $\chi \colon \mathcal{A}_{\rm parity}^\zeta(M^{d-2}) \rightarrow
\fvs$ sending all objects $T$ of $\sfT(M^{d-2})$ to $\C$ and all
morphisms $f$ of $\sfT(M^{d-2})$ to $\text{id}_\C$; a weak inverse is
given by picking a particular object $T_{M^{d-2}}$ in $\sfT(M^{d-2})$ and
mapping $\C$ to $T_{M^{d-2}}$. The functor $\Aa_{\rm parity}^\zeta(\phi)$ corresponding to a limit 1-morphism
$\phi \colon M^{d-2}_1 \rightarrow M^{d-2}_2$ in the symmetry groupoid
$\Sym\CobF$ is given by taking the tensor product with the complex line
\[
\begin{tikzcd}
L_{\chi,\phi} = \lim\limits_{\stackrel{\scriptstyle\longleftarrow}{\footnotesize\sfT(M^{d-2}_2)}} \
\Big(\C\,T_i \ar[rrrrrrr,"{\zeta^{I_{\mathfrak{M}(\phi)}(T_{\footnotesize
      M^{d-2}_1},T_{j})-
    I_{\mathfrak{M}(\phi)}(T_{\footnotesize M^{d-2}_1},T_{i})}}"] & & & & & &&
\C\,T_j \Big) \ ,
\end{tikzcd}
\]
where $\mathfrak{M}(\phi)$ is the mapping cylinder of $\phi $.
A choice of an object $T_{M_1^{d-2}} $ of $\sfT(M_1^{d-2})$ defines an
isomorphism $\varphi_{M_1^{d-2}} \colon L_{\chi,\phi} \rightarrow \C
$, which for simplicity we pick to be the same boundary mass
perturbation as chosen for the weak inverse above.  The groupoid
cocycle evaluated at $\phi_1 \colon M^{d-2}_1 \rightarrow M^{d-2}_2$
and $\phi_2 \colon M^{d-2}_2 \rightarrow M^{d-2}_3$ corresponding to
this choice is then given by
\begin{align*}
\alpha^{\Aa_{\rm parity}^\zeta}_{\phi_1, \phi_2}= \zeta^{I_{\mathfrak{M}(\phi_2 \circ \phi_1)}(
  T_{M^{d-2}_1}, T_{M^{d-2}_3})-I_{\mathfrak{M}(\phi_1)}(
  T_{M^{d-2}_1}, T_{M^{d-2}_2})-I_{\mathfrak{M}(\phi_2)}(
  T_{M^{d-2}_2}, T_{M^{d-2}_3})} \ .
\end{align*}
We can evaluate this expression explicitly by using \eqref{EQ: A on limit
  1-morphisms} to get
\begin{align}
\log_\zeta \alpha^{\Aa_{\rm parity}^\zeta}_{ \phi_1, \phi_2}=& -\frac{1}{2} \, \Big(\dim
                                      \big(\Lambda_{T_{M^{d-2}_1}}\cap \Lambda_{                                      
                                      \phi_1^\ast\,\phi_2^\ast T_{M^{d-2}_3}} \big) +
  \mu\big(\Lambda_{T_{M^{d-2}_1}},
  \Lambda_{\phi_1^\ast\, \phi_2^\ast T_{M^{d-2}_3}} \big) \nonumber \\ & \qquad \qquad 
  -\dim \big(\Lambda_{T_{M^{d-2}_1}
                                      }\cap \Lambda_{
                                      \phi_1^\ast T_{M^{d-2}_2}} \big)-\mu\big(\Lambda_{T_{M^{d-2}_1}}, \Lambda_{\phi_1^\ast
  T_{M^{d-2}_2}} \big) \nonumber \\ & \qquad \qquad 
  -\dim
                                      \big(\Lambda_{T_{M^{d-2}_2}}\cap
                                      \Lambda_{\phi_2^\ast
                                      T_{M^{d-2}_3}}\big)
                                       -\mu\big(\Lambda_{T_{M^{d-2}_2}},\Lambda_{ \phi_2^\ast T_{M^{d-2}_3}}
  \big) \Big) \ .
\label{eq:2cocycleexplicit}\end{align} 

To calculate the part of the 2-cocycle involving identity 1-morphisms we can use \eqref{EQ:Proof of C2} to get
\begin{align*}
\alpha^{\Aa_{\rm parity}^\zeta}_{\phi, \id_{M^{d-2}}}= \alpha^{\Aa_{\rm parity}^\zeta}_{\id_{M^{d-2}},\phi} =
  \zeta^{-I_{M^{d-2}\times [0,1]}(T_{M^{d-2}},T_{M^{d-2}})}=
  \zeta^{-\frac{1}{4} \dim\ker (\D_{M^{d-2}})} \ ,
\end{align*} 
where the last equality follows from \eqref{EQ: A on limit
  1-morphisms}. From a physical point of view it is natural to assume
this to be equal to $1$, since the identity limit morphism should
still be a non-anomalous symmetry of every quantum field theory. 
We can achieve this by normalising our anomaly quantum field theory
$\Aa_{\rm parity}^\zeta$ to the theory $\tilde\Aa_{\rm parity}^\zeta$ obtained by
redefining 
\[
\tilde I_{M^{d-1,1}}(T_-,T_+) = I_{M^{d-1,1}}(T_-,T_+)+
\tfrac{1}{8}\, \big( \dim \ker (\D_{\partial_- M^{d-1,1}}) + \dim \ker
(\D_{\partial_+ M^{d-1,1}}) \big) \ .
\]   
The proof of Theorem \ref{Theorem: A is extended field theory} then
carries through verbatum with $I_{M^{d-1,1}}$ replaced by $\tilde
  I_{M^{d-1,1}}$ everywhere.

\begin{example}
We conclude by illustrating how to extend
Example~\ref{Example:TopologicalInsolature} to the
anomaly quantum field theory $\mathcal{A}_{\rm parity}^{(-1)}$, glossing
over many technical details.
To construct the second quantized Fock space of a quantum
field theory of fermions coupled to a background gauge field on a
Riemannian manifold
$M^{d-2}$, one needs a polarization
\[
H=H^+\oplus H^-
\]  
of the one-particle Hilbert space $H$ of
wavefunctions, which we take to be the sections of the twisted spinor
bundle $S_{M^{d-2}}$. If the Dirac Hamiltonian $\D_{M^{d-2}}$ has no zero
  modes, then there exists a canonical polarization given by taking $H^+=H^{>0}$
  (resp. $H^-=H^{<0}$) to be the space spanned by the positive
  (resp. negative) energy eigenspinors. 
Given such a polarization we can define 
\[
A_{\rm parity}^{(-1)}(M^{d-2})= \mbox{$\bigwedge$} H^+\otimes \mbox{$\bigwedge$}
(H^-)^\ast \ ,
\]  
where $\bigwedge H$ denotes the exterior algebra generated by the
vector space $H$. Now time-reversal (or orientation-reversal) symmetry acts
by interchanging $H^+$ and $H^-$, and there is no problem extending
this symmetry to the Fock space $A_{\rm parity}^{(-1)}(M^{d-2})$. 

In the case that $\ker(\D_{M^{d-2}})$ is non-trivial, as is the case
for fermionic gapped quantum phases of matter, one could try to declare all zero modes to belong to $H^{>0}$ or $H^{<0}$ and use the corresponding polarization to define a Fock space. We cannot apply this method of quantization, since it breaks time-reversal symmetry. 
Therefore we are forced to use a different polarization compatible
with orientation-reversal symmetry. There is no canonical choice for
such a polarization, but rather a natural family parameterized by Lagrangian subspaces $\Lambda_T \subset \ker(\D_{M^{d-2}})$:
\[
H^+(\Lambda_T) = H^{>0}\oplus \Lambda_T \qquad \text{and} \qquad H^-(\Lambda_T) = H^{<0}\oplus \Gamma \Lambda_T \ . 
\] 
Since orientation reversion acts proportionally to the chirality
operator $\Gamma$ on spinors, these polarizations are compatible with the symmetry. 
We then get a family of Fock spaces
\[
A_{\rm parity}^{(-1)}(M^{d-2},T)= \mbox{$\bigwedge$} H^+(\Lambda_T)\otimes \mbox{$\bigwedge$}
H^-(\Lambda_T)^\ast = \mbox{$\bigwedge$} H^{>0}\otimes
\mbox{$\bigwedge$} \big(H^{<0}\big)^{ \ast } \otimes F(M^{d-2},T) \ ,
\] 
where the essential part for our discussion is encoded in the
finite-dimensional vector space
$$
F(M^{d-2},T)=\mbox{$\bigwedge$} \Lambda_T \otimes \mbox{$\bigwedge$} (
\Gamma \Lambda_T)^\ast \ .
$$ 
Fixing an ordered basis for every $\Lambda_T$, these
finite-dimensional vector spaces fit into a $\fvs $-valued pre-cosheaf
$A_{\rm parity}^{(-1)}(M^{d-2})$ on $\sfT(M^{d-2})$, where we assign to a morphism $T_1 \rightarrow T_2$ the linear map induced by sending the fixed basis of $\Lambda_{T_1}$ to the basis of $\Lambda_{T_2}$.    
By \eqref{eq:functorcat} this is an element of $\mathcal{A}^{(-1)}_{\text{parity}}(M^{d-2})$, or equivalently a $\C$-linear functor
\[
A_{\rm parity}^{(-1)}(M^{d-2}) : \fvs \longrightarrow \mathcal{A}^{(-1)}_{\text{parity}}(M^{d-2}) \ .
\]

We sketch how these pre-cosheaves fit into a natural symmetric
monoidal 2-transformation,
realising an anomalous quantum field theory $A_{\rm parity}^{(-1)}$ with parity anomaly
according to Definition~\ref{Definition Anommalous field theory}.
For a 1-morphism $M^{d-1,1}\colon M^{d-2}_- \rightarrow M^{d-2}_+$ we have to construct a natural transformation $A^{(-1)}_\text{parity} (M^{d-1,1})\colon \mathcal{A}^{(-1)}_\text{parity}(M^{d-1,1})\circ A^{(-1)}_\text{parity} (M^{d-2}_-) \Rightarrow A^{(-1)}_\text{parity} (M^{d-2}_+)$. The left-hand side is given by the pre-cosheaf 
\begin{eqnarray*}
\sfT (M^{d-2}_+)  & \longrightarrow & \fvs \ , \\
T & \longmapsto & \lim\limits_{\stackrel{\scriptstyle\longleftarrow}{\footnotesize\sfT(M^{d-2}_-)}} \ \Big(
\begin{tikzcd}
\dots \rightarrow F(M^{d-2}_-,T_-)\ar[rrr,"{(-1)^{I(T_-',T)-I(T_-,T)}}"] & & & F(M^{d-2}_-,T_-') \rightarrow  \dots
\end{tikzcd}
\Big) \ .
\end{eqnarray*}
This implies that constructing a natural transformation is the same as
defining a family of compatible linear maps\footnote{This requires replacing the limit by a colimit, which is possible since it is taken over a groupoid inside $\fvs$.} $A^{(-1)}_\text{parity}(M^{d-1,1})_{T_-,T_+}\colon F(M_-^{d-2},T_-)\rightarrow F(M_+^{d-2},T_+)$. These should again be given by an appropriate regularization of path integrals. As before we assume that these maps are well-defined up to a sign. 
To fix the sign we have to consistently fix reference background
fields on all 1-morphisms. This is possible, for example, by using a connection on the universal bundle and pullbacks along classifying maps. 
Again we can fix the sign at these reference fields to be positive. Using a spectral flow similar to \eqref{eq:spectralflow} with boundary conditions $T_-$ and $T_+$, we can fix the sign for all other field configurations. 
Assuming that this spectral flow can be calculated by the index with
appropriate boundary conditions, we see that these sign ambiguities
satisfy the coherence conditions encoded by
$\mathcal{A}^{(-1)}_{\text{parity}}$, i.e. they define a natural symmetric
monoidal 2-transformation. 
This demonstrates in which sense a field theory with parity anomaly takes values in $\mathcal{A}^{(-1)}_{\text{parity}}$. 
\end{example}

\subsection*{Acknowledgments}

We thank Severin Bunk and Lukas Woike for helpful discussions.
This work was completed while
R.J.S. was visiting the National Center for Theoretical Sciences in
Hsunchin, Taiwan during August/September 2017, whom he warmly thanks
for support and hospitality during his stay there. This work was supported by the COST Action MP1405 QSPACE, funded by the
European Cooperation in Science and Technology (COST). The work of L.M. was
supported by the Doctoral Training Grant ST/N509099/1 from the UK Science and Technology
Facilities Council (STFC).
The work of
R.J.S. was supported in part by the STFC
Consolidated Grant ST/L000334/1. 

\appendix

\section{Manifolds with corners\label{Appendix corners}}

In this appendix we collect information about manifolds with corners necessary for our constructions, following \cite[Chapter~3.1]{schommer2011classification} for the most part. We also give a short introduction to the concepts of geometry on such manifolds which are used in the main text, following~\cite{AnalyticSurgery,MelrosebGeo,ThesisLoya}. 

\subsection{Basic definitions}\label{Appendix Manifolds with corners..}

Roughly speaking, a manifold of dimension $d$ is a topological space which locally looks like open subsets of $\R^d$. The idea behind manifolds 
with corners of codimension $2$ is to replace $\R^d$ by $\R^{d-2}\times \R_{\geq0}^2$; we denote by $\mathrm{pr}_{\R_{\geq0}^2}:\R^{d-2}\times \R_{\geq0}^2\to \R_{\geq0}^2$ the projection. 
A chart for a subset $U$ of a topological space $X$ is then a homeomorphism $\varphi \colon U \rightarrow V \subset \R^{d-2}\times \R_{\geq0}^2$. 
Two charts $\varphi_1 \colon U_1 \rightarrow V_1$ and $\varphi_2 \colon U_2 \rightarrow V_2$ are compatible if $\varphi_2 \circ \varphi_1^{-1}\colon \varphi_1(U_1\cap U_2)\rightarrow \varphi_2(U_1\cap U_2)$ is a diffeomorphism. 
A map between subsets of $\R^{d-2}\times \R_{\geq0}^2$ is smooth if there exists an extension to open subsets of $\R^d$ which is smooth. 
As for manifolds, a collection of charts covering $X$ is called an atlas. 
An atlas is maximal if it contains all compatible charts. 

\begin{definition}
A \underline{manifold with corners} of codimension $2$ is a second countable Hausdorff space $M$ together with a maximal atlas.
\end{definition}    

\begin{remark}
Closed manifolds and manifolds with boundary are in particular manifolds with corners.
\end{remark}

We define the tangent space $T_x M$ at a point $x\in M$ as the space of derivatives on the real-valued functions $C^\infty(M)$ at $x$. We define embeddings in the same way as for manifolds without corners. We are now able to introduce an essential concept used throughout the main text.

\begin{definition}\label{def:collar}
A \underline{collar} for a submanifold $Y\subset \partial M$ is a diffeomorphism $\varphi \colon U_Y \rightarrow Y\times [0,\epsilon)$ for some fixed $\epsilon>0$ and a neighbourhood $U_Y$ of $Y$.  
\end{definition}

Note that there are situations in which no collars exist.

Given $x\in M$ we define the index of $x$ to be the number of coordinates of $\big(\mathrm{pr}_{\R_{\geq0}^2}\circ \varphi \big) (x)$ equal to $0$ for a chart $\varphi$. Clearly $\mathrm{index}(x)\in\{ 0,1,2\}$, and this definition does not depend on the choice of chart. 
The corners of $M$ are the collection of all points of index~$2$. A connected face of $M$ is the closure of a maximal connected subset of points of index~$1$. A manifold with corners is a manifold with faces if each $x\in M$ belongs to exactly $\mathrm{index}(x)$ connected faces.
In this case we define a face of $M$ to be a disjoint union of connected faces,  which is a manifold with  boundary.  
A boundary defining function for a face $H_i$ is a function $\rho_i \in C^\infty(M)$ such that
$\rho_i(x)\geq 0$ and $\rho_i(x)=0$ if and only if $x\in H_i$.

\begin{definition}
A \underline{$\langle 2 \rangle$-manifold} is a manifold $M$ with faces together with two faces $\partial_0 M$ and $\partial_1 M$ such that $\partial M = \partial_0 M \cup \partial_1 M$ and $\partial_0 M \cap \partial_1 M$ are the corners of $M$. 
\end{definition}  
Denote by $\mathsf{[1]}$ the category corresponding to the ordered set $\{0,1\}$. A $\langle 2 \rangle$-manifold $M$ then defines a diagram $M \colon \mathsf{[1]}^2 \rightarrow \mathsf{Man}_{\mathrm{c}} $ of shape $\mathsf{[1]}^2$ in the category $\mathsf{Man}_{\mathrm{c}} $ of manifolds with corners and smooth embeddings:
\[
\begin{tikzcd}
& M & \\
\partial_0 M \ar[ru] & & \partial_1 M \ar[lu] \\
& \partial_0 M\cap \partial_1 M \ar[ru] \ar[lu] & 
\end{tikzcd}
\]

\subsection{Gluing principal bundles}\label{Section: Gluing}

Given principal bundles with connections on manifolds $M_1$ and $M_2$, gluing them along a common boundary $\Sigma$ requires some care. Naively one could try to cover the glued manifold $M$ by $\{M_1, M_2\}$ and use the descent property for the stack of principal bundles with connection. However, this does not work as $M_1$ and $M_2$ are not open subsets of $M$. A way out is to deform $M_1$ and $M_2$ into open subsets of $M$ by cutting out a collar near their common boundary. But there is no canonical choice for such a collar. We need in this case a third open set interpolating between the two manifolds. In general there is no canonical choice for such an interpolation, whence we should consider it together with the collar as part of the gluing data.\footnote{For topological stacks all choices are equivalent.}

We give an explicit construction for this rather complicated gluing procedure of principal bundles with connections over smooth manifolds and specify the additional information needed. We fix principal bundles with connection $\pi_1 \colon P_1 \rightarrow M_1$ and $\pi_2 \colon P_2 \rightarrow M_2$ over oriented smooth $d$-dimensional compact manifolds $M_1$ and $M_2$. We assume that there are neighbourhoods $M_{1,+}\subset M_1$ and $M_{2,-}\subset M_2$ of parts of the boundaries, and require orientation-preserving diffeomorphisms $\varphi_1 \colon M_{1,+}\rightarrow \Sigma \times (-\epsilon_1,0]$ and $\varphi_2 \colon M_{2,-}\rightarrow \Sigma \times [0,\epsilon_2)$ for fixed $\epsilon_i>0$. Then $\varphi_i$ for $i=1,2$ induce projections $p_i \colon M_{i,\pm}\rightarrow \varphi_i^{-1}(\Sigma\times \{0\})$. We further assume that $P_i$ is of product structure over $M_{i,\pm}$, i.e.  $P_i|_{M_{i,\pm}}=p_i^\ast P_i|_{\varphi_i^{-1}(\Sigma\times \{0\})}$. We fix a third bundle with connection $\pi'_3\colon P'_3 \rightarrow \Sigma$, defining a bundle $\pi_3 \colon P_3 \rightarrow \Sigma \times (-\epsilon_1,\epsilon_2)$ via pullback along the projection onto $\Sigma \times \{0\}$. We choose connection-preserving gauge transformations $\psi'_i: P'_3\rightarrow (\varphi_i^ {-1})^{\ast}(P_i|_{\varphi_i^{-1}(\Sigma\times \{0\})})$. These gauge transformations induce gauge transformations over that part of $\Sigma \times (-\epsilon_1,\epsilon_2)$ where both bundles are defined, which are constant along the fibres. 

We now glue $M_1$ and $M_2$ along $\Sigma$ as usual to get a manifold 
\begin{align*}
 M=M_1 \sqcup_{\varphi_2^{-1}|_{\Sigma}\circ \varphi_1|_{\varphi_1^{-1}(\Sigma\times \{0\})}}M_2 \ , 
\end{align*}
where the collars $M_{1,+}$ and $M_{2,-}$ are needed to define a unique smooth structure on $M$. We cover $M$ with the three open sets
\begin{align}
U_1 &= M_1 \setminus \varphi_1^{-1}\big(\Sigma\times [-\tfrac{\epsilon_1}{4},0] \big)  \ , \nonumber \\[4pt]
U_2 &= M_2 \setminus \varphi_2^{-1}\big(\Sigma\times [0,\tfrac{\epsilon_2}{4}] \big) \ , \nonumber \\[4pt]
U_3 &= \big(\varphi_1^{-1}\sqcup \varphi_2^{-1} \big)\big( \Sigma \times (-\tfrac{\epsilon_1}{2},\tfrac{\epsilon_2}{2})\big) \ .
\label{Definition Cover Gluing}\end{align} 
The structures fixed so far induce connection-preserving gauge transformations 
$\psi_1 \colon P_1|_{U_1\cap U_3}\rightarrow (\varphi_1)^\ast (P_3)|_{U_1\cap U_3} $ and $\psi_2 \colon P_2|_{U_2\cap U_3}\rightarrow (\varphi_2)^\ast (P_3)|_{U_2\cap U_3} $. We use the descent property of the stack of prinicipal bundles to define a bundle over $M$. More concretely we define 
\begin{align*}
P_M= P_1\big|_{U_1} \sqcup P_2\big|_{U_2} \sqcup (\varphi_1\sqcup \varphi_2)^\ast (P_3) \ \big/ \sim \ , 
\end{align*}
where $p\sim \varphi_1(p)$ for all $p\in U_1\cap U_3$ and $p\sim \varphi_2(p)$ for all $p\in U_2\cap U_3$. Using local trivialisations for all bundles involved, or the fact that principal $G$-bundles with connection form a stack $\mathsf{Bun}^\nabla _G$, it is easy to see that $P_M$ is a principal bundle with connection over $M$. This gluing construction depends on the choice of collars $\varphi_i$, the definition of $U_i$ and the trivialisations $\psi_i$. Different choices for the collars and the open cover lead to isomorphic bundles with connection, since we assume that the bundles are of product form on the collars. 

We give another point of view on this construction using `mapping cylinders'. Given two principal bundles with connection $\pi_1 \colon P_1 \rightarrow \Sigma_1$ and $\pi_2 \colon P_2 \rightarrow \Sigma_2$, a diffeomorphism $f\colon \Sigma_1 \rightarrow \Sigma_2$, and a connection-preserving gauge transformation $\psi \colon P_1 \rightarrow f^\ast P_2$, the mapping cylinder of length $\epsilon $ is defined as 
\begin{align*}
\mathfrak{M}(f,\psi):= \Sigma_1\times \big[0, \tfrac{3\epsilon}{4}\big) \sqcup \Sigma_2 \times \big(\tfrac{\epsilon}{4},\epsilon\big] \ \big/ \sim \ ,
\end{align*} 
where $(x,t)\sim (f(x),t)$ for all $(x,t) \in \Sigma_1\times (\tfrac{\epsilon}{4}, \tfrac{3\epsilon}{4})$, together with the principal bundle with connection over $\mathfrak{M}(f,\psi)$ given by
\begin{align*}
P_{\mathfrak{M}(f,\psi)}:= P_1\times \big[0, \tfrac{3\epsilon}{4}\big) \sqcup P_2 \times \big(\tfrac{\epsilon}{4},\epsilon\big] \ \big/ \sim \ ,
\end{align*}  
where $(p,t)\sim \big((f^{-1})^\ast \psi(p),t\big)$ for all $(p,t)\in P_1\times (\tfrac{\epsilon}{4}, \tfrac{3\epsilon}{4})$. Now we see that the gluing happens by removing half of the collars $M_{1,+}$ and $M_{2,-}$, and attaching mapping cylinders $\mathfrak{M}(\varphi_1,\psi'_1)$ and $\mathfrak{M}(\varphi^{-1}_2,\psi_2'{}^{-1})$ of appropriate length. This point of view is crucial in the proof of Theorem~\ref{A is field theory}. 

The important properties we used in this construction are the stack property of $\mathsf{Bun}^\nabla _G$ and the notion of a bundle of product structure. In Section~\ref{Appendix Geometric bicategories} we use a similar construction to build a bicategory of cobordisms equipped with elements of an arbitrary stack $\mathscr{F}$.

We also need to glue metrics, which can be done using the open cover \eqref{Definition Cover Gluing}. For this, we assume that there are metrics $g_1\in \Gamma (\text{Sym}^ 2(T^\ast M_1))$, $g_2\in \Gamma (\text{Sym}^ 2(T^\ast M_2))$ and $g'_3\in \Gamma (\text{Sym}^ 2(T^\ast \Sigma))$. We equip $\Sigma \times (-\epsilon_1 ,\epsilon_2)$ with the metric $g_3=g_3'+\mathrm{d} t\otimes \mathrm{d} t$. Now it is sensible to assume that $\varphi_1$ and $\varphi_2$ are isometries. We can define a metric over $U_3$ as $(\varphi_1\sqcup \varphi_2)^\ast (g_3)|_{U_3}$. Then all metrics agree on the intersections, since we assumed that $\varphi_1$ and $ \varphi_2$ are isometries. This defines a metric on $M$, since sections of $\text{Sym}^ 2(T^\ast M)$ form a sheaf over $M$. 

\subsection{Dirac operators on spin manifolds with boundary}

Given a spin structure on a $d$-dimensional oriented Riemannian manifold $M$ with boundary $\partial M$, i.e. a double cover of the frame bundle $P_{SO(d)}(M)$ by a principal $Spin(d)$-bundle $P_{Spin(d)}(M) \rightarrow M$, we can include the frame bundle $P_{SO(d-1)}(\partial M )$ into $P_{SO(d)}(M)$ by adding the inward pointing normal vector to an orthonormal frame of $\partial M$. The pullback of the double cover $P_{Spin(d)}(M)$ along this inclusion gives a spin structure on $\partial M$. 

Assume from now on that all structures are of product form near the boundary. 
To describe the relation between the Dirac operator on the boundary and on the bulk manifold 
we use the embedding of Clifford bundles 
\begin{align*}
\text{C}\ell_{d-1}(\partial M)\longrightarrow \text{C}\ell_{d}(M) \ , \quad 
T_x (\partial M) \ni v \longmapsto  v \, n_x \ ,  
\end{align*}         
where $n$ is the inward pointing normal vector field corresponding to the boundary. This gives $S_M\big|_{\partial M}$ the structure of a Clifford bundle over $\partial M$.
For the relation to the spinor bundle over the boundary we need to distinguish between even and odd dimensions.

If the dimension $d$ of $M$ is odd then we can identify $S_M\big|_{\partial M}$ with the spinor bundle over $\partial M$.
In this case the Dirac operator can be described in a neighbourhood of $\partial M$ by
\begin{align}
\label{Equation: Dirac operator on a product}
\D_M = n\cdot \big(\D_{\partial M} +\partial_n\big) \ .
\end{align} 
 
On the other hand, if the dimension $d$ of $M$ is even then the spinor bundle $S_M=S^+_M\oplus S^-_M$ decomposes into spinors of positive and negative chirality. The Clifford action of $\text{C}\ell_{d-1}(\partial M)$ leaves this decomposition invariant and we can identify the spinor bundle over $\partial M$ with the pullback of the positive spinor bundle $S^+_M\big|_{\partial M}$. As the Clifford action of $\text{C}\ell_{d-1}(\partial M)$ commutes with the chirality operator $\Gamma$, an identification with the negative spinor bundle is possible as well. Near the boundary the Dirac operator is given by
\begin{align*}
\D_M = n\cdot\begin{pmatrix}
\D_{\partial M} +\partial_n & 0 \\ 
0 &  \Gamma|_{S^+_M} \, \D_{\partial M} \, \Gamma|_{S^-_M} +\partial_n
\end{pmatrix} \ .
\end{align*}

\subsection{b-geometry}\label{Appendix b-geometry}

b-geometry (for `boundary geometry') is concerned with the study of geometric structures on manifolds with corners which can be singular at the boundary. 
We fix a $d$-dimensional $\langle 2\rangle$-manifold $M$ and an ordering of its hypersurfaces $\{H_1, \dots, H_k\}$.
The central objects in b-geometry are b-vector fields. These are vector fields which are tangent to all boundary hypersurfaces. 
We denote by $\mathrm{Vect}_{\mathrm{b}}(M)$ the projective $C^\infty (M)$-module of b-vector fields. Then $\mathrm{Vect}_{\mathrm{b}}(M)$ is closed under the Lie bracket of vector fields.  
By the Serre-Swan theorem, the b-vector fields are naturally sections of the b-tangent bundle with fibres
\begin{align*}
^{\mathrm{b}}T_x M := \mathrm{Vect}_{\mathrm{b}}(M) \setminus \mathcal{I}_x(M) \cdot \mathrm{Vect}_{\mathrm{b}}(M) \ , 
\end{align*}
where $\mathcal{I}_x(M)= \{f\in {C}^\infty(M) \mid f(x)=0 \}$ is the ideal of functions vanishing at $x\in M$. This allows us to define arbitrary b-tensors as in classical differential geometry. The inclusion $\mathrm{Vect}_{\mathrm{b}}(M) \hookrightarrow \mathrm{Vect}(M)$ induces a natural vector bundle map $\alpha_{\mathrm{b}} \colon ^{\mathrm{b}}TM \rightarrow TM$.

The structures introduced so far can be summarized by saying that $(^{\mathrm{b}}TM,\alpha_{\mathrm{b}})$ is a boundary tangential Lie algebroid. The b-tangent bundle is isomorphic to the tangent bundle in the interior of $M$, and $(M, \mathrm{Vect}_{\mathrm{b}}(M))$ is an example of a manifold with Lie structure at infinity~\cite{LieManifolds}.
   
Using a set of boundary defining functions $x_i$, the Lie algebra $\mathrm{Vect}_{\mathrm{b}}(M)$ is locally spanned near a point $x\in H_i$ of index~$1$ by $\{ x_i\, \partial_{x_i}, \partial_{h_1}, \dots , \partial_{h_{d-1}}\}$, where $\{h_l\}_{l=1}^{d-1}$ is a local coordinate system for $H_i$. 
In a neighbourhood of $x\in H_i\cap H_j$, $i\neq j$, we can form a basis given by $\{x_i\, \partial_{x_i},x_j\, \partial_{x_j},\partial_{y_1}, \dots , \partial_{y_{d-2}} \}$, where $\{y_l\}_{l=1}^{d-2}$ is a local coordinate system on $Y_{ij}=H_i\cap H_j$. 
The dual basis for the b-cotangent bundle $^{\mathrm{b}}T^*M$ is denoted by $\big\{ \tfrac{\diff x_i}{x_i},\tfrac{\diff x_j}{x_j}, \diff y_1 ,\dots , \diff y_{d-2} \big\}$. 

A b-metric $g$ is now simply a metric on the vector bundle $^{\mathrm{b}} TM$ over $M$. This defines an ordinary metric in the interior of $M$. The general expression in local coordinates near a corner point is
\begin{align*}
g= \sum_{i,j=0,1}\, a_{ij} \ \frac{\diff x_i}{x_i} \otimes \frac{\diff x_j}{x_j} + 2 \, \sum_{i=0,1} \ \sum_{j=1}^{d-2}\, b_{ij} \ \frac{\diff x_i}{x_i} \otimes \diff y_j + \sum_{i,j=1}^{d-2}\, c_{ij} \ \diff y_i \otimes \diff y_j \ .
\end{align*}
A b-metric $g$ is exact if there exists a set of boundary defining functions $x_i$ such that it takes the form
\begin{align*}
g= \begin{cases}
\displaystyle \frac{\diff x_i}{x_i} \otimes \frac{\diff x_i}{x_i} +h_{H_i} & \text{ near } H_i \ , \\[10pt]
\displaystyle  \frac{\diff x_i}{x_i} \otimes \frac{\diff x_i}{x_i}+ \frac{\diff x_j}{x_j} \otimes \frac{\diff x_j}{x_j} + h_{H_i\cap H_j} & \text{ near } H_i\cap H_j \ ,
\end{cases}
\end{align*}
where $h_Y$ denotes a metric on $Y$.

We will now describe the relation between $\langle 2\rangle$-manifolds with exact b-metrics and the index theory on manifolds with corners considered in the main text. To define the index we attach infinite cylindrical ends $H_i\times (-\infty,0]$ to the boundary hypersurfaces and $Y_{ij}\times (-\infty,0]^2$ to the corners. The coordinate transformation $x_i= \e^{t_i}$ for $t_i\in(-\infty,0]$ maps this non-compact manifold to the interior of a manifold $X_i$ with corners. The product metric on the cylindrical ends induces a b-metric on $X_i$, since $\diff t_i\otimes\diff t_i = \tfrac{\diff x_i}{x_i} \otimes \tfrac{\diff x_i}{x_i}$. For this reason one can view the study of manifolds with exact b-metrics as the study of manifolds with cylindrical ends. 

A b-differential operator is an element of the universal enveloping algebra of $\mathrm{Vect}_{\mathrm{b}}(M)$, the collection of which act naturally on $C^\infty (M)$. 
A b-differential operator $D\in \text{Diff}^{\,k}_{\mathrm{b}}(M,E_1,E_2)$ of order $k$ between two vector bundles $E_1$ and $E_2$ over $M$ is a smooth fibre-preserving map, which in any local trivialisations of $E_1$ and $E_2$ is given by a matrix of linear combinations of products of up to $k$ b-vector fields. 
Most concepts from differential geometry such as connections, symbols and characteristic classes can be generalized to the b-geometry setting. 

Since exact b-metrics are singular at the boundary it is necessary to define a renormalised b-integral. Heuristically, the problem stems from the fact that the integral $\int_0^1\, \tfrac{\diff x}{x}$ is divergent. The cure for this is to multiply with $x^z$ for $\text{Re}(z)> 0$.

\begin{lemma}(\cite[Lemma 4.1]{Loyaindex}) \ 
Let $M$ be a manifold with corners and an exact b-metric~$g$. Then for all $f\in {C}^\infty(M)$ and $z\in \C$ with $\text{Re}(z)> 0$, the integral
\begin{align*}
F(f,z):=\int_M\, x^z\, f\ \diff g
\end{align*}
exists and extends to a meromorphic function $F(f,z)$ of $z\in\C$.
\end{lemma}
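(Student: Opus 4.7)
The plan is to prove both assertions by localising to coordinate patches via a partition of unity, since everything reduces to a question about the boundary behaviour of $x^z\, f\, \diff g$ in the directions transverse to the boundary hypersurfaces and corners. Away from $\partial M$, the b-metric is smooth and $x$ is bounded below, so the contribution is an ordinary compactly supported integral of a smooth function and is entire in $z$. The only issue is therefore concentrated near the boundary.

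First I would prove convergence for $\mathrm{Re}(z) > 0$. In local coordinates near a point of index $1$ on $H_i$, an exact b-metric takes the form $\frac{\diff x_i}{x_i}\otimes\frac{\diff x_i}{x_i}+h_{H_i}$, so the volume form is $\frac{\diff x_i}{x_i}\wedge \diff\mathrm{vol}_{h_{H_i}}$, and the local integrand becomes $x_i^{z-1} f(x_i,h)\,\diff x_i\wedge \diff\mathrm{vol}_{h_{H_i}}$. Since $f$ is smooth and the transverse slice is compact, this is absolutely integrable for $\mathrm{Re}(z) > 0$. Near a corner point on $H_i\cap H_j$, the analogous local integrand is $x_i^{z-1}x_j^{z-1} f(x_i,x_j,y)\,\diff x_i\, \diff x_j\, \diff\mathrm{vol}_{h_{Y_{ij}}}$, which is again absolutely integrable on $[0,1]^2\times Y_{ij}$ for $\mathrm{Re}(z) > 0$.

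Next I would extract the meromorphic continuation. Near $H_i$, I would write $f(x_i,h) = \sum_{k=0}^{N-1}\tfrac{x_i^k}{k!}\,\partial_{x_i}^k f(0,h) + x_i^N R_N(x_i,h)$ with $R_N$ smooth up to the boundary, and split the integral over $[0,1]$ in $x_i$ accordingly. The singular pieces integrate to $\frac{1}{k!(z+k)}\int\partial_{x_i}^k f(0,h)\,\diff\mathrm{vol}_{h_{H_i}}$, giving simple poles at $z = 0,-1,-2,\dots,-(N-1)$, while the remainder is holomorphic on $\mathrm{Re}(z) > -N$. Since $N$ is arbitrary, this yields a meromorphic extension with at worst simple poles at nonpositive integers. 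Near a corner, a double Taylor expansion in $(x_i,x_j)$ and the same truncate-and-remainder argument gives factors $\frac{1}{(z+k)(z+\ell)}$, producing (at worst) double poles at nonpositive integers; the contribution from the boundary hypersurfaces reappears here from the mixed terms and matches across patches. Using a partition of unity subordinate to a cover by such coordinate neighbourhoods and the hypersurface-disjoint interior, the global $F(f,z)$ is a finite sum of functions each of which is meromorphic on $\C$, hence is itself meromorphic.

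The main obstacle is purely bookkeeping: making sure that when one passes between overlapping coordinate charts the different local Taylor expansions give consistent pole contributions, and that the corner double-pole analysis is compatible with the codimension-$1$ simple-pole analysis on each adjoining face. This is handled by observing that the singular parts are intrinsically defined by integrals of transverse derivatives of $f$ along the appropriate boundary stratum, so that the chart-dependence washes out, leaving a globally meromorphic function with poles located on $\{0,-1,-2,\dots\}$.
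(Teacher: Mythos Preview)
The paper does not actually prove this lemma; it is quoted verbatim from \cite[Lemma~4.1]{Loyaindex} and stated without proof in Appendix~\ref{Appendix b-geometry} as background for the definition of the b-integral. So there is no in-paper argument to compare against.

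That said, your sketch is the standard one and is essentially what appears in the cited reference: localise via a partition of unity, observe that the b-volume form is $x_i^{-1}\,\diff x_i$ (resp.\ $x_i^{-1}x_j^{-1}\,\diff x_i\,\diff x_j$) near a face (resp.\ corner), and then Taylor-expand $f$ transversally to peel off the poles. Your identification of the pole structure---simple poles at nonpositive integers from codimension-$1$ faces, at worst double poles from codimension-$2$ corners---is correct, and the remark that the singular parts are chart-independent because they are given by intrinsic integrals of normal jets of $f$ over the boundary strata is exactly the right way to handle the patching. One small point worth making explicit: the product $x^z$ in the statement is shorthand for $\prod_i x_i^{z}$ over the chosen boundary defining functions, so the local model near a corner really does give the factor $x_i^{z-1}x_j^{z-1}$ you wrote down. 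With that understood, your argument is complete.
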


\begin{definition}
Let $M$ be a manifold with corners and an exact b-metric $g$. The \underline{b-integral} of a function $f\in C^\infty(M)$ is 
\begin{align}
\label{Definition b-integral}
{{}^{\mathrm{b}}}\!\!\int_M\, f\ \diff g = \text{Reg}_{z=0} \ F(f,z)\ .
\end{align}
\end{definition}

This allows us to define the b-trace of a pseudo-differential operator $D$ in terms of its kernel $D(x,y)$ as
\begin{align*}
^{\mathrm{b}}\text{Tr}(D)= {{}^{\mathrm{b}}}\!\!\int_M\, \text{tr} \big(D(x,x)\big)\ \diff g(x) \ , 
\end{align*}
where the trace $\mathrm{tr}$ is over the fibres of the vector bundle on which $D$ acts.

\section{Bicategories}\label{Appendix bicategories}

As it is central to the treatment of this paper, in this appendix we provide a fairly detailed account of symmetric monoidal bicategories, following~\cite{leinster:1998,schommer2011classification} for the most part.

\subsection{Basic definitions}

We introduce the basic concepts from the theory of bicategories following \cite{leinster:1998}.

\begin{definition}
\label{DefinitionBicategory}
A \underline{bicategory} $\mathscr{B}$ consists of the following data:
\begin{itemize}
\item[(a)] A class $\text{Obj}(\mathscr{B})$ of objects.

\item[(b)] A category $\mathsf{Hom}_\Bscr(A,B)$ for all $A,B \in \text{Obj}(\mathscr{B})$, whose objects $f:A\to B$ we call 1-morphisms and whose morphisms $f\Rightarrow g$ we call 2-morphisms.

\item[(c)] Composition functors \[\circ_{ABC} \colon \mathsf{Hom}_\Bscr (B,C)\times \mathsf{Hom}_\Bscr(A,B)\longrightarrow \mathsf{Hom}_\Bscr(A,C)\] for all $A,B,C\in \text{Obj}(\mathscr{B})$.

\item[(d)] Identity functors \[\mathsf{Id}_A \colon \mathsf{1} = \star \, \big/\!\!\big/ \, \{\id_\star\}\longrightarrow \mathsf{Hom}_\Bscr(A,A)\] for all $A\in \text{Obj}(\mathscr{B})$.

\item[(e)] Natural associator isomorphisms \[\mathsf{a}_{A,B,C,D} \colon \circ_{ACD} \circ \big(\id_{\mathsf{Hom}_\Bscr(C,D)} \times \circ_{ABC} \big) \Longrightarrow \circ_{ABD} \circ \big(\circ_{BCD} \times \id_{\mathsf{Hom}_\Bscr(A,B)}\big)\] for all $A,B,C,D\in \text{Obj}(\mathscr{B})$, expressing associativity of the composition. 

\item[(f)] Natural right and left unitor isomorphisms 
\[ \mathsf{r}_A \colon \circ_{AAB} \circ \big(\id_{\mathsf{Hom}_\Bscr(A,B)}\times \mathsf{Id}_A\big)\Longrightarrow \id_{\mathsf{Hom}_\Bscr(A,B)}\] and 
\[\mathsf{l}_A \colon \circ_{AAB} \circ \big(\mathsf{Id}_B \times \id_{\mathsf{Hom}_\Bscr(A,B)}\big)\Longrightarrow \id_{\mathsf{Hom}_\Bscr(A,B)} \]
for all $A,B\in \text{Obj}(\mathscr{B})$.
\end{itemize}  
These data are required to satisfy the following coherence axioms:
\begin{itemize}
\item[(C1)] The pentagon diagram
\small
\[
\begin{tikzcd}
& \big((k\!\circ\! h)\!\circ\! g\big)\!\circ\! f\arrow[dl, swap, Rightarrow, "\mathsf{a}"] \arrow[rr,Rightarrow,"\mathsf{a}\bullet \id"]& &\big(k\!\circ\!(h\!\circ\! g)\big)\!\circ\! f \arrow[rd,Rightarrow,"\mathsf{a}"] & \\
(k\!\circ\! h)\!\circ\!(g\!\circ\! f)\arrow[rrd,swap,Rightarrow,"\mathsf{a}"]& & & &\arrow[lld,Rightarrow,"\id\bullet \mathsf{a}"]k\!\circ\!\big((h\!\circ\! g)\!\circ\! f\big) \\
& &k\!\circ\!\big(h\!\circ\!(g\!\circ\! f)\big) & &
\end{tikzcd} \]
\normalsize
commutes for all composable 1-morphisms $k$, $h$, $g$ and $f$, where $\bullet$ denotes the horizontal composition of natural transformations.

\item[(C2)] The triangle diagram 
\[
\begin{tikzcd}
(g\circ \mathsf{Id})\circ f\arrow[rr,Rightarrow,"\mathsf{a}"] \arrow[rd,swap,Rightarrow,"\mathsf{r}\bullet\id"]& &g\circ (\mathsf{Id}\circ f)\arrow[dl,Rightarrow,"\id\bullet\mathsf{l}"]\\
& g\circ f &
\end{tikzcd} 
\]
commutes for all composable 1-morphisms $f$ and $g$.
\end{itemize}
\end{definition}  

There are different definitions for functors between bicategories corresponding to different levels of strictness. For our purposes the following definition is suitable.
\begin{definition}
\label{Definition Morphism Bicategory}
A \underline{2-functor} $\mathcal{F} \colon \mathscr{B}\rightarrow \mathscr{B}'$ between two bicategories $\Bscr$ and $\Bscr'$ consists of the following data:
\begin{itemize}
\item[(a)] A map $\mathcal{F} \colon \text{Obj}(\mathscr{B})\rightarrow \text{Obj}(\mathscr{B}'\,)$.

\item[(b)] A functor $\mathcal{F}_{AB} \colon \mathsf{Hom}_\Bscr(A,B)\rightarrow \mathsf{Hom}_{\Bscr'}\big(\Fa(A),\Fa(B)\big)$ for all $A,B\in \text{Obj}(\mathscr{B})$.

\item[(c)] A natural isomorphism $\Phi_{ABC}$ given by
\[ \begin{tikzcd} 
\mathsf{Hom}_\Bscr(B,C)\times \mathsf{Hom}_\Bscr(A,B) \arrow{r}{\circ }\arrow[d,swap,"{\mathcal{F}_{BC}\times \mathcal{F}_{AB}}"] & \mathsf{Hom}_\Bscr(A,C) \arrow{d}{\mathcal{F}_{AC}}\\
\mathsf{Hom}_{\Bscr'}\big(\mathcal{F}(B),\mathcal{F}(C)\big)\times \mathsf{Hom}_{\Bscr'}\big(\mathcal{F}(A),\mathcal{F}(B)\big)\ar[ru, Rightarrow, shorten <= 2ex, shorten >= 2ex, "\Phi_{ABC}"] \arrow[r,swap,"\circ'"] & \mathsf{Hom}_{\Bscr'}\big(\mathcal{F}(A),\mathcal{F}(C)\big)
\end{tikzcd} \]
for all $A,B,C\in \text{Obj}(\mathscr{B})$.

\item[(d)] A natural isomorphism $\Phi_A$ given by
\[\begin{tikzcd} \mathsf{1} \arrow[d,swap,"\id"] \arrow{rr}{\mathsf{Id}_A} & & \mathsf{Hom}_{\Bscr}(A,A)\arrow{d}{\mathcal{F}_{AA}} \\ \mathsf{1} \arrow[rr,swap, "\mathsf{Id}'_{\mathcal{F}(A)}"] \ar[rru, shorten <= 2ex, shorten >= 2ex, Rightarrow, "\Phi_A", pos=0.7] & & \mathsf{Hom}_{\Bscr'}\big(\mathcal{F}(A),\mathcal{F}(A)\big)
\end{tikzcd}\] for all $A\in \text{Obj}(\mathscr{B})$.
\end{itemize} 
These data are required to satisfy the following coherence axioms:
\begin{itemize}
\item[(C1)] The diagram
\begin{equation}
\label{EQ1: Definition 2Functor}
\begin{tikzcd} 
\big(\mathcal{F}(h) \circ' \mathcal{F}(g)\big)\circ' \mathcal{F}(f) \arrow[d,swap,Rightarrow,"\mathsf{a}'"] \arrow[r,Rightarrow,"\Phi \bullet' \id"] & \mathcal{F}(h\circ g) \circ' \mathcal{F}(f) \arrow[r,Rightarrow,"\Phi"] & \mathcal{F}\big((h\circ g)\circ f\big)\arrow[d,Rightarrow,"\mathcal{F}(\mathsf{a})"] \\ \mathcal{F}(h)\circ' \big(\Fa(g)\circ' \Fa(f)\big) \arrow[r,swap,Rightarrow,"\id \bullet' \Phi"] & \Fa(h) \circ' \Fa(g\circ f) \arrow[r,,swap, Rightarrow,"\Phi"] & \Fa\big(h\circ(g\circ f)\big)
\end{tikzcd}
\end{equation}
commutes for all composable 1-morphisms.

\item[(C2)] The diagram
\begin{equation}
\label{EQ2: Definition 2Functor}
\begin{tikzcd}
\mathcal{F}(f) \circ' \mathsf{Id}'_{\mathcal{F}(A)} \arrow[r,Rightarrow,"\id\bullet' \Phi"] \arrow[dr,swap,Rightarrow,"\mathsf{r}'"] & \mathcal{F}(f) \circ' \mathcal{F}(\mathsf{Id}_A)\arrow[r,Rightarrow,"\Phi"] & \mathcal{F}(f\circ \mathsf{Id}_A) \arrow[dl,Rightarrow,"\mathcal{F} (\mathsf{r})"] \\ & \mathcal{F}(f) &
\end{tikzcd} 
\end{equation} 
commutes for all composable 1-morphisms.

\item[(C3)] A diagram analogous to \eqref{EQ2: Definition 2Functor} for the left unitors $\mathsf{l}$ and $\mathsf{l}'$ commutes.
\end{itemize}
\end{definition} 

Again there are different ways to define natural transformations between 2-functors. The following definition is suitable for our purposes.
\begin{definition}
\label{Definition transformation Bicategory}
Given two 2-functors $\mathcal{F},\mathcal{G}  \colon \mathscr{B}\rightarrow \mathscr{B'} $, a \underline{natural 2-transformation} $\sigma \colon \mathcal{F}\Rightarrow \mathcal{G}$ consists of the following data: 
\begin{itemize}
\item[(a)] A 1-morphism $\sigma_A \colon \mathcal{F}(A)\rightarrow \mathcal{G}(A)$ for all $A \in \mathrm{Obj}(\mathscr{B})$.

\item[(b)] A natural transformation $\sigma_{AB}$ given by\footnote{Here we use $\ast$ to denote pullbacks and pushforwards in the usual way.}
\[
 \begin{tikzcd} 
\Hom_{\mathscr{B}}(A,B) \arrow{r}{\mathcal{F}_{AB} }\arrow[d,swap,"\mathcal{G}_{AB}"] & \Hom_{\mathscr{B}'}\big(\mathcal{F}(A),\mathcal{F}(B)\big)\arrow{d}{\sigma_{B\ast}}\\
\Hom_{\mathscr{B}'}\big(\mathcal{G}(A),\mathcal{G}(B)\big)\arrow[Rightarrow]{ru}{\sigma_{AB}} \arrow[r,swap,"\sigma_A^\ast"] & \Hom_{\mathscr{B}'}\big(\mathcal{F}(A),\mathcal{G}(B)\big)
\end{tikzcd}
\]
for all $A,B\in \mathrm{Obj}(\mathscr{B})$. In particular, these natural transformations comprise families of 2-morphisms $\sigma_f \colon \mathcal{G}_{AB}(f) \circ' \sigma_A \Rightarrow \sigma_{B}\circ' \mathcal{F}_{AB}(f)$ for all 1-morphisms $f:A\to B$ in $\Bscr$.
\end{itemize}
These data are required to satisfy the following coherence axioms:
\begin{itemize}
\item[(C1)] The diagram 
\begin{equation}
\label{Equation1: Definition Transformation}
\begin{tikzcd}
\big(\mathcal{G}(g) \circ' \mathcal{G}(f)\big)\circ' \sigma_A \arrow[r,Rightarrow,"\mathsf{a}'"] \arrow[d,swap,Rightarrow,"\Phi_\Ga\bullet'\id"] & \mathcal{G}(g) \circ' \big(\mathcal{G}(f)\circ' \sigma_A\big)\arrow[r,Rightarrow, "\id\bullet' \sigma_f"] & \mathcal{G}(g)\circ' \big(\sigma_{B}\circ' \mathcal{F}(f)\big)\arrow[d,Rightarrow,"\mathsf{a}'"] \\
\mathcal{G}(g\circ f)\circ' \sigma_A \arrow[d,swap,Rightarrow,"\sigma_{g\circ f}"] & & \big(\mathcal{G}(g)\circ' \sigma_{B}\big)\circ' \mathcal{F}(f) \arrow[d,Rightarrow,"\sigma_g\bullet'\id"] \\
\sigma_{C}\circ' \mathcal{F}(g\circ f) \arrow[r,swap,Leftarrow,"\id\bullet'\Phi_\Fa"] & \sigma_{C}\circ' \big(\mathcal{F}(g)\circ' \mathcal{F}(f)\big) \arrow[r,swap,Leftarrow,"\mathsf{a}'"] & \big(\sigma_{C}\circ' \mathcal{F}(g)\big)\circ' \mathcal{F}(f)
\end{tikzcd}
\end{equation}
commutes for all 1-morphisms $f \colon A\rightarrow B$ and $g \colon  B\rightarrow C$ in $\mathscr{B}$.

\item[(C2)] The diagram
\begin{equation}
\label{Equation2: Definition Transformation}
\begin{tikzcd}
\mathsf{Id}'_{\mathcal{G}(A)}\circ' \sigma_A\arrow[r,Rightarrow,"\mathsf{l}'"] \arrow[d,swap,Rightarrow,"\Phi_\Ga\bullet'\id"]&\sigma_A\arrow[r,Rightarrow,
"\mathsf{r}'{}^{-1}"] & \sigma_A \circ' \mathsf{Id}'_{\Fa(A)}\arrow[d,Rightarrow,"\id\bullet'\Phi_\Fa"] \\
\mathcal{G}(\mathsf{Id}_A)\circ' \sigma_A \arrow[rr,swap,Rightarrow,"\sigma_{\mathsf{Id}_A}"] & & \sigma_A \circ' \mathcal{F}(\mathsf{Id}_A) 
\end{tikzcd}
\end{equation}
commutes for all $A\in \mathrm{Obj}(\mathscr{B})$.
\end{itemize}
\end{definition}
Note that we do not require the natural transformation $\sigma_{AB}$ to be invertible, whence its direction matters. There is an alternative definition using the opposite direction. 
  
Since there is an additional layer of structure for bicategories, we are able to relate two natural 2-transformations to each other. There is only one way to do this, since there are no higher morphisms.
\begin{definition}
\label{Definition Modification Bicategory}
Given two natural 2-transformations $\sigma,\tau \colon \mathcal{F}\Rightarrow \mathcal{G}$, a \underline{modification} $\mit\Gamma \colon \sigma \Rrightarrow \tau$ consists of a 2-morphism ${\mit\Gamma}\!_A \colon \sigma_A \Rightarrow \tau_A$ for each $A\in \mathrm{Obj}(\mathscr{B})$ such that the diagram
\begin{equation}
\label{EQ: Modification}
\begin{tikzcd}
\mathcal{G}(f)\circ' \sigma_A \arrow[rr,Rightarrow,"\id\bullet' {\mit\Gamma}\!_A"] \arrow[d,swap,Rightarrow,"\sigma_f"] & & \mathcal{G}(f)\circ' \tau_A \arrow[d,Rightarrow,"\tau_f"] \\
\sigma_B\circ' \mathcal{F}(f)\arrow[rr,swap,Rightarrow,"{\mit\Gamma}\!_{B}\bullet'\id"] & & \tau_B\circ' \mathcal{F}(f)
\end{tikzcd}
\end{equation} 
commutes for all 1-morphisms $f \colon A\rightarrow B$ in $\Bscr$.
\end{definition}

\subsection{Symmetric monoidal bicategories}

We now describe how to introduce symmetric monoidal structures on bicategories following~\cite{schommer2011classification}.

\begin{definition}
A \underline{symmetric monoidal bicategory} consists of a bicategory $\mathscr{B}$ together with the following data:
\begin{itemize}
\item[(a)] A monoidal unit $1\in \mathrm{Obj}(\mathscr{B})$.

\item[(b)] A 2-functor $\otimes \colon \mathscr{B} \times \mathscr{B} \rightarrow \mathscr{B}$.

\item[(c)] Equivalence natural 2-transformations\footnote{Here `equivalence' means the natural 2-transformations in question have weak inverses.} $\alpha \colon \otimes \circ (\id \times \otimes) \Rightarrow \otimes \circ (\otimes \times \id),$ $\lambda \colon 1\otimes \, \cdot \, \Rightarrow \id$ and $\rho \colon \id \Rightarrow \, \cdot \, \otimes 1$.
We pick adjoint inverses which are part of the data and denoted them
by $^\star$, leaving the adjunction data implicit, and for every
equivalence natural 2-transformation we pick an adjoint weak inverse without writing them out explicitly.

\item[(d)] An equivalence natural 2-transformation $\beta \colon a\otimes b \Rightarrow b\otimes a$.

\item[(e)] The four invertible modifications
\[
\begin{tikzcd}
 &  \otimes \circ (\otimes \times \otimes)\arrow[dr,Rightarrow,"\alpha"] & \\
\otimes \circ (\otimes \times \id) \circ (\otimes \times \id \times \id) \arrow[ur,Rightarrow,"\alpha"] \arrow[d,swap,Rightarrow,"\alpha\otimes\id"] &  & \otimes \circ (\id \times \otimes) \circ (\id \times \id \times \otimes)  \\
 \otimes \circ (\otimes \times \id) \circ (\id \times \otimes \times \id) \arrow[rr,swap,Rightarrow,"\alpha"] & 
\tarrow[swap,shorten <=15pt,shorten >=5pt," \ \mit\Xi "]{uu}  
  & \otimes \circ (\id \times \otimes) \circ (\id \times \otimes \times \id) \ar[u,swap,Rightarrow,"\id\otimes\alpha"]
\end{tikzcd}
\]

\[
\begin{tikzcd}
\otimes \circ \big(\id \times (1 \otimes \, \cdot \, ) \big) \ar[rr,Rightarrow,"\alpha"] & \tarrow[shorten <=2pt,shorten >=5pt, "\ \mit\Theta"]{d} &\otimes \circ \big(( \, \cdot \, \otimes 1 ) \times \id\big) \ar[d,Leftarrow,"\rho\otimes\id"] 
\\
\otimes \ar[u,Leftarrow,"\id\otimes \lambda"] \ar[rr,swap, Rightarrow, "\id"] & \ & \otimes 
\end{tikzcd}
\]

\[
\begin{tikzcd}
\otimes \circ \big((1 \otimes \, \cdot \, )\times \id\big) \ar[rr,Rightarrow,"\lambda\otimes\id"] \ar[rd,swap,Rightarrow,"\alpha"] & \tarrow["\ \mit\Lambda",shorten <=2pt,shorten >=2pt]{d} &\otimes \\
& (1 \otimes \, \cdot \, ) \circ (\id \times \otimes) \ar[ru,swap,Rightarrow,"\lambda"] & 
\end{tikzcd}
\]
and
\[
\begin{tikzcd}
\otimes \ar[rr,Rightarrow,"\id\otimes \rho"] \ar[rd,swap,Rightarrow,"\rho"] & \tarrow[ "\ \mit\Psi ",shorten <=2pt,shorten >=2pt]{d} &\otimes \circ \big(\id \times ( \, \cdot \, \otimes 1) \big) \\
& ( \, \cdot \, \otimes 1) \circ (\id \times \otimes) \ar[ru,swap,Rightarrow,"\alpha"] & 
\end{tikzcd}
\]

\item[(f)] Further invertible modifications
\[\begin{tikzcd}
 & a\otimes (b\otimes c) \ar[r, Rightarrow, "\beta" {name=ar1}] & (b \otimes c) \otimes a \arrow[rd,Rightarrow,"\alpha"] & \\ 
(a\otimes b)\otimes c \arrow[ru,Rightarrow,"\alpha"] \arrow[rd,swap, Rightarrow,"\beta\otimes\id"] & & &b\otimes(c \otimes a) \\
 & (b\otimes a) \otimes c\ar[r,swap,Rightarrow, "\alpha" {name=ar2}] &b \otimes (a \otimes c) \arrow[ru,swap,Rightarrow,"\id\otimes\beta"] & 
 \tarrow[shorten <=10pt,shorten >=10pt, from=ar1, to=ar2, "\ R"]{}
\end{tikzcd} \]
and
\[\begin{tikzcd}
 & (a\otimes b)\otimes c \ar[r,Rightarrow, "\beta"{name=ar1}] & c \otimes (a\otimes b) \arrow[rd,Rightarrow,"\alpha"] & \\ 
a\otimes (b\otimes c) \arrow[ru,Rightarrow,"\alpha"] \arrow[rd,swap,Rightarrow,"\alpha\circ(\beta\otimes\id)\circ\alpha"] & & &(c\otimes a) \otimes b  \\
 & b\otimes (a \otimes   c)\ar[r,swap,Rightarrow, "\beta"{name=ar2}] &(a\otimes c) \otimes b \arrow[ru,swap,Rightarrow,"\beta\otimes\id"] & 
\tarrow[shorten <=10pt,shorten >=10pt, from=ar1, to=ar2, "\ S"]{}
\end{tikzcd} \]

\item[(g)] An invertible modification
\[
\begin{tikzcd}
a \otimes b  \ar[rr, Rightarrow, "\id"] \ar[rd,swap,Rightarrow,"\beta"] & \tarrow["\ \mit\Sigma ",shorten <=1pt,shorten >=1pt]{d} &a \otimes b  \\
& b \otimes a \ar[ru,swap,Rightarrow,"\beta"] & 
\end{tikzcd}
\]
\end{itemize}
These data are required to satisfy a long list of coherence diagrams, see~\cite[Appendix~C]{schommer2011classification} for details.
\end{definition}

\begin{example}
\label{Def:2Vect}
A Kapranov-Voevodsky 2-vector space~\cite{KV94} is a $\C$-linear semi-simple additive category $\Vscr$
with finitely many isomorphism classes of simple objects; in particular, a 2-vector space is also an abelian category.  
There is a 2-category $\Tvs$ of 2-vector spaces, $\C$-linear functors and natural transformations. 
Given two 2-vector spaces $\Vscr_1$ and $\Vscr_2$ we can define their tensor product $\Vscr_1 \boxtimes \Vscr_2$~\cite[Definition~1.15]{bakalov2001lectures} to 
be the category with objects given by finite formal sums 
\[
\bigoplus_{i=1}^n\, V_{1i}\boxtimes V_{2i} \ , 
\]
with $V_{1i}\in \mathrm{Obj}(\Vscr_1)$ and $V_{2i}\in \mathrm{Obj}(\Vscr_2)$. The space of morphisms is given by
\[
\mathrm{Hom}_{\Vscr_1\boxtimes \Vscr_2}\Big(\, \mbox{$\bigoplus\limits_{i=1}^n\, V_{1i}\boxtimes V_{2i}\,,\,\bigoplus\limits_{j=1}^m \, V'_{1j}\boxtimes V'_{2j} $} \, \Big)= \bigoplus_{i=1}^n \ \bigoplus_{j=1}^m\, \mathrm{Hom}_{\Vscr_1}(V_{1i},V'_{1j})\otimes_\C \mathrm{Hom}_{\Vscr_2}(V_{2i},V'_{2j}) \ .
\] 
This tensor product coincides with the Deligne product of abelian categories. It furthermore satisfies the universal property with respect to bilinear functors that one would expect from a tensor product.
We can also take tensor products of $\C$-linear functors and of natural transformations.
 Then the 2-category $\Tvs$ with $\boxtimes$ is a symmetric monoidal bicategory with monoidal unit~$1$ given by the category of finite-dimensional vector spaces $\fvs$.  
\end{example}

\begin{definition}
A \underline{symmetric monoidal 2-functor} between two symmetric monoidal bicategories $\mathscr{B}$ and $\mathscr{B}'$ consists of a 2-functor $\Ha \colon \mathscr{B} \rightarrow \mathscr{B}' $ of the underlying bicategories together with the following data:
\begin{itemize}
\item[(a)] Equivalence natural 2-transformations\footnote{We fix again adjoint inverses and the adjunction data.} $\chi \colon \otimes ' \circ \big(\Ha(\, \cdot\, )\times \Ha(\, \cdot\, )\big)\Rightarrow \Ha \circ \otimes $ and $\iota \colon 1' \Rightarrow \Ha(1)$, where here we consider $1$ as a 2-functor from the bicategory with one object, one 1-morphism and one 2-morphism to $\mathscr{B}$.

\item[(b)] The three invertible modifications
\end{itemize}
\small
\[
\begin{tikzcd}
 &\Ha(a)\otimes' \big(\Ha(b)\otimes' \Ha(c)\big) \ar[r,Rightarrow,"\id\otimes'\chi\ "{name=ar1}] & \Ha(a)\otimes' \Ha(b\otimes c)\ar[rd,Rightarrow,"\chi"] & \\
\big(\Ha(a)\otimes' \Ha(b)\big)\otimes' \Ha(c) \ar[ur,Rightarrow,"\alpha'"] \ar[rd,swap,Rightarrow,"\chi\otimes'\id"]& & & \Ha\big(a\otimes (b\otimes c)\big) \\
 & \Ha(a\otimes b)\otimes' \Ha(c) \ar[r,swap,Rightarrow,"\chi"{name=ar2}] & \Ha\big((a\otimes b)\otimes c\big)\ar[ru,swap,Rightarrow,"\Ha(\alpha)"]
\tarrow[shorten <=10pt,shorten >=10pt, from=ar1, to=ar2, "\ \mit\Omega"]{}
\end{tikzcd}
\]
\normalsize

\begin{center}
$
\begin{tikzcd}
\Ha(1)\otimes' \Ha(a)\ar[r,Rightarrow,"\chi"{name=ar1}] & \Ha(1\otimes a) \ar[d,Rightarrow,"\Ha(\lambda)"]\\
1' \otimes' \Ha(a) \ar[u,Rightarrow,"\iota\otimes'\id"] \ar[r,swap,Rightarrow, "\lambda'"{name=ar2}] & \Ha(a)
\tarrow[shorten <=5pt,shorten >=5pt, from=ar1, to=ar2, "\ \mit\Gamma"]{}
\end{tikzcd}
$ \ \ \ \ \ and \ \ \ \ \ 
$
\begin{tikzcd}
\Ha(a)\otimes' 1'\ar[r,Rightarrow,"\id\otimes'\iota\ "{name=ar1}] & \Ha(a)\otimes' \Ha(1) \ar[d,Rightarrow,"\chi"]\\
\Ha(a) \ar[u,Rightarrow,"\rho'"]\ar[r,swap,Rightarrow,"\Ha(\rho)"{name=ar2}] & \Ha(a\otimes 1)
\tarrow[swap,shorten <=5pt,shorten >=5pt, from=ar2, to=ar1, "\ \mit\Delta "]{}
\end{tikzcd}
$
\end{center}

\begin{itemize}
\item[(c)] An invertible modification
\[
\begin{tikzcd}
& \Ha(b \otimes a) \ar[rd,Rightarrow,"\Ha(\beta)"] \tarrow[shorten <=5pt,shorten >=5pt, " \ \mit\Upsilon"]{dd} & \\
\Ha(b)\otimes' \Ha(a)\ar[ru,Rightarrow,"\chi"] \ar[rd,swap,Rightarrow,"\beta'"] & & \Ha(a\otimes b) \\
 &\Ha(a)\otimes' \Ha(b)\ar[ru,swap,Rightarrow,"\chi"]  &
\end{tikzcd}
\]
\end{itemize} 
These data are required to satisfy a long list of coherence conditions, see \cite{schommer2011classification} and references therein for details.
\end{definition}  

We finally come to the central concept in this paper. In contrast to the definition given in~\cite{schommer2011classification}, we require the appearing modifications to be invertible. However, the 2-morphisms corresponding to the underlying natural transformations are not invertible in our definition, so our definition is also weaker than the definition given in~\cite{schommer2011classification}.  

\begin{definition}
\label{Def:2sym transformation}
A \underline{natural symmetric monoidal 2-transformation} between symmetric monoidal 2-functors $\Ha ,\Ka \colon \mathscr{B} \rightarrow \mathscr{B}'$ consists of a natural 2-transformation $\theta \colon \Ha  \Rightarrow \Ka $ of the underlying 2-functors together with invertible modifications
\[
\begin{tikzcd}
 & \Ha (a\otimes b) \ar[rd,Rightarrow,"\theta"] & \\
\Ha (a)\otimes' \Ha (b) \ar[ru,Rightarrow,"\chi_\Ha"] \ar[d,swap,Rightarrow, "\theta\otimes'\id"] & & \Ka(a\otimes b) \\
 \Ka(a) \otimes' \Ha (b) \ar[rr,swap,Rightarrow,"\id\otimes'\theta"] & \tarrow[shorten <=5pt,shorten >=5pt,swap, " \ \mit\Pi"]{uu} & \Ka(a) \otimes' \Ka (b) \ar[u,swap,Rightarrow,"\chi_\Ka"]
\end{tikzcd}
\]
and
\[
\begin{tikzcd}
1' \ar[rr,Rightarrow,"\iota_\Ka"] \ar[dr,swap,Rightarrow,"\iota_\Ha"] & \tarrow[shorten <=2pt,shorten >=2pt, " \ M"]{d} & \Ka(1) \\
 & \Ha (1) \ar[ur,swap, Rightarrow, "\theta"] &  
\end{tikzcd}
\]
which satisfy the following coherence conditions expressed as equalities between 2-morphisms (omitting tensor product symbols on objects and 1-morphisms to streamline the notation):
\footnotesize
\begin{equation*}
\begin{tikzcd}
 &\Ka (a)\big(\Ka (b)\Ha (c)\big) \ar[r,"\theta"] & \Ka(a)\big(\Ka( b)\Ka( c)\big) \ar[r,"\alpha'"] & \big((\Ka( a)\Ka (b)\big)\Ka (c) \ar[dr,"\chi_\Ka"] & \\
\big(\Ka (a)\Ka (b)\big)\Ha (c) \ar[ur,"\alpha'"] \ar[urrr,bend right=10,"\theta"] \ar[rd,swap,"\chi_\Ka"] & \ar[u, Leftarrow,swap, "\;\; ",shorten <=2pt,shorten >=2pt] & & & \Ka (ab)\Ka (c)\ar[dd,"\chi_\Ka"]\\
 & \Ka (ab)\Ha (c) \ar[rrru, bend right=10,"\theta"] & \ar[ddd, Rightarrow, "\ \mit\Pi",shorten <=15pt,shorten >=15pt] \ar[uu, Leftarrow,swap, "\;\; ",shorten <=5pt,shorten >=30pt, pos=0.3] & & \\
\big(\Ka (a)\Ha( b)\big)\Ha (c) \ar[uu,"\theta"]  &\ar[l, Leftarrow,shorten <=10pt,shorten >=10pt, " \mit\Pi \otimes' \id"] & & & \Ka \big((ab) c\big) \ar[dd,"\Ka(\alpha)"] \\
 & \Ha (ab)\Ha (c) \ar[uu,swap,"\theta"]\ar[rd,"\chi_\Ha"] \ar[dd, Rightarrow, "\ \mit\Omega\!_\Ha",shorten <=10pt,shorten >=10pt] & &\ar[dd, Rightarrow, "\ \theta_{\alpha}",shorten <=10pt,shorten >=10pt] & \\
\big(\Ha (a) \Ha (b)\big)\Ha (c)\ar[uu,"\theta"] \ar[ur,"\chi_\Ha"] \ar[dr,swap,"\alpha'"] & & \Ha \big((ab) c\big)\ar[rd,"\Ha(\alpha)"]\ar[rruu,"\theta"] & & \Ka \big(a(bc)\big) \\
 &\Ha (a) \big(\Ha (b)\Ha (c)\big)\ar[r,swap,"\chi_\Ha"] & \Ha (c)\Ha (bc) \ar[r,swap,"\chi_\Ha"] & \Ha (a(bc))\ar[ru,swap,"\theta"] &
\end{tikzcd}
\end{equation*}
\normalsize
\LARGE
\[
\parallel
\]
\normalsize
\begin{equation}
\label{EQ:1 Definition s.m. transformation}
\begin{footnotesize}
\begin{tikzcd}
 &\Ka (a)\big(\Ka (b)\Ha (c)\big) \ar[r,"\theta"] & \Ka (a)\big(\Ka (b)\Ka (c)\big) \ar[ddddd, Rightarrow,shorten <=20pt,shorten >=20pt,"\ \id\otimes'\mit\Pi"] \ar[r,"\alpha'"] \ar[rddd, bend right=15,swap,"\chi_\Ka"] & \big(\Ka( a)\Ka (b)\big)\Ka (c) \ar[dr,"\chi_\Ka"] \ar[d,"\alpha'"] & \\
\big(\Ka (a)\Ka (b)\big)\Ha (c) \ar[ur,"\alpha'"] &  &\ar[r,Leftarrow, swap,shorten <=30pt,shorten >=5pt] &\Ka (a)\big(\Ka( b)\Ka (c)\big)  \ar[dd,"\chi_\Ka"] & \Ka (ab)\Ka (c)\ar[dd,"\chi_\Ka"]\ar[ddl, Rightarrow,shorten <=10pt,shorten >=10pt,"\ \mit\Omega\!_\Ka"]\\
 & \ar[ld, Leftarrow,shorten <=10pt,shorten >=10pt,swap, "\alpha'{}^\star"] & &  & \\
\big(\Ka (a)\Ha (b)\big)\Ha (c) \ar[uu,"\theta"] \ar[rd,"\alpha'"]  & & &\Ka (a)\Ka (bc) \ar[rdd,"\chi_\Ka"] \ar[ddd,Rightarrow,shorten <=10pt, shorten >=10pt, "\ \mit\Pi"] & \Ka \big((ab)c\big) \ar[dd,"\Ka(\alpha)"] \\
 & \Ka (a)\big(\Ha (b)\Ha (c)\big)\ar[uuuu,"\theta"] \ar[rdd, Rightarrow,shorten <=20pt,shorten >=20pt,"\Phi_{\otimes'}\ ",swap] \ar[rd,"\chi_\Ha"]& &  & \\
\big(\Ha (a)\Ha (b)\big)\Ha (c) \ar[uu,"\theta"] \ar[dr,swap,"\alpha'"] \ar[ru, Rightarrow,shorten <=10pt,shorten >=10pt, "\alpha'{}^\star",pos=0.6] & &\Ka (a)\Ha (bc)\ar[uur,"\theta"] & & \Ka \big(a(bc)\big) \\
 &\Ha (a)\big(\Ha (b)\Ha (c)\big)\ar[r,swap,"\chi_\Ha"]\ar[uu,"\theta"] & \Ha (a)\Ha (bc)\ar[u,swap,"\theta"] \ar[r,swap,"\chi_\Ha"] & \Ha \big(a(bc)\big)\ar[ru,swap,"\theta"] & 
\end{tikzcd}
\end{footnotesize}
\end{equation}

\begin{equation*}
\begin{tikzcd}
& \Ka (1)\Ha (a) \ar[rr,"\theta"] & \ar[d, Rightarrow, "\ \mit\Pi"] &\Ka (1)\Ka (a) \ar[rd,"\chi_\Ka"]  & \\
\Ha (1)\Ha (a)\ar[ur,"\theta"]\ar[rr, "\chi_\Ha"{name=ar1}]& &\Ha (1a)\ar[rr, "\theta"{name=ar2}] \ar[d,"\Ha(\lambda)"]  & & \Ka (1a)\ar[d,"\Ka(\lambda)"] \\
1'\Ha (a) \ar[u,"\iota_\Ha"]\ar[rr,swap, "\lambda'"{name=ar3}] \ar[rrd,swap,"\theta"]& & \Ha (a) \ar[rr, swap, "\theta"{name=ar4}] \ar[d, Rightarrow, "\ \lambda'_{\theta_a}"]  & & \Ka( a) \\
 & &1'\Ka (a)\ar[rru,swap, "\lambda'"] & &
\ar[from=ar1, to=ar3, Rightarrow, "\ \mit\Gamma\!_\Ha",shorten <=2pt,shorten >=5pt]
\ar[from=ar2, to=ar4, Rightarrow, "\ \theta",shorten <=2pt,shorten >=5pt]
\end{tikzcd}
\end{equation*}
\LARGE
\[
\parallel
\]
\normalsize
\begin{equation}
\label{EQ:2 Definition s.m. transformation}
\begin{tikzcd}
& \Ka (1)\Ha (a) \ar[rr,"\theta"] \ar[rddd, Rightarrow, shorten <=10pt,shorten >=10pt, "\ \Phi_{\otimes'}"] & &\Ka (1)\Ka(a) \ar[rd,"\chi_\Ka"]  & \\
\Ha (1)\Ha (a)\ar[ur,"\theta"]&\ar[l,Leftarrow, "{\footnotesize \ \ \  M^{-1}\otimes'\id}",shorten <=13pt, pos=0.8] & & & \Ka (1a)\ar[d,"\Ka(\lambda)"] \\
1'\Ha (a) \ar[u,"\iota_\Ha"]\ar[ruu, swap,bend right,"\iota_\Ka"] \ar[rrd,swap,"\theta"]& &  &\ar[uu, Leftarrow,shorten >=10pt,swap, "\ \mit\Gamma\!_\Ka",pos=0.2] & \Ka (a) \\
 & &1'\Ka (a)\ar[rru,swap,"\lambda'"]\ar[ruuu,"\iota_\Ka"] & &
\end{tikzcd}
\end{equation}

\begin{equation*}
\begin{tikzcd}
& \Ka (a)1' \ar[r,"\iota_\Ha"{name=ar1}] & \Ka (a)\Ha (1)\ar[dr,"\theta"] & \\
\Ka (a) \ar[ru,"\rho'"] \ar[r,Rightarrow,shorten <=5pt,shorten >=5pt, "\rho'_\theta"] & \Ha (a) 1'\ar[u,"\theta"] \ar[r,swap,"\iota_\Ha"{name=ar2}] & \Ha (a)\Ha (1)\ar[rd, Rightarrow,shorten <=5pt,shorten >=5pt,"\ \mit\Pi"]  \ar[d,"\chi_\Ha"] \ar[u,swap,"\theta"]  & \Ka (a)\Ka (1) \ar[d,"\chi_\Ka"] \\
\Ha (a)\ar[u,"\theta"]\ar[rd,swap,"\theta"] \ar[ru,"\rho'"] \ar[rr,"\Ha(\rho)"]& \ar[u,Leftarrow,shorten <=10pt,shorten >=2pt,swap, "\ {\mit\Delta}_\Ha^{-1}",pos=0.6]\ar[d,Rightarrow,shorten <=2pt,shorten >=2pt,"\ \theta_\rho"] &\Ha (a1)\ar[r,"\theta"]  & \Ka (a1) \\
& \Ka (a) \ar[rru,swap,"\Ka(\rho)"]&  & 
\ar[from=ar1, to=ar2, Rightarrow,shorten <=5pt,shorten >=5pt,"\ \iota_\theta"]
\end{tikzcd}
\end{equation*}
\LARGE
\[
\parallel
\]
\normalsize
\begin{equation}
\label{EQ:3 Definition s.m. transformation}
\begin{tikzcd}
& \Ka (a)1' \ar[rrd, swap, bend right=15,"\iota_\Ka"{name=ar1}] \ar[r,"\iota_\Ha"] & \Ka (a)\Ha (1)\ar[dr,"\theta"]  & \\
\Ka (a) \ar[ru,"\rho'"]\ar[rrrd,swap,"\Ka(\rho)"{name=ar2}] & \ar[dd,Rightarrow,shorten <=10pt,shorten >=10pt,"\ \id", pos=0.6] & \ar[u,Leftarrow, swap,shorten <=5pt,shorten >=2pt,"\ \id\otimes'M^{-1}"] & \Ka (a)\Ka (1) \ar[d,"\chi_\Ka"] \\
\Ha (a)\ar[u,"\theta"]\ar[rd,swap,"\theta"]&  &  & \Ka (a1) \\
& \Ka (a) \ar[rru,swap,"\Ka(\rho)"]&  & 
\ar[from=ar1, to=ar2, Rightarrow,shorten <=2pt,shorten >=10pt," \!{\mit\Delta}_\Ka^{-1}", pos=0.2]
\end{tikzcd}
\end{equation}
and
\begin{equation}
\label{EQ:4 Definition s.m. transformation}
\begin{tikzcd}
 \Ha (b) \Ha (a) \ar[rr,"\chi_\Ha"] & \ar[d, Leftarrow, swap,shorten <=2pt,shorten >=2pt, "{\mit\Upsilon}_\Ha^{-1}\ " ] & \Ha (ba)\ar[d,"\theta"]  \\
\Ha (b)\Ha (a) \ar[d,swap,"\theta\circ'\beta'"]\ar[u,"\id"]\ar[r,"\chi_\Ha\circ'\beta'"] & \Ha (ab)\ar[ur,"\Ha(\beta)"]\ar[dr,"\theta"] \ar[r, Rightarrow,shorten <=2pt,shorten >=2pt,"\theta_\beta"]  & \Ka (ba) \\
\Ka (a)\Ka (b) \ar[rr,swap,"\chi_\Ka"]& \ar[u, Rightarrow, shorten <=2pt,shorten >=2pt, "\mit\Pi\ "] & \Ka (ab)\ar[u,swap,"\Ka(\beta)"] 
\end{tikzcd}
 \ \ = \ \ 
\begin{tikzcd}
 \Ha (a)\Ha (b) \ar[rr,"\chi_\Ha\circ'\beta'"]\ar[dr,"\theta\circ'\beta'"] & \ar[d, Leftarrow,shorten <=3pt,shorten >=3pt, swap,"\mit\Pi\ "] & \Ha (ba)\ar[d,"\theta"]  \\
\Ha (a)\Ha (b) \ar[d,swap,"\theta"]\ar[u,"\id"] \ar[r, Rightarrow,shorten <=2pt,shorten >=2pt, "\beta'_{\theta\otimes' \theta} "] & \Ka (b)\Ka (a) \ar[r,"\chi_\Ka"] & \Ka (ba)   \\
\Ka (a)\Ka (b) \ar[rr,swap,"\chi_\Ka"]\ar[ur,"\beta'"]& \ar[u, Rightarrow,shorten <=2pt,shorten >=2pt, "{\mit\Upsilon}_\Ka^{-1}\ "] & \Ka (ab)\ar[u,swap,"\Ka(\beta)"] 
\end{tikzcd}
\end{equation}
In \eqref{EQ:1 Definition s.m. transformation}, the unlabelled 2-morphisms in the first diagram are constructed from naturality of $\alpha^\star$ and 2-functoriality of $\otimes$, while the unlabelled 2-morphism in the second diagram is induced by the equivalence $\alpha^\star \circ\alpha\Rightarrow \id$. 
\end{definition}

\begin{definition}
A \underline{symmetric monoidal modification} between two symmetric
monoidal 2-trans{-}formations $\theta, \theta'\colon
\mathcal{H}\Rightarrow \mathcal{K}$ consists of a modification $m
\colon \theta \Rrightarrow \theta'$ of the underlying natural 2-transformations satisfying
\begin{equation*}
\begin{footnotesize}
\begin{tikzcd}
\mathcal{H}(a)\otimes' \mathcal{H}(b)\ar[dd, swap, "{\theta \otimes' \theta}"] \ar[rr,"{\chi_\mathcal{H}}"] & & \mathcal{H}(a\otimes b)\ar[dd, "{\theta}"] \ar[dd, bend left, "{\theta'}",out=90, in=90] & \\ 
\; \ar[rr, Rightarrow, "{\mit\Pi}",shorten >=0.5cm,shorten <=0.5cm] & \; & \; \ar[r, Rightarrow, "m",shorten >=0.3cm,shorten <=0.2cm, pos=0.45] & \; \\
\mathcal{K}(a)\otimes' \mathcal{K}(b) \ar[rr,swap,"{\chi_\mathcal{K}}"] & & \mathcal{K}(a\otimes b) &
\end{tikzcd}
 \ \ = \ \ 
\begin{tikzcd}
& \mathcal{H}(a)\otimes' \mathcal{H}(b)\ar[dd, "{\theta' \otimes' \theta'}"] \ar[dd, bend right ,swap,"{\theta \otimes' \theta}", out=-90, in=-90] \ar[rr,"{\chi_\mathcal{H}}"] & & \mathcal{H}(a\otimes b)\ar[dd, "{\theta'}"]   \\ 
\; \ar[r, Rightarrow, "m\otimes m",shorten <=0.4cm, pos=0.7] & \; \ar[rr, Rightarrow, "{\mit\Pi'}",shorten >=0.5cm,shorten <=0.8cm] & \; & \;  \\
& \mathcal{K}(a)\otimes' \mathcal{K}(b) \ar[rr,swap,"{\chi_\mathcal{K}}"] & & \mathcal{K}(a\otimes b) 
\end{tikzcd}
\end{footnotesize}
\end{equation*}
and
\begin{equation*}
\begin{tikzcd}
 & & \mathcal{H}(1) \ar[dd, "\theta"] \ar[dd, bend left, "{\theta'}",out=90, in=90] & \\
 1' \ar[rru, "{\iota_\mathcal{H}}"] \ar[rrd, swap,"{\iota_\mathcal{K}}"] \ar[rr, Rightarrow, "M",shorten >=0.2cm,shorten <=0.8cm, pos=0.65]& &\; \ar[r, Rightarrow, "m",shorten >=0.3cm,shorten <=0.2cm, pos=0.45] &\; \\
  & & \mathcal{K}(1) &
\end{tikzcd}
 \ \ = \ \ 
\begin{tikzcd}
 & & \mathcal{H}(1) \ar[dd, "\theta'"]  & \\
 1' \ar[rru, "{\iota_\mathcal{H}}"] \ar[rrd, swap,"{\iota_\mathcal{K}}"] \ar[rr, Rightarrow, "M'",shorten >=0.2cm,shorten <=0.8cm, pos=0.65]& &\;  &\; \\
  & & \mathcal{K}(1) &
\end{tikzcd}
\end{equation*}
\end{definition}


\end{document}